\definecolor{ForestGreen}{rgb}{0.1333,0.5451,0.1333}
\newcommand{\showccc}[0]{0}
\newcommand{\ccc}[2][nothing]{%
	\ifthenelse{\showccc=0}{}{
		\ensuremath{^{\Lsh\Rsh}}\marginpar{\raggedright\tiny\textsf{%
				\ifthenelse{\equal{#1}{nothing}}{}{\textbf{#1}\\}#2}}}}
\newcounter{hours}\newcounter{minutes}
\newcommand{\hhmm}{%
	\setcounter{hours}{\time/60}%
	\setcounter{minutes}{\time-\value{hours}*60}%
	\ifthenelse{\value{hours}<10}{0}{}\thehours:%
	\ifthenelse{\value{minutes}<10}{0}{}\theminutes}
\newtheorem{proposition}{Proposition}
\newtheorem{corollary}{Corollary}
\newtheorem{lemma}{Lemma}
\newtheorem{fact}{Fact}
\newcommand{\defeq}{:=}
\newcommand{\norm}[1]{\left\lVert#1\right\rVert}
\newcommand{\inprod}[2]{\left\langle#1, #2\right\rangle}
\newcommand{\lam}{\lambda}
\newcommand{\R}{\mathbb{R}}
\newcommand{\N}{\mathbb{N}}
\newcommand{\diag}[1]{\textbf{\textup{diag}}\left(#1\right)}
\newcommand{\half}{\frac{1}{2}}
\newcommand{\E}{\mathbb{E}}
\newcommand{\Var}{\textup{Var}}
\newcommand{\Nor}{\mathcal{N}}
\newcommand{\ma}{\mathbf{A}}
\newcommand{\id}{\mathbf{I}}
\newcommand{\tO}{\widetilde{O}}
\newcommand{\Par}[1]{\left(#1\right)}
\newcommand{\Brack}[1]{\left[#1\right]}
\newcommand{\Brace}[1]{\left\{#1\right\}}
\newcommand{\msig}{\boldsymbol{\Sigma}}
\newcommand{\tx}{\tilde{x}}
\newcommand{\pis}{\pi^*}
\newcommand{\prop}{\mathcal{P}}
\newcommand{\tran}{\mathcal{T}}
\newcommand{\dir}{\mathcal{E}}
\newcommand{\fhq}{f_{\textup{hq}}}
\newcommand{\ham}{\mathcal{H}}
\newcommand{\fha}{f_{\textup{hard}}}
\newcommand{\oha}{\Omega_{\textup{hard}}}
\newcommand{\floor}[1]{\left\lfloor #1 \right\rfloor}
\newcommand{\ao}{\alpha_1}
\newcommand{\bo}{\beta_1}
\newcommand{\go}{g_1}
\newcommand{\xo}{x_1}
\newcommand{\ano}{\alpha_{-1}}
\newcommand{\bno}{\beta_{-1}}
\newcommand{\xno}{x_{-1}}
\newcommand{\gno}{g_{-1}}
\newcommand{\fhqc}{f_{\textup{hqc}}}
\newcommand{\anod}{\alpha_{-1d}}
\newcommand{\bnod}{\beta_{-1d}}
\newcommand{\xnod}{x_{-1d}}
\newcommand{\gnod}{g_{-1d}}
\newcommand{\ad}{\alpha_{d}}
\newcommand{\bd}{\beta_{d}}
\newcommand{\xd}{x_{d}}
\newcommand{\gd}{g_{d}}
\newcommand{\tpi}{\tilde{\pi}}
\newcommand{\abs}[1]{\left|#1\right|}
\definecolor{burntorange}{rgb}{0.8, 0.33, 0.0}
\definecolor{ruoqi}{rgb}{0.2, 0.3, 0.7}
\begin{document}

	\begin{titlepage}
		\def\thepage{}
		\thispagestyle{empty}
		
		\title{Lower Bounds on Metropolized Sampling Methods \\
		for Well-Conditioned Distributions} 
		
		\date{}
		\author{
			Yin Tat Lee\thanks{University of Washington and Microsoft Research, {\tt yintat@uw.edu}}
			\and
			Ruoqi Shen\thanks{University of Washington, {\tt shenr3@cs.washington.edu}}
			\and
			Kevin Tian\thanks{Stanford University, {\tt kjtian@stanford.edu}}
		}

		\maketitle

\abstract{
We give lower bounds on the performance of two of the most popular sampling methods in practice, the Metropolis-adjusted Langevin algorithm (MALA) and multi-step Hamiltonian Monte Carlo (HMC) with a leapfrog integrator, when applied to well-conditioned distributions. Our main result is a nearly-tight lower bound of $\widetilde{\Omega}(\kappa d)$ on the mixing time of MALA from an exponentially warm start, matching a line of algorithmic results \cite{DwivediCW018, ChenDWY19, LeeST20a} up to logarithmic factors and answering an open question of \cite{ChewiLACGR20}. We also show that a polynomial dependence on dimension is necessary for the relaxation time of HMC under any number of leapfrog steps, and bound the gains achievable by changing the step count. Our HMC analysis draws upon a novel connection between leapfrog integration and Chebyshev polynomials, which may be of independent interest.
}
  		
	\end{titlepage}
\pagenumbering{gobble}
\newpage
\setcounter{tocdepth}{2}
{
	\hypersetup{linkcolor=black}
	\tableofcontents
}
\newpage
\pagenumbering{arabic}

\section{Introduction}
\label{sec:intro}

Sampling from a continuous distribution in high dimensions is a fundamental problem in algorithm design. As sampling serves as a key subroutine in a variety of tasks in machine learning \cite{AndrieuFDJ03}, statistical methods \cite{RobertC99}, and scientific computing \cite{Liu01}, it is an important undertaking to understand the complexity of sampling from families of distributions arising in applications. 

The more restricted problem of sampling from a particular family of distributions, which we call ``well-conditioned distributions,'' has garnered a substantial amount of recent research effort from the algorithmic learning and statistics communities. This specific family is interesting for a number of reasons. First of all, it is \emph{practically relevant}: Bayesian methods have found increasing use in machine learning applications \cite{Barber12}, and many distributions arising from these methods are well-conditioned, such as multivariate Gaussians, mixture models with small separation, and densities arising from Bayesian logistic regression with a Gaussian prior \cite{DwivediCW018}. Moreover, for several of the most widely-used sampler implementations in popular packages \cite{Abadi16, CarpenterGHLG17}, such as the Metropolis-adjusted Langevin algorithm (MALA) and Hamiltonian Monte Carlo (HMC), the target density having a small condition number is in some sense a \emph{minimal assumption for known provable guarantees} (discussed more thoroughly in Section~\ref{ssec:prevwork}, when we survey prior work).

Finally, the highly-documented success of first-order (gradient-based) methods in optimization \cite{Beck17}, which are particularly favorable in the well-conditioned setting, has driven a recent interest in connections between optimization and sampling. Exploring this connection has been highly fruitful: since seminal work of \cite{JordanKO98}, which demonstrated that the continuous-time Langevin dynamics which MALA and HMC discretize has an interpretation as gradient descent on density space, a flurry of work including \cite{Dalalyan17, ChengCBJ18, DwivediCW018, DalalyanR18, DurmusM19, DurmusMM19, ChenDWY19, ChenV19, ShenL19, MouMWBJ19, LeeST20a, LeeST20b, ChewiLACGR20} has obtained improved upper bounds for the mixing of various discretizations of the Langevin dynamics for sampling from well-conditioned densities. Many of these works have drawn inspiration from techniques from first-order optimization.

On the other hand, demonstrating \emph{lower bounds} on the complexity of sampling tasks (in the well-conditioned regime or otherwise) has proven to be a remarkably challenging problem. To our knowledge, there are very few unconditional lower bounds for sampling tasks (i.e.\ the complexity of sampling from a family of distributions under some query model). This is in stark contrast to the theory of optimization, where there are matching upper and lower bounds for a variety of fundamental tasks and query models, such as optimization of a convex function under first-order oracle access \cite{Nesterov03}. This gap in the development of the algorithmic theory of sampling is the primary motivation for our work, wherein we aim to answer the following more restricted question.
\begin{gather*}\textit{What is the complexity of the popular sampling methods, MALA and HMC,} \\
\textit{for sampling well-conditioned distributions?}\end{gather*}
The problem we study is still less general than \emph{unconditional query lower bounds} for sampling, in that our lower bounds are \emph{algorithm-specific}; we characterize the performance of particular algorithms for sampling a distribution family. However, we believe asking this question, and developing an understanding of it, is an important first step towards a theory of complexity for sampling. On the one hand, lower bounds for specific algorithms highlight weaknesses in their performance, pinpointing their shortcomings in attaining faster rates. This is useful from an algorithm design perspective, as it clarifies what the key technical barriers are to overcome. On the other hand, the hard instances which arise in designing lower bounds may have important structural properties which pave the way to stronger and more general (i.e.\ \emph{algorithm-agnostic}) lower bounds. 

For these reasons, in this work we focus on characterizing the complexity of the MALA and HMC algorithms (see Sections~\ref{ssec:mala} and~\ref{ssec:hmc} for algorithm definitions), which are often the samplers of choice in practice, by lower bounding their performance when they are used to sample from densities proportional to $\exp(-f(x))$, where $f: \R^d \to \R$ has a finite condition number. In particular, $f$ is said to have a condition number of $\kappa < \infty$ if it is $L$-smooth and $\mu$-strongly convex (has second derivatives in all directions in the range $[\mu, L]$), where $\kappa = \frac L \mu$. We will also overload this terminology and say the density itself has condition number $\kappa$. We call such a density (with finite $\kappa$) ``well-conditioned.'' Finally, we explicitly assume throughout that $\kappa = O(d^4)$, as otherwise in light of our lower bounds the general-purpose logconcave samplers of \cite{LovaszV07, JiaLLV20, Chen21} are preferable.

\subsection{Our results}\label{ssec:ourresults}

Our primary contribution is a nearly-tight characterization of the performance of MALA for sampling from two high-dimensional distribution families without a warm start assumption: well-conditioned Gaussians, and the more general family of well-conditioned densities. In Sections~\ref{sec:gaussian-mala} and~\ref{sec:general-mala}, we prove the following two lower bounds on MALA's complexity, which is a one-parameter algorithm (for a given target distribution) depending only on step size. We also note that we fix a scale $[1, \kappa]$ on the eigenvalues of the function Hessian up front, because otherwise the non-scale-invariance of the step size can be exploited to give much more trivial lower bounds (cf.\ Appendix~\ref{app:scale}).

\begin{restatable}{theorem}{restatemalagaussianlb}\label{thm:malagaussianlb}
For every step size, there is a target Gaussian on $\R^d$ whose negative log-density always has Hessian eigenvalues in $[1, \kappa]$, such that the relaxation time of MALA is $\Omega(\frac{\kappa \sqrt d}{\sqrt {\log d}})$.
\end{restatable}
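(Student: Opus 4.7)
The plan is to exhibit, for every step size $h$, a Gaussian target with Hessian eigenvalues in $[1,\kappa]$ whose MALA relaxation time is large. A natural candidate is $\pi = \mathcal{N}(\mathbf{0},\Lambda^{-1})$ with $\Lambda = \mathrm{diag}(1,\kappa,\kappa,\dots,\kappa)$: the first coordinate is the slow-mixing direction (eigenvalue $1$), while the remaining $d-1$ coordinates at eigenvalue $\kappa$ generate the rejection pressure that blocks taking aggressive $h$. I will lower bound the relaxation time via the Dirichlet-form characterization of the spectral gap, using the test function $f(x)=x_1$, for which $\mathrm{Var}_\pi(f) = 1$.

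The central analytic tool is the clean Gaussian identity $\log \alpha(x,y) = -\tfrac{h}{4}(y-x)^\top \Lambda^2(y+x)$, which follows from direct expansion and crucially separates across eigencoordinates. Using this identity together with $\min(1,\alpha) \le \sqrt{\alpha}$, one integrates $\sqrt{\alpha(x,y)}$ under the stationary-proposal measure $\pi \otimes q$ in closed form, obtaining the exact per-coordinate affinity $(1+(h\lambda_i)^3/8)^{-1/2}$. Multiplying these factors upper-bounds the average acceptance $\bar{A}$; pairing this bound with the proposal variance $\mathbb{E}[(y_1 - x_1)^2] = O(h)$ in the slow direction (via Cauchy--Schwarz and the coordinate product structure) yields $\mathcal{E}(f,f) \lesssim h \cdot \bar{A}$, hence $\mathrm{gap} \lesssim h \, \bar{A}$. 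Taking reciprocals produces a relaxation-time lower bound as a function of $h$ alone, in which the two factors fight each other across step sizes: small $h$ makes $1/h$ large, while large $h$ makes $\bar{A}$ exponentially small in $d(h\kappa)^3$.

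Optimizing with the crude second-moment Gaussian approximation to $\log \alpha$ already yields an $\Omega(\kappa d^{1/3})$ lower bound. The improvement to $\kappa \sqrt{d}/\sqrt{\log d}$ will require a sharper analysis: I would treat $\log \alpha$ as a quadratic form in $2d$ independent standard Gaussians and invoke Hanson--Wright-type tail control to show that the average acceptance begins to decay appreciably once $h$ crosses the threshold $\asymp 1/(\kappa \sqrt{d})$, rather than the larger $d^{-1/3}/\kappa$ scale suggested by the variance alone. The main obstacle I foresee is exactly this tail refinement: tracking the two qualitatively different sources of fluctuation in $\log \alpha$ (the Gaussian cross terms $\eta_i \xi_i$ and the $\chi^2$-type terms $\xi_i^2$), and integrating the resulting rejection probability against the stationary weights, with the $\sqrt{\log d}$ gap emerging from the use of sub-exponential rather than sub-Gaussian concentration in the final balance of $h$ against $\bar{A}$.
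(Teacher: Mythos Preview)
Your affinity computation is correct and rather elegant: the per-coordinate integral $\mathbb{E}_{\pi_i\otimes q_i}[\sqrt{\alpha_i}]$ does equal $(1+(h\lambda_i)^3/8)^{-1/2}$ exactly, and the product structure then gives the clean bound $\mathrm{gap}\lesssim h\cdot\bigl(1+(h\kappa)^3/8\bigr)^{-(d-1)/2}$. As you note, this yields $\Omega(\kappa d^{1/3})$.

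The gap is in the proposed refinement. Your plan to reach $\sqrt{d}$ rests on the claim that the average acceptance $\bar A=\mathbb{E}_{\pi\otimes q}[\min(1,\alpha)]$ ``begins to decay appreciably once $h$ crosses the threshold $\asymp 1/(\kappa\sqrt{d})$.'' This is false, and no Hanson--Wright sharpening can make it true. Under $\pi\otimes q$ one has, per $\kappa$-coordinate and with $u=h\kappa$,
\[
\mathbb{E}[\log\alpha_i]=-\tfrac{u^3}{8},\qquad \mathrm{Var}(\log\alpha_i)=\Theta(u^3),
\]
so $\log\alpha$ has mean and variance both of order $d\,u^3$. At $h=c/(\kappa\sqrt{d})$ these are $O(d^{-1/2})\to 0$, hence $\bar A\to 1$; and $\bar A$ remains $\Theta(1)$ all the way up to $h\kappa\sim d^{-1/3}$ (this is precisely the Roberts--Rosenthal optimal-scaling threshold for MALA). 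Consequently any bound of the form $\mathrm{gap}\lesssim h\cdot\bar A$, or indeed any bound on $\mathcal{E}(x_1,x_1)$ that factors through the \emph{stationary} acceptance behaviour, is stuck at $\kappa d^{1/3}$: at $h\sim 1/(\kappa d^{1/3})$ it reads $\mathrm{gap}\lesssim 1/(\kappa d^{1/3})$, which is too weak.

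The paper's argument is structurally different and supplies the missing idea. The test function $g(x)=x_1$ is used only to get $\mathrm{gap}=O(h)$ (your first step, dropping the filter entirely). The $\sqrt{d}$ dependence comes instead from a \emph{conductance} argument on a carefully chosen exponentially small set
\[
\Omega=\Bigl\{\,\norm{x_{-1}}_2^2\le \tfrac{2d}{3\kappa},\ |x_1|\le 5\sqrt{\log d}\,\Bigr\},
\]
where $\norm{x_{-1}}_2^2$ is atypically small. The point is that at stationarity the two leading $h^2$-order contributions to $\mathbb{E}[\log\alpha]$ cancel (this is why you see $u^3$ above), but from $\Omega$ they do \emph{not}: one computes $\mathbb{E}_g[\log\alpha\,|\,x\in\Omega]=-\Theta(h^2\kappa^2 d)$. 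Hence from $\Omega$ the acceptance is $\exp(-\Omega(h^2\kappa^2 d))$, which is polynomially small as soon as $h\gtrsim\sqrt{\log d}/(\kappa\sqrt{d})$, and Cheeger's inequality finishes. The final bound is a dichotomy: either $h$ is below this threshold and $\mathrm{gap}=O(h)$ gives the claim, or $h$ is above it and the bad-set conductance gives $\mathrm{gap}\le d^{-5}$.
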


\begin{restatable}{theorem}{restatemalalb}\label{thm:malalb}
For every step size, there is a target density on $\R^d$ whose negative log-density always has Hessian eigenvalues in $[1, \kappa]$, such that the relaxation time of MALA is $\Omega(\frac{\kappa d}{\log d})$.
\end{restatable}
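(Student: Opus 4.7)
I would lower bound the relaxation time via the Poincar\'e / variational characterization
\[
t_{\mathrm{rel}} \;\geq\; \frac{\Var_\pi(\phi)}{\mathcal{E}_K(\phi,\phi)}, \qquad \mathcal{E}_K(\phi,\phi) := \tfrac12\, \E_{x\sim\pi}\E_{y\sim K(x,\cdot)}\!\Brack{(\phi(y)-\phi(x))^2},
\]
for a coordinate-aligned test function ($\phi(x) = x_1$ aligned with a direction of Hessian eigenvalue $1$), which has $\Var_\pi(\phi) = \Theta(1)$. The task then reduces to upper bounding $\mathcal{E}_K(\phi,\phi)$ by a product of (i) the typical squared displacement of one step of the proposal and (ii) the average MALA acceptance probability $p := \E_{x\sim\pi}\E_\xi[\min(1,A(x,y))]$, and optimizing these against the step size.

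\paragraph{Step 1: construct a hard separable density.} For each given step size $h$, I would take $f(x)=\sum_{i=1}^d g(x_i)$ with $g:\R\to\R$ satisfying $g''\in[1,\kappa]$, where the crucial feature is that $g$ is \emph{non-Gaussian}: a piecewise-quadratic profile with a jump in $g''$ (e.g.\ slope $1$ on one side of a breakpoint, slope $\kappa$ on the other), smoothed on the scale $\sqrt h$. The point is that near the breakpoint, the per-coordinate log acceptance ratio $\log A_i(x_i,y_i)$ has mean and variance $\Theta(h\kappa)$ under typical $x_i\sim e^{-g}/Z$ and $\xi_i\sim\Nor(0,1)$ -- i.e.\ it is linear rather than quadratic in $h\kappa$. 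This is what fails in the Gaussian setting of Theorem~\ref{thm:malagaussianlb}, where the smooth quadratic forces $\log A_i=O((h\kappa)^2)$ per coordinate and hence only buys a $\sqrt d$ factor via concentration.

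\paragraph{Step 2: concentrate $\log A$ across $d$ coordinates.} Because the proposal is a product of independent Gaussians and $f$ is separable, $\log A(x,y)=\sum_{i=1}^d \log A_i(x_i,y_i)$ is a sum of independent (under $\pi\times\Nor$) bounded summands. A Bernstein-style estimate then yields
\[
\log A(x,y) \;\leq\; -\Omega(h\kappa d) + O\Par{\sqrt{h\kappa d\,\log d}}
\]
with probability $1-1/\mathrm{poly}(d)$. In particular, once $h\kappa d \gtrsim \log d$, the joint acceptance satisfies $p \leq 1/\mathrm{poly}(d)$. The $\log d$ loss in the final bound is exactly the slack inside this Bernstein inequality.

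\paragraph{Step 3: Dirichlet form and optimization over $h$.} Gaussian tail bounds on the proposal give $\E_\xi[(\phi(y)-\phi(x))^2] \lesssim h + h^2 (\nabla_1 f(x))^2$, and integration by parts gives $\E_\pi[(\nabla_1 f)^2] \lesssim \kappa$, so overall
\[
\mathcal{E}_K(\phi,\phi) \;\lesssim\; p \cdot (h + h^2\kappa) \;\lesssim\; p\cdot h,
\]
for $h\lesssim 1/\kappa$. Combining with $\Var_\pi(\phi)=\Theta(1)$ yields $t_{\mathrm{rel}} \gtrsim 1/(p\,h)$. I then split on $h$: if $h \leq c/(\kappa d)$, the trivial $p\leq 1$ gives $t_{\mathrm{rel}} \gtrsim \kappa d/c$; if instead $h \geq C\log d/(\kappa d)$, Step 2 gives $p\leq 1/\mathrm{poly}(d)$ and $t_{\mathrm{rel}} \gtrsim \mathrm{poly}(d)/h \gtrsim \kappa d/\log d$. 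Either way we get the claimed $\Omega(\kappa d/\log d)$.

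\paragraph{Main obstacle.} The crux is Step~1: exhibiting a log-concave univariate density, honestly obeying $g''\in[1,\kappa]$, whose MALA acceptance log-ratio has per-coordinate mean and variance of order $h\kappa$ rather than $(h\kappa)^2$. This requires a careful Taylor expansion of $\log A_i$ around the breakpoint of $g''$, together with a quantitative lower bound (of $\Omega(1)$) on the $\pi_i$-probability of $x_i$ lying within $O(\sqrt h)$ of the breakpoint so that the non-smoothness is actually felt by MALA. Everything else -- Bernstein concentration in Step~2 and the Dirichlet form estimate in Step~3 -- is essentially routine once this sharp per-coordinate bound is in hand.
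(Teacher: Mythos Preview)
Your overall architecture---bound the Dirichlet form at the test function $\phi(x)=x_1$ by (expected squared displacement) $\times$ (average acceptance $p$), then show $p$ is tiny once $h\kappa d \gtrsim \log d$---has a genuine gap at exactly the point you flag as the ``main obstacle''. The claim that $\pi_i$ places $\Omega(1)$ mass within $O(\sqrt h)$ of a single breakpoint is false in the regime you need. Since $g''\le\kappa$, the one-dimensional stationary density $\pi_i$ has standard deviation $\ge 1/\sqrt\kappa$, whereas your breakpoint region has width $O(\sqrt h)$; in the interesting range $h=o(1/\kappa)$ this gives $\pi_i(|x_i|\lesssim\sqrt h)=O(\sqrt{h\kappa})=o(1)$. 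Consequently the per-coordinate mean of $\log A_i$ averaged over $\pi_i\times\Nor$ is not $-\Omega(h\kappa)$ but only $-O(\sqrt{h\kappa}\cdot h\kappa)$, and your Bernstein step then forces $h^{3/2}\kappa d\gtrsim\log d$ rather than $h\kappa d\gtrsim\log d$. This yields at best $t_{\mathrm{rel}}=\Omega((\kappa d/\log d)^{2/3})$, not the claimed bound.

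The paper resolves this differently. It takes $g''(c)=\tfrac{2\kappa}{3}+\tfrac{\kappa}{3}\cos(c/\sqrt h)$, so $g''$ oscillates on scale $\sqrt h$ and \emph{every} $x_i$ is near an extremum; one computes $\E_{\xi_i}[\log A_i\mid x_i]\approx-\Theta(h\kappa)\cos(x_i/\sqrt h)$. But this quantity averages to essentially zero under $\pi_i$ (the positive and negative phases cancel), so the unconditional $p$ is \emph{not} polynomially small even with the better construction---your Step~2 cannot be salvaged this way. Instead, the paper conditions on the bad set $\oha=\{x:\cos(x_i/\sqrt h)\gtrsim 1\text{ for all }i\ge 2\}$, which has $\pi$-measure $\ge\exp(-d)$ (Lemma~\ref{lem:oha_propbd}), shows that from any $x\in\oha$ the acceptance is $\exp(-\Omega(h\kappa d))$ with high probability over $\xi$ (Lemmas~\ref{lem:Sexpect}--\ref{lem:acchardbd}), and invokes Cheeger rather than an average-$p$ Dirichlet bound. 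The many-oscillation construction is essential here: with a single breakpoint the analogous bad set would have $\pi$-measure $O(\sqrt{h\kappa})^{d-1}$, far too small for a useful conductance witness. For large $h$ beyond the range where the cosine estimates apply, the paper falls back on the Gaussian instance of Theorem~\ref{thm:malagaussianlb}.
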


To give more context on Theorems~\ref{thm:malagaussianlb} and~\ref{thm:malalb}, MALA is an example of a Metropolis-adjusted Markov chain, which in every step performs updates which preserve the stationary distribution. Indeed, it can be derived by applying a Metropolis filter on the standard forward Euler discretization of the Langevin dynamics, a stochastic differential equation with stationary density $\propto \exp (-f(x))$:
\[dx_t = -\nabla f(x_t) dt+ \sqrt{2} dW_t,\]
where $W_t$ is Brownian motion. Such \emph{Metropolis-adjusted} methods typically provide total variation distance guarantees, and attain logarithmic dependence on the target accuracy.\footnote{We note this is in contrast with a different family of \emph{unadjusted} discretizations, which are analyzed by coupling them with the stochastic differential equation they simulate (see e.g.\ \cite{Dalalyan17, ChengCBJ18} for examples), at the expense of a polynomial dependence on the target accuracy; we focus on Metropolis-adjusted discretizations in this work. } The mixing of such chains is governed by their relaxation time, also known as the inverse \emph{spectral gap} (the difference between $1$ and the second-largest eigenvalue of the Markov chain transition operator). 

However, in the continuous-space setting, it is not always clear how to relate the relaxation time to the \emph{mixing time}, which we define as the number of iterations it takes to reach total variation distance $\frac 1 e$ from the stationary distribution from a given warm start (we choose $\frac 1 e$ for consistency with the literature, but indeed any constant bounded away from $1$ will do). There is an extensive line of research on when it is possible to relate these two quantities (see e.g.\ \cite{BakryGL14}), but typically these arguments are tailored to properties of the specific Markov chain, causing relaxation time lower bounds to not be entirely satisfactory in some cases. We thus complement Theorems~\ref{thm:malagaussianlb} and~\ref{thm:malalb} with a \emph{mixing time} lower bound from an exponentially warm start, as follows.

\begin{restatable}{theorem}{restatemalalbmix}\label{thm:malalbmix}
For every step size, there is a target density on $\R^d$ whose negative log-density always has Hessian eigenvalues in $[1, \kappa]$, such that MALA initialized at an $\exp(d)$-warm start requires $\Omega(\frac{\kappa d}{\log^2 d})$ iterations to reach $e^{-1}$ total variation distance to the stationary distribution.
\end{restatable}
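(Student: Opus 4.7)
The plan is to derive the mixing-time lower bound via a bottleneck/conductance argument, reusing (or mildly refining) the hard density $\pi \propto \exp(-f)$ already constructed for Theorem~\ref{thm:malalb}. The warm start will be the restriction of the stationary measure to the bottleneck set, renormalized: if $S \subset \R^d$ is a set of small MALA-conductance with $\pi(S)$ not too small, then $\mu_0 := \pi|_S/\pi(S)$ has $L^\infty$-density ratio $1/\pi(S)$ relative to $\pi$, so any lower bound $\pi(S) \ge e^{-d}$ immediately certifies that $\mu_0$ is an $\exp(d)$-warm start.

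The steps in order. First, I would revisit the hard instance underlying Theorem~\ref{thm:malalb} and extract from its proof a bottleneck set $S \subset \R^d$ with (i) $\pi(S) \in [e^{-d}, 1/2]$ and (ii) ergodic flow $Q(S, S^c) \le \pi(S) \cdot \phi$ under the MALA transition kernel $P$, where $\phi = O(\log^2 d/(\kappa d))$. Second, I would invoke the standard escape-from-a-bottleneck inequality (in the spirit of Lov\'asz--Simonovits / Jerrum--Sinclair): starting from $\mu_0 = \pi|_S/\pi(S)$, the mass of $S^c$ satisfies $\mu_{t+1}(S^c) \le \mu_t(S^c) + Q(S,S^c)/\pi(S) = \mu_t(S^c) + \phi$, so by induction $\mu_t(S^c) \le t\phi$. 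Third, I would convert this into the claimed TV lower bound via
\[\norm{\mu_t - \pi}_{\textup{TV}} \;\ge\; \pi(S^c) - \mu_t(S^c) \;\ge\; \tfrac{1}{2} - t\phi,\]
and observe that enforcing the right-hand side to be $\ge e^{-1}$ forces $t = \Omega(1/\phi) = \Omega(\kappa d/\log^2 d)$. Fourth, I would double-check that the hard instance can be chosen so that $\pi(S) \ge e^{-d}$; if it cannot, one might adjust the construction (e.g.\ truncate, rescale, or tensor a mild Gaussian factor) so that $S$ has polynomially bounded diameter and $\pi(S)$ stays above $e^{-d}$, without ruining the conductance estimate.

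The main obstacle is the first step: the spectral-gap bound of Theorem~\ref{thm:malalb} could plausibly be proved via a direct eigenvalue/variational argument (a carefully chosen test function) rather than via a conductance estimate, and Cheeger's inequality alone would only give a conductance of order $\sqrt{\textup{gap}} = \widetilde O(1/\sqrt{\kappa d})$, which is much too weak. So the burden is to produce, inside the existing hard instance, an \emph{explicit} bottleneck $S$ whose conductance is $\widetilde O(1/(\kappa d))$ directly, while simultaneously keeping $\pi(S)$ exponentially large. The extra factor of $\log d$ between Theorem~\ref{thm:malalb} ($\kappa d/\log d$) and Theorem~\ref{thm:malalbmix} ($\kappa d/\log^2 d$) is precisely the slack I expect to pay for doing this conductance-style extraction on the same construction. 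Once such an $S$ is identified, the remaining calculations are routine.
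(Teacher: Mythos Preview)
Your bottleneck strategy is sound and in fact matches the paper's argument for step sizes $h = \omega(\log d/(\kappa d))$: in that range the paper exhibits a set (the $\Omega$ of \eqref{eq:malaomegadef} for large $h$, the $\oha$ of \eqref{eq:hardsetdef} for intermediate $h$) of stationary mass $\ge e^{-d}$ from which the acceptance probability is $\exp(-\Omega(h\kappa d)) = d^{-\Omega(1)}$, so the conductance out of that set is polynomially small and your escape inequality finishes the job.

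The gap is the small-step regime $h = O(\log d/(\kappa d))$. Here the rejection-based bottleneck collapses: $\exp(-\Omega(h\kappa d))$ is no longer small, and indeed for the paper's hard instance in this regime (the standard Gaussian $\mathcal{N}(0,\id)$), MALA accepts with probability $1 - o(1)$ at every step. Worse, there is no conductance bottleneck of order $\widetilde O(1/(\kappa d))$ to be found: the chain is essentially a random walk with step length $\Theta(\sqrt h)$ on a smooth density, and its conductance across any half-space or ball boundary scales like $\Theta(\sqrt h)$, not $\Theta(h)$. The relaxation-time lower bound $\Omega(1/h)$ in this regime comes from the variational argument with test function $g(x)=x_1$ (Lemma~\ref{lem:specgap}), and Cheeger is genuinely lossy here. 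Your proposal correctly flags this as the danger but then assumes a direct bottleneck can be extracted from the same construction; it cannot.

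The paper's remedy is a separate, non-conductance argument (Section~\ref{ssec:smallh}): start from $\pi|_{\{\|x\|_2^2 \le d/2\}}$, show by a drift-plus-martingale calculation that with probability $\ge 0.99$ the iterates never reject \emph{and} stay inside $\{\|x\|_2^2 \le 0.81 d\}$ for all $T = o(\kappa d/\log^2 d)$ steps, and conclude via the witness set $\{\|x\|_2^2 \ge 0.81 d\}$, which carries $\ge 0.99$ of the stationary mass. This is where the extra $\log d$ loss relative to Theorem~\ref{thm:malalb} is actually incurred. Your plan needs this third ingredient to close.
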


We remark that Theorem~\ref{thm:malalbmix} is the first \emph{mixing time} lower bound for discretizations of the Langevin dynamics we are aware of, as other related lower bounds have primarily been on relaxation times \cite{ChenV19, LeeST20a, ChewiLACGR20}. Up to now, it is unknown how to obtain a starting distribution for a general distribution with condition number $\kappa$ with warmness better than $\kappa^d$ (which is obtained by the starting distribution $\Nor(x^*, \frac 1 L \id)$ where $L$ is the smoothness parameter and $x^*$ is the mode).\footnote{The warmness of a distribution is the worst-case ratio between the measures it and the stationary assign to a set.} A line of work \cite{DwivediCW018, ChenDWY19, LeeST20a} analyzed the performance of MALA under this warm start, culminating in a mixing time of $\tO(\kappa d)$, where $\tO$ hides logarithmic factors in $\kappa$, $d$, and the target accuracy. On the other hand, a recent work \cite{ChewiLACGR20} demonstrated that MALA obtains a mixing time scaling as $\tO(\text{poly}(\kappa) \sqrt d)$, when initialized at a \emph{polynomially} warm start,\footnote{As discussed, it is currently unknown how to obtain such a warm start generically.} and further showed that such a mixing time is tight (in its dependence on $d$). They posed as an open question whether it was possible to obtain $\tO(\text{poly}(\kappa) d^{1 - \Omega(1)})$ mixing from an explicit starting distribution.

We address this question by proving Theorem~\ref{thm:malalbmix}, showing that the $\tO(\kappa d)$ rate of \cite{LeeST20a} for MALA applied to a $\kappa$-conditioned density is tight up to logarithmic factors from an explicit ``bad'' warm start. Concretely, to prove Theorems~\ref{thm:malagaussianlb}-\ref{thm:malalbmix}, in each case we exhibit an $\exp(-d)$-sized set according to the stationary measure where either the chain cannot move in $\text{poly}(d)$ steps with high probability, or must choose a very small step size. Beyond exhibiting a mixing bound, this demonstrates the subexponential warmness assumption in \cite{ChewiLACGR20} is truly necessary for their improved bound.  To our knowledge, this is the first \emph{nearly-tight} characterization of a specific sampling algorithm's performance in all parameters, and improves lower bounds of \cite{ChewiLACGR20, LeeST20a}. It also implies that to go beyond $\tO(\kappa d)$ mixing requires a subexponential warm start.

The lower bound statement of Theorem~\ref{thm:malalbmix} is warmness-sensitive, and is of the following (somewhat non-standard) form: for $\beta = \exp(d)$, we provide a lower bound on the quantity
\[\inf_{\text{algorithm parameters}} \sup_{\substack{\text{starts of warmness } \le \beta \\ \text{densities in target family}}} \text{mixing time of algorithm}.\]
In other words, we are allowed to choose both the hard density and starting distribution adaptively based on the algorithm parameters (in the case of MALA, our choices respond to the step size). We note that this type of lower bound is compatible with standard conductance-based upper bound analyses, which typically only depend on the starting distribution through the warmness parameter. 

In Section~\ref{sec:hmc}, we further study the multi-step generalization of MALA, known in the literature as Hamiltonian Monte Carlo with a leapfrog integrator (which we refer to in this paper as HMC). In addition to a step size $\eta$, HMC is parameterized by a number of steps per iteration $K$; in particular, HMC makes $K$ gradient queries in every step to perform a $K$-step discretization of the Langevin dynamics, before applying a Metropolis filter. It was recently shown in \cite{ChenDWY19} that under higher derivative bounds, balancing $\eta$ and $K$ more carefully depending on problem parameters could break the apparent $\kappa d$ barrier of MALA, even from an exponentially warm start. 

It is natural to ask if there is a stopping point for improving HMC. We demonstrate that HMC cannot obtain a better relaxation time than $\tO(\kappa \sqrt{d} K^{-1})$ for any $K$, even when the target is a Gaussian. Since every HMC step requires $K$ gradients, this suggests $\widetilde{\Omega}(\kappa \sqrt{d})$ queries are necessary.

\begin{restatable}{theorem}{restatemainhmc}\label{thm:mainhmc}
For every step size and count, there is a target Gaussian on $\R^d$ whose negative log-density always has Hessian eigenvalues in $[1, \kappa]$, such that the relaxation time of HMC is $\Omega(\frac{\kappa \sqrt d}{K\sqrt {\log d}})$.
\end{restatable}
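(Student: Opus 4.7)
The plan is to exploit the decomposition of HMC on a Gaussian target along eigendirections of the Hessian, together with the Chebyshev polynomial structure of the $K$-step leapfrog integrator, to simultaneously control the per-iteration contraction and the Metropolis acceptance probability.

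First I would observe that for the target $\Nor(0, H^{-1})$ with diagonal Hessian $H = \mathrm{diag}(\lambda_1, \ldots, \lambda_d)$, leapfrog integration decouples coordinate-wise: in the $i$-th direction one step is a $2 \times 2$ symplectic matrix $M_i$ with $\mathrm{tr}(M_i) = 2 - \eta^2 \lambda_i$ and $\det(M_i) = 1$, conjugate (in the stable regime $\eta^2\lambda_i < 4$) to a rotation by angle $\theta_i = \arccos(1 - \eta^2\lambda_i/2)$. By Cayley--Hamilton, $M_i^K = U_{K-1}(\cos\theta_i)\,M_i - U_{K-2}(\cos\theta_i)\,I$ with $U_k$ the Chebyshev polynomials of the second kind; in particular $(M_i^K)_{11} = T_K(\cos\theta_i) = \cos(K\theta_i)$, so the conditional mean of $x_i$ after one HMC iteration (modulo rejection) equals $\cos(K\theta_i)\,x_i$.

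Second, I would identify the spectrum of the HMC transition operator direction-by-direction. On the marginal $\Nor(0, 1/\lambda_i)$, Mehler's formula diagonalizes the AR$(1)$-like proposal via Hermite polynomials, giving $k$-th eigenvalue $\alpha \cos^k(K\theta_i) + (1 - \alpha)$ where $\alpha$ is the effective acceptance probability. Hence the relaxation time is at least $\Omega(\alpha^{-1}(1 - |\cos(K\theta_i)|)^{-1})$ in each direction, and the overall bound is the maximum over $i$.

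Third, I would control the acceptance probability via the closed-form expression for the energy change. Passing to the rotating frame in which $M_i^K$ is a pure rotation yields
\[
\Delta H_i = \tfrac{1}{2}\lambda_i \bigl(a_i^2 - a_i^{-2}\bigr) r_i^2 \sin(K\theta_i) \sin(K\theta_i + 2\psi_i),
\]
where $a_i = \sqrt{1 - \eta^2 \lambda_i/4}$ and $(r_i, \psi_i)$ are polar coordinates of the (anisotropic Gaussian) initial state. The Chebyshev structure makes each $\Delta H_i$ a subexponential random variable of standard deviation $\Theta(\eta^2 \lambda_i |\sin(K\theta_i)|)$ and positive mean of order $\eta^4 \lambda_i^2 \sin^2(K\theta_i)$; summing the $d$ independent contributions and applying a Bernstein-type tail bound, the requirement $\alpha = \Omega(1)$ forces
\[
\eta^2 \max_i \lambda_i |\sin(K\theta_i)|\sqrt{d} \lesssim \frac{1}{\sqrt{\log d}}.
\]
The adversary then constructs the hard instance by choosing an eigenvalue $\lambda^* \in [1,\kappa]$ for which $K\theta^* = K\arccos(1 - \eta^2 \lambda^*/2)$ is as close as possible to an integer multiple of $\pi$ (handling the degenerate case $K\theta^* \in \pi\mathbb{Z}$ by an infinitesimal perturbation, which only helps the bound). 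Combining the acceptance constraint with the elementary inequality $1 - |\cos(K\theta^*)| \lesssim \sin^2(K\theta^*)$ then yields relaxation time $\Omega\bigl(\kappa\sqrt{d}/(K\sqrt{\log d})\bigr)$ in direction $\lambda^*$. In the regime where no such near-resonant $\lambda^*$ lies in $[1,\kappa]$ (i.e., $K\eta\sqrt\kappa$ is too small), the slow direction $\lambda = 1$ gives a matching bound via $1 - \cos(K\eta) \gtrsim (K\eta)^2$ combined with the upper limit that $\eta$ must obey for the same acceptance constraint.

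The main obstacle I expect is the exact Chebyshev-polynomial derivation of $\Delta H_i$ over $K$ leapfrog steps and the sharp subexponential concentration needed to pin down the $\sqrt{\log d}$ factor: in particular, one must carefully track the asymmetry of $\Delta H_i$ (arising from the rotating-frame anisotropy when $a_i \neq 1$) and confirm that the Bernstein regime dominates over the subexponential tail at the relevant scales. A secondary challenge is unifying the two parameter regimes above into a single clean lower bound, since the adversary's construction must respond to the algorithm's $(\eta, K)$ in a piecewise fashion.
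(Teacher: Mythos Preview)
Your Chebyshev observation and the two-regime case split are both correct in spirit and match the paper's structure. However, the central step of your argument---the spectral identification via Mehler's formula---has a genuine gap. The claim that the Metropolized chain has $k$-th Hermite eigenvalue $\alpha\cos^k(K\theta_i)+(1-\alpha)$ in direction $i$ is not justified, for two reasons. First, the unadjusted leapfrog proposal $y_i=\cos(K\theta_i)x_i+\eta\,U_{K-1}(\cos\theta_i)v_i$ is \emph{not} $\pi^*$-stationary in coordinate $i$: as you yourself note via the anisotropy factor $a_i=\sqrt{1-\eta^2\lambda_i/4}$, the proposal's invariant variance differs from $1/\lambda_i$, so the Hermite polynomials with respect to $\Nor(0,1/\lambda_i)$ are not eigenfunctions of the proposal kernel. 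Second, even if they were, the Metropolized kernel is not of the form $(1-\alpha)I+\alpha P_{\text{prop}}$ for any scalar $\alpha$, because the accept/reject decision depends on the full $d$-dimensional energy change and therefore couples all coordinates. There is no ``effective acceptance probability'' that factors out, and the resulting bound $\Omega(\alpha^{-1}(1-|\cos(K\theta_i)|)^{-1})$ on the relaxation time does not follow.

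The paper sidesteps spectral decomposition of the Metropolized kernel entirely. It runs a clean three-way case analysis on $\eta$: (i) for $\eta^2\ge 1$ the Chebyshev coefficients blow up and a direct energy calculation shows rejection with high probability everywhere; (ii) for $\eta^2\in[\pi^2/(\kappa K^2),1]$ one can place an eigenvalue $\lambda\in[1,\kappa]$ exactly at a Chebyshev resonance $1-\eta^2\lambda/2=\cos(j\pi/K)$, so that $x_K=\pm x_0$ in that coordinate and the chain never leaves a symmetric set; (iii) for $\eta^2\le\pi^2/(\kappa K^2)$ the paper upper-bounds the Dirichlet form at the single linear test function $g(x)=x_1$ (no Mehler, no Hermite hierarchy), obtaining spectral gap $O(\eta^2K^2)$, and separately exhibits an $\exp(-d)$-measure set on which the acceptance probability is $\exp(-\Omega(\eta^4\kappa^2K^2d))$, forcing $\eta^2 K\lesssim\sqrt{\log d}/(\kappa\sqrt d)$ via Cheeger. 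Your acceptance-probability calculation and Bernstein concentration are essentially the content of case (iii), but the relaxation-time half of that case only needs the elementary Dirichlet-form bound with a linear witness, not a full diagonalization.
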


In Appendix~\ref{app:hmcgen}, we also give some lower bounds on how much increasing $K$ can help the performance of HMC in the in-between range $\kappa \sqrt{d}$ to $\kappa d$. In particular, we demonstrate that if $K \le d^c$ for some constant $c \approx 0.1$, then the $K$-step HMC Markov chain can only improve the relaxation time of Theorem~\ref{thm:mainhmc} by roughly a factor $K^2$, showing that to truly go beyond a $\kappa d$ relaxation time by more than a $d^{o(1)}$ factor, the step size must scale polynomially with the dimension (Proposition~\ref{prop:hmckappad}). We further demonstrate how to extend the mixing time lower bound of Theorem~\ref{thm:malalbmix} in a similar manner, demonstrating formally for small $K$ that (up to logarithmic factors) the gradient query complexity of HMC cannot be improved beyond $\kappa d$ by more than roughly a $K$ factor (Proposition~\ref{prop:hugeh2}).

Our mixing lower bound technique in Theorem~\ref{thm:malalbmix} does not directly extend to give a complementary mixing lower bound for Theorem~\ref{thm:mainhmc} for all $K$, but we defer this to interesting future work.

\subsection{Technical overview}\label{ssec:technical}

In this section, we give an overview of the techniques we use to show our lower bounds. Throughout for the sake of fixing a scale, we assume the negative log-density has Hessian between $\id$ and $\kappa\id$.

\paragraph{MALA.} Our starting point is the observation made in \cite{ChewiLACGR20} that for a MALA step size $h$, the spectral gap of the MALA Markov chain scales no better than $O(h + h^2)$, witnessed by a simple one-dimensional Gaussian. Thus, our strategy for proving Theorems~\ref{thm:malagaussianlb} and~\ref{thm:malalb} is to show a dichotomy on the choice of step size: either $h$ is so large such that we can construct an $\exp(d)$-warm start where the chain is extremely unlikely to move (e.g.\ the step almost always is filtered), or it is small enough to imply a poor spectral gap. In the Gaussian case, we achieve this by explicitly characterizing the rejection probability and demonstrating that choosing the ``small ball'' warm start where $\norm{x}_2^2$ is smaller than its expectation by a constant ratio suffices to upper bound $h$. 

Given the result of Theorem~\ref{thm:malagaussianlb}, we see that if MALA is to move at all with decent probability from an exponentially warm start, we must take $h \ll 1$, so the spectral gap in this regime is simply $O(h)$. We now move onto the more general well-conditioned setting. As a thought experiment, we note that the upper bound analyses of \cite{DwivediCW018, ChenDWY19, LeeST20a} for MALA have a dimension dependence which is bottlenecked by the \emph{noise term} only. In particular, the MALA iterates apply a filter to the move $x' \gets x - h\nabla f(x) + \sqrt{2h} g$, where $g \sim \Nor(0, \id)$ is a standard Gaussian vector. However, even for the more basic ``Metropolized random walk'' where the proposal is simply $x' \gets x + \sqrt{2h} g$, the dimension dependence of upper bound analyses scales linearly in $d$. Thus, it is natural to study the effect of the noise, and construct a hard distribution based around it.

We first formalize this intuition, and demonstrate that for step sizes not ruled out by Theorem~\ref{thm:malagaussianlb}, all terms in the rejection probability calculation other than those due to the effect of the noise $g$ are low-order. Moreover, because the effect of the noise is coordinatewise separable (since $\Nor(0, \id)$ is a product distribution), to demonstrate a $\tO(\frac 1 {\kappa d})$ upper bound on $h$ it suffices to show a hard one-dimensional distribution where the log-rejection probability has expectation $-\Omega(h \kappa)$, and apply sub-Gaussian concentration to show a product distribution has expectation $-\Omega(h\kappa d)$.

At this point, we reduce to the following self-contained problem: let $x \in \R$, let $\pis \propto \exp(-f_{\text{1d}})$ be one-dimensional with second derivative $\le \kappa$, and let $x_g = x + \sqrt{2h} g$ for $g \sim \Nor(0, 1)$. We wish to construct $f_{\text{1d}}$ such that for $x$ in a constant probability region over $\exp(-f_{\text{1d}})$ (the ``bad set''),
\begin{equation}\label{eq:1dreject}
\E_{g \sim \Nor(0, 1)}\Brack{-f_{\text{1d}}(x_g) + f_{\text{1d}}(x) - \half \inprod{x - x_g}{f_{\text{1d}}'(x) + f_{\text{1d}}'(x_g)}} = -\Omega(h\kappa),
\end{equation}
where the contents of the expectation in \eqref{eq:1dreject} are the log-rejection probability along one coordinate by a straightforward calculation. By forming a product distribution using $f_{\text{1d}}$ as a building block, and combining with the remaining low-order terms due to the drift $\nabla f(x)$, we attain an $\exp(-d)$-sized region where the rejection probability is $\exp(-\Omega(h\kappa d))$, completing Theorem~\ref{thm:malalb}.

It remains to construct such a hard $f_{\text{1d}}$. The calculation
\[-f_{\text{1d}}(x_g) + f_{\text{1d}}(x) - \half \inprod{x - x_g}{f_{\text{1d}}'(x) + f_{\text{1d}}'(x_g)} = -2h \int_0^1 \Par{\half - s} g^2 f_{\text{1d}}''(x + s(x_g - s)) ds\]
suggests the following approach: because the above integral places more mass closer to the starting point, we wish to make sure our bad set has large second derivative, but most moves $g$ result in a much smaller second derivative. Our construction patterns this intuition: we choose\footnote{We note \cite{ChewiLACGR20} also used a (different, but similar) cosine-based construction for their lower bound.}
\[f_{\text{1d}}(x) = \frac{\kappa}{3} x^2 - \frac{\kappa h}{3} \cos\frac{x}{\sqrt h} \implies f_{\text{1d}}''(x) = \frac {2\kappa}{3} + \frac{\kappa}{3} \cos \frac{x}{\sqrt h},\]
such that our bad set is when $\cos \frac x {\sqrt h}$ is relatively large (which occurs with probability $\to \half$ for small $h$ in one dimension). The period of our construction scales with $\sqrt h$, so that most moves $\sqrt{2h} g$ of size $O(\sqrt h)$ will ``skip a period'' and hence hit a region with small second derivative, satisfying \eqref{eq:1dreject}.

\begin{figure}[ht]
	\centering
	\includegraphics[width=0.6\linewidth]{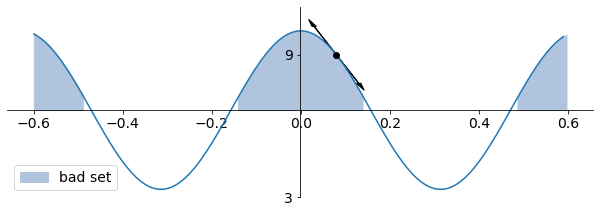}
	\label{fig:cosine}
	\caption{Second derivative of our hard function $f_{\text{1d}}$, $\kappa = 10$, $h = 0.01$. Starting from inside the hard region, on average over $g \sim \Nor(0, \id)$, a move by $\sqrt{2h} g$ decreases the second derivative.}
	\label{fig:test}
\end{figure} 

\paragraph{HMC.} We further demonstrate that similar hard Gaussians as the one we use for MALA also place an upper bound on the step size of HMC for any number of steps $K$. Our starting point is a novel characterization of HMC iterates on Gaussians: namely, when the negative log-density is quadratic, we show that the HMC iterates implement a linear combination between the starting position and velocity, where the coefficients are given by \emph{Chebyshev polynomials}. For step size $\eta$ of size $\Omega(\frac{1}{K\sqrt{\kappa}})$ for specific constants, we show the HMC chain begins to cycle because of the locations of the Chebyshev polynomials' zeroes, and cannot move. Moreover, for sufficiently small step size $\eta$ outside of this range, it is straightforward by examining the coefficients of Chebyshev polynomials to show that they are the same (up to constant factors) as in the MALA case, at which point our previous lower bound holds. It takes some care to modify our hard Gaussian construction to rule out all constant ranges in the $\eta \approx \frac{1}{K\sqrt{\kappa}}$ region, but by doing so we obtain Theorem~\ref{thm:mainhmc}.

We remark that the observation that HMC iterates are implicitly implementing a Chebyshev polynomial approximation appears to be unknown in the literature, and is a novel contribution of our work. We believe understanding this connection is a worthwhile endeavor, as a similar connection between polynomial approximation and first-order convex optimization has led to various interesting interpretations of Nesterov's accelerated gradient descent method \cite{Hardt13, Bach19}.

\subsection{Prior work}\label{ssec:prevwork}

Sampling from well-conditioned distributions (as well as distributions with more stringent bounds on higher derivatives) using discretizations of the Langevin dynamics is an extremely active and rich research area, so for brevity we focus on discussing two types of related work in this section: upper bounds for the MALA and HMC Markov chains, and lower bounds for sampling and related problems. We refer the reader to e.g.\ \cite{Dalalyan17, ChengCBJ18, DwivediCW018, DalalyanR18, DurmusM19, DurmusMM19, ChenDWY19, ChenV19, ShenL19, MouMWBJ19, LeeST20a, LeeST20b, ChewiLACGR20} and the references therein for a more complete account on progress on the more general problem of well-conditioned sampling. 

\paragraph{Theoretical analyses of MALA and HMC.} MALA was originally proposed in \cite{Besag94}, and subsequently its practical and theoretical performance in different settings has received extensive treatment in the literature (cf.\ the survey \cite{PereyraSCPTHM15}). A number of theoretical analyses related to the well-conditioned setting we study predate the work of \cite{DwivediCW018}, such as \cite{RobertsT96, BouRabeeH12}, but they typically consider more restricted settings or do not state explicit dependences on $\kappa$ and $d$. 

Recently, a line of work has obtained a sequence of stronger upper bounds on the mixing of MALA. First, \cite{DwivediCW018} demonstrated that MALA achieves a mixing time of $\tO(\kappa d + \kappa^{1.5} \sqrt d)$ from a polynomially warm start, and the same set of authors later proved the same mixing time under an exponentially warm start (which can be explicitly constructed) in \cite{ChenDWY19}. It was later demonstrated in \cite{LeeST20a} that under an appropriate averaging scheme, the mixing time could be improved to $\tO(\kappa d)$ from an exponentially warm start with no low-order dependence. Finally, a recent work \cite{ChewiLACGR20} demonstrated that from a polynomially warm start, MALA mixes in time $\tO(\text{poly}(\kappa) \sqrt{d})$ for general $\kappa$-conditioned distributions and in time $\tO(\text{poly}(\kappa) \sqrt[3]{d})$ for $\kappa$-conditioned Gaussians, and posed the open question of attaining similar bounds from an explicit (exponentially) warm start. This latter work was a primary motivation for our exploration.

The HMC algorithm with a leapfrog integrator (which we refer to as HMC for simplicity) can be viewed as a multi-step generalization of MALA, as it has two parameters (a step size $\eta$ and a step count $K$), and when $K = 1$ the implementation matches MALA exactly. For larger $K$, the algorithm simulates the (continuous-time) Hamiltonian dynamics with respect to the potential $f(x) + \half \norm{v}_2^2$ where $f$ is the target's negative log-density and $v$ is an auxiliary ``velocity'' variable. The intuition is that larger $K$ leads to more faithful discretizations of the true dynamics. 

However, there are few explicit analyses of the (Metropolis-adjusted) HMC algorithm, applied to well-conditioned distributions.\footnote{There has been considerably more exploration of the unadjusted variant \cite{MangoubiV18, MangoubiS19, BouRabeeE21}, which typically obtain mixing guarantees scaling polynomially in the inverse accuracy (as opposed to polylogarithmic).} To our knowledge, the only theoretical upper bound for the mixing of (multi-step) HMC stronger than known analyses of its one-step specialization MALA is by \cite{ChenDWY19}, which gave a suite of bounds trading off three problem parameters: the conditioning $\kappa$, the dimension $d$, and the \emph{Hessian Lipschitz parameter} $L_H$, under the additional assumption that the log-density has bounded third derivatives. Assuming that $L_H$ is polynomially bounded by the problem smoothness $L$, they demonstrate that HMC with an appropriate $K$ can sometimes achieve sublinear dependence on $d$ in number of gradient queries, where the quality of this improvement depends on $\kappa$ and $d$ (e.g.\ if $\kappa \in [d^{\frac 1 3}, d^{\frac 2 3}]$ and $L_H \le L^{1.5}$, $\kappa d^{\frac {11}{12}}$ gradients suffice). This prompts the question: can HMC attain query complexity \emph{independent} of $d$, assuming higher derivative bounds, from an explicit warm start? Theorem~\ref{thm:mainhmc} answers this negatively (at least in terms of relaxation time) using an exponentially-sized bad set; moreover, our hard distribution is a Gaussian, with all derivatives of order at least $3$ vanishing.

\paragraph{Lower bounds for sampling.} The bounds most closely relevant to those in this paper are given by \cite{LeeST20a}, who showed that the step size of MALA must scale inversely in $\kappa$ for the chain to have a constant chance of moving, and \cite{ChewiLACGR20}, who showed that the step size must scale as $d^{-\half}$. Theorem~\ref{thm:malalb} matches or improves both bounds simultaneously, proving that up to logarithmic factors the relaxation time of MALA scales \emph{linearly} in both $\kappa$ and $d$, while giving an explicit hard distribution and $\exp(-d)$-sized bad set. Moreover, both \cite{LeeST20a, ChewiLACGR20} gave strictly spectral lower bounds, which are complemented by our Theorem~\ref{thm:malalbmix}, a mixing time lower bound.

We briefly mention several additional lower bound results in the sampling and sampling-adjacent literature, which are related to this work. Recently, \cite{CaoLW20} exhibited an information-theoretic lower bound on unadjusted discretizations simulating the \emph{underdamped Langevin dynamics}, whose dimension dependence matches the upper bound of \cite{ShenL19} (while leaving the precise dependence on $\kappa$ open). Finally, \cite{GeLL20} and \cite{ChatterjiBL20} give information-theoretic lower bounds for estimating normalizing constants of well-conditioned distributions and the number of stochastic gradient queries required by first-order sampling methods under noisy gradient access respectively. 	%

\section{Preliminaries}
\label{sec:prelims}

In Section~\ref{ssec:notation}, we give an overview of notation and technical definitions used throughout the paper. We state standard helper concentration bounds we frequently use in Section~\ref{ssec:concentration}. We then recall the definitions of the sampling methods which we study in this paper in Sections~\ref{ssec:mala} and~\ref{ssec:hmc}.

\subsection{Notation}\label{ssec:notation}

\paragraph{General notation.} For $d \in \N$ we let $[d] \defeq \{i \in \N \mid 1 \le i \le d\}$. We let $\norm{\cdot}_2$ denote the Euclidean norm on $\R^d$ for any $d$; for any positive semidefinite matrix $\ma$, we let $\norm{\cdot}_{\ma}$ be its induced seminorm $\norm{x}_{\ma} = \sqrt{x^\top \ma x}$. We use $\norm{\cdot}_p$ to denote the $\ell_p$ norm for $p \ge 1$, and $\norm{\cdot}_\infty$ is the maximum absolute value of entries. We let $\Nor(\mu, \msig)$ denote the multivariate Gaussian with mean $\mu \in \R^d$ and covariance $\msig \in \R^{d \times d}$. We let $\id \in \R^{d \times d}$ denote the identity matrix when dimensions are clear from context, and $\preceq$ is the Loewner order on the positive semidefinite cone. We let $\{W_t\}_{t \ge 0} \subset \R^d$ denote the standard Brownian motion when dimensions are clear from context.

\paragraph{Functions.} We say twice-differentiable $f: \R^d \rightarrow \R$ is $L$-smooth and $\mu$-strongly convex for $0 \le \mu \le L$ if $\mu \id \preceq \nabla^2 f(x) \preceq L\id$ for all $x \in \R^d$. It is well-known that for any $x, y \in \R^d$, this implies $f$ has a Lipschitz gradient (i.e.\ $\norm{\nabla f(x) - \nabla f(y)}_2 \le L\norm{x - y}_2$), and satisfies the quadratic bounds
\[f(x) + \inprod{\nabla f(x)}{y - x} + \frac \mu 2 \norm{y - x}_2^2 \le f(y) \le f(x) + \inprod{\nabla f(x)}{y - x} + \frac L 2 \norm{y - x}_2^2.\]
We define the condition number of such a function $f$ by $\kappa \defeq \frac L \mu$. We will assume that $\kappa$ is at least a constant for convenience of stating bounds; a lower bound of $10$ suffices for all our results.

\paragraph{Distributions.} For distribution $\pi$ on $\R^d$, we say $\pi$ is logconcave if $\frac{d\pi}{dx}(x) = \exp(-f(x))$ for convex $f$; we say $\pi$ is $\mu$-strongly logconcave if $f$ is $\mu$-strongly convex. For $A \subseteq \R^d$ we let $A^c$ denote its complement and $\pi(A) \defeq \int_{x \in A} d\pi(x)$ denote its measure under $\pi$. We say distribution $\rho$ is $\beta$-warm with respect to $\pi$ if $\frac{d\rho}{d\pi}(x) \le \beta$ everywhere; we define their total variation $\norm{\pi - \rho}_{\textup{TV}} \defeq \sup_{A \subseteq \R^d} \pi(A) - \rho(A)$. Finally, we denote the expectation and variance of $g: \R^d \rightarrow \R$ under $\pi$ by
\[\E_{\pi}\Brack{g} = \int g(x) d\pi(x),\; \Var_{\pi}\Brack{g} = \E_{\pi}\Brack{g^2} - \Par{\E_{\pi}\Brack{g}}^2.\] 

\paragraph{Sampling.} Consider a Markov chain defined on $\R^d$ with transition kernel $\{\tran_x\}_{x \in \R^d}$, so that $\int \tran_x(y) dy = 1$ for all $x$. Further, denote the stationary distribution of the Markov chain by $\pis$. Define the Dirichlet form of functions $g, h: \R^d \rightarrow \R$ with respect to the Markov chain by
\[\dir(g, h) \defeq \int g(x) h(x) d\pis(x) - \iint g(y) h(x) \tran_x(y) d\pis(x) dy.\]
A standard calculation demonstrates that
\[\dir(g, g) = \half \iint (g(x) - g(y))^2 \tran_x(y) d\pis(x) dy.\]
The mixing of the chain is governed by its spectral gap, a classical quantity we now define:
\begin{equation}\label{eq:gapdef}\lam\Par{\{\tran_x\}_{x \in \R^d}} \defeq \inf_g\Brace{\frac{\dir(g, g)}{\Var_{\pis}[g]}}.\end{equation}
The relaxation time is the inverse spectral gap. We also recall a result of Cheeger \cite{Cheeger69}, showing the spectral gap is $O(\Phi)$, where $\Phi$ is the conductance of the chain:
\begin{equation}\label{eq:conductance}
\Phi\Par{\{\tran_x\}_{x \in \R^d}} \defeq \inf_{A \subset \R^d \mid \pis(A) \le \half} \frac{\int_{x \in A} \tran_x(A^c) d\pis(x)}{\pis(A)}
\end{equation}

Finally, we recall the definition of a Metropolis filter. A Markov chain with transitions $\{\tran_x\}_{x \in \R^d}$ and stationary distribution $\pis$ is said to be reversible if for all $x, y \in \R^d$,
\[d\pis(x)\tran_x(y) = d\pis(y)\tran_y(x). \]
The Metropolis filter is a way of taking an arbitrary set of proposal distributions $\{\prop_x\}_{x \in \R^d}$ and defining a reversible Markov chain with stationary distribution $\pis$. In particular, the Markov chain induced by the Metropolis filter has transition distributions $\{\tran_x\}_{x \in \R^d}$ defined by
\begin{equation}\label{eq:mhdef}\tran_x(y) \defeq \prop_x(y) \min\Par{1, \frac{d\pis(y)\prop_y(x)}{d\pis(x)\prop_x(y)}} \text{ for all } y \neq x.\end{equation}
Whenever the proposal is rejected by the modified distributions above, the chain does not move.

\subsection{Concentration}\label{ssec:concentration}

Here we state several frequently used (standard) concentration facts.

\begin{fact}[Mill's inequality]\label{fact:mills}
For one-dimensional Gaussian random variable $Z \sim \Nor(0, \sigma^2)$,
\[\Pr\Brack{Z > t} \le \sqrt{\frac 2 \pi}\frac{\sigma}{t} \exp\Par{-\frac{t^2}{2\sigma^2}}.\]
\end{fact}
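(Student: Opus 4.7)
The plan is to reduce to the standard normal case by scaling, and then prove the tail bound on $\Nor(0,1)$ via the standard ``multiply the integrand by $z/t$'' trick. Specifically, first I would observe that $Z/\sigma \sim \Nor(0,1)$, so it suffices to show that for any $a \ge 0$,
\[\Pr_{Y \sim \Nor(0,1)}\Brack{Y > a} \le \sqrt{\frac{2}{\pi}} \cdot \frac{1}{a} \exp\Par{-\frac{a^2}{2}},\]
and then apply this with $a = t/\sigma$, converting the $\frac{1}{a}$ factor into $\frac{\sigma}{t}$ and the exponent into $-\frac{t^2}{2\sigma^2}$.

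To prove the one-dimensional version, I would write
\[\Pr_{Y \sim \Nor(0,1)}\Brack{Y > a} = \frac{1}{\sqrt{2\pi}} \int_a^\infty \exp\Par{-\frac{z^2}{2}} dz,\]
and use the elementary inequality $1 \le z/a$ valid on the region of integration $z \ge a > 0$. This yields the bound
\[\Pr_{Y \sim \Nor(0,1)}\Brack{Y > a} \le \frac{1}{a\sqrt{2\pi}} \int_a^\infty z \exp\Par{-\frac{z^2}{2}} dz = \frac{1}{a\sqrt{2\pi}} \exp\Par{-\frac{a^2}{2}},\]
where the final equality uses the substitution $u = z^2/2$, $du = z\, dz$, under which the integrand becomes $\exp(-u)$ and the integral evaluates in closed form to $\exp(-a^2/2)$. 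Since $\frac{1}{\sqrt{2\pi}} \le \sqrt{\frac{2}{\pi}}$ (indeed the former is exactly half the latter), the stated bound follows immediately with room to spare; this looseness is presumably tolerated because only the exponential factor matters for the applications in the paper.

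There is no real obstacle here: the entire argument is a one-line Gaussian integral, and I expect the entire write-up to fit in a few lines. The only mild pitfall is the case $t = 0$ (or $a = 0$), where the bound is vacuous because the right-hand side is $+\infty$; this is consistent with the statement, which implicitly requires $t > 0$ for the inequality to be meaningful.
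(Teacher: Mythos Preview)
Your argument is correct and is the standard proof of Mill's inequality; the paper itself states this as a known fact without proof, so there is no alternative approach to compare against. Your observation that the constant $\sqrt{2/\pi}$ in the statement is twice the sharp constant $1/\sqrt{2\pi}$ is also accurate.
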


\begin{fact}[$\chi^2$ tail bounds, Lemma 1 \cite{LaurentM00}]\label{fact:chisquared}
Let $\{Z_i\}_{i \in [n]} \sim_{\textup{i.i.d.}} \Nor(0, 1)$ and $a \in \R_{\ge 0}^n$. Then
\begin{align*}
\Pr\Brack{\sum_{i \in [n]} a_i Z_i^2 - \norm{a}_2^2 \ge 2\norm{a}_2\sqrt{t} + 2\norm{a}_\infty t  } &\le \exp(-t), \\
\Pr\Brack{\sum_{i \in [n]} a_i Z_i^2 - \norm{a}_2^2 \le -2\norm{a}_2\sqrt{t}} &\le \exp(-t).
\end{align*}
\end{fact}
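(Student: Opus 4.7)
The plan is to follow the classical Chernoff and moment-generating-function approach of Laurent--Massart. For the upper tail, I would apply the exponential Markov inequality: for $\lambda \in (0, \frac{1}{2\|a\|_\infty})$,
\[\Pr\Brack{\sum_{i \in [n]} a_i (Z_i^2 - 1) \ge r} \le \exp(-\lambda r) \prod_{i \in [n]} \E\Brack{\exp\Par{\lambda a_i (Z_i^2 - 1)}}.\]
Using the Gaussian $\chi^2$ moment generating function $\E[\exp(\lambda a_i Z_i^2)] = (1 - 2\lambda a_i)^{-1/2}$, valid precisely when $2\lambda a_i < 1$, the log-MGF of the centered sum is $-\sum_i \Par{\lambda a_i + \tfrac12 \log(1 - 2\lambda a_i)}$. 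I would then invoke the analytic inequality $-\log(1-u) - u \le u^2/(2(1-u))$, valid for $u \in [0,1)$, applied coordinatewise with $u = 2\lambda a_i$, to bound this log-MGF by $\sum_i \lambda^2 a_i^2/(1 - 2\lambda a_i) \le \lambda^2 \|a\|_2^2 / (1 - 2\lambda \|a\|_\infty)$.

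For the lower tail, no truncation of $\lambda$ is needed because $\E[\exp(-\lambda a_i Z_i^2)] = (1+2\lambda a_i)^{-1/2}$ is finite for all $\lambda \ge 0$. Using the one-sided Taylor estimate $\log(1+u) \ge u - u^2/2$ for $u \ge 0$ gives $\lambda a_i - \tfrac12 \log(1+2\lambda a_i) \le \lambda^2 a_i^2$, so the log-MGF of the negated centered sum is at most $\lambda^2 \|a\|_2^2$. Optimizing $\exp(\lambda^2 \|a\|_2^2 - \lambda s)$ at $\lambda = s/(2\|a\|_2^2)$ yields $\exp(-s^2/(4\|a\|_2^2))$, and substituting $s = 2\|a\|_2 \sqrt t$ delivers the claimed $e^{-t}$.

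The main technical point is the upper-tail optimization over $\lambda$, which must simultaneously accommodate the sub-Gaussian regime (small $t$, where the $\|a\|_2 \sqrt t$ term dominates) and the sub-exponential regime (large $t$, where $\|a\|_\infty t$ dominates and $\lambda$ approaches the boundary $1/(2\|a\|_\infty)$). Rather than compute the full Legendre transform, I would guess an explicit near-optimizer, e.g.\ $\lambda^* = \sqrt{t}/(\|a\|_2 + \|a\|_\infty \sqrt t)$, and verify by direct substitution and algebraic manipulation that $\lambda^* r - \lambda^{*2} \|a\|_2^2/(1 - 2\lambda^* \|a\|_\infty) \ge t$ for $r = 2\|a\|_2 \sqrt t + 2\|a\|_\infty t$; the key identity is that $r$ is exactly the value at which the Legendre dual of $\lambda \mapsto \lambda^2 \|a\|_2^2/(1 - 2\lambda \|a\|_\infty)$ equals $t$. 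I also remark that the centering in the statement is most naturally $\E[\sum_i a_i Z_i^2] = \sum_i a_i$, and the coordinatewise MGF decomposition above is insensitive to this cosmetic normalization.
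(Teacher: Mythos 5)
The paper offers no proof of this statement: it is imported as a black-box ``Fact'' from Lemma 1 of \cite{LaurentM00}, so there is no internal argument to compare against. Your proposal is precisely the Cram\'er--Chernoff argument of that reference, and both tails are handled correctly in outline: the $\chi^2$ moment generating function, the inequality $-\log(1-u)-u\le u^2/(2(1-u))$ yielding $\log\E[e^{\lambda Y}]\le \lambda^2\norm{a}_2^2/(1-2\lambda\norm{a}_\infty)$ for the centered sum $Y$, and the clean lower-tail optimization are all right. One concrete fix is needed at the step you yourself single out as the main technical point: the candidate $\lambda^* = \sqrt{t}/(\norm{a}_2+\norm{a}_\infty\sqrt{t})$ leaves the admissible range $2\lambda\norm{a}_\infty<1$ as soon as $\norm{a}_\infty\sqrt{t}>\norm{a}_2$, so the verification you describe would fail in the sub-exponential regime. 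Taking instead $\lambda^*=\sqrt{t}/(\norm{a}_2+2\norm{a}_\infty\sqrt{t})$ gives $1-2\lambda^*\norm{a}_\infty=\norm{a}_2/(\norm{a}_2+2\norm{a}_\infty\sqrt{t})>0$ unconditionally, and direct substitution shows $\lambda^* r-\lambda^{*2}\norm{a}_2^2/(1-2\lambda^*\norm{a}_\infty)=t$ exactly when $r=2\norm{a}_2\sqrt{t}+2\norm{a}_\infty t$, so the Legendre-dual identity you anticipate holds with this corrected choice. Finally, your closing remark on the centering deserves to be stated more forcefully: the fact as printed centers $\sum_{i} a_iZ_i^2$ at $\norm{a}_2^2$ rather than at its mean $\sum_{i} a_i$, and for non-uniform $a$ with $\sum_i a_i$ much larger than $\norm{a}_2^2$ the upper-tail inequality as displayed is simply false (consider $a_i=1/n$ for all $i$); the correct statement, and the one your argument actually proves, concerns $\sum_{i} a_i(Z_i^2-1)$. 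This is harmless in the paper because every invocation either takes all nonzero $a_i$ equal to $1$ or manifestly measures deviations from the true mean, but it is not a purely cosmetic normalization.
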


\begin{fact}[Bernstein's inequality]\label{fact:bernstein}
Let $\{Z_i\}_{i \in [n]}$ be independent mean-zero random variables with sub-exponential parameter $\lam$. Then
\[\Pr\Brack{\left|\sum_{i \in [n]} Z_i\right| > t} \le \exp\Par{-\half \min\Par{\frac{t^2}{n\lam^2}, \frac{t}{\lam}}}.\]
\end{fact}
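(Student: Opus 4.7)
The plan is to prove Bernstein's inequality via the classical Chernoff-bound / moment generating function method. First, I would apply Markov's inequality to $\exp(s \sum_i Z_i)$ for a free parameter $s > 0$: this gives $\Pr[\sum_i Z_i > t] \le \exp(-st)\, \E[\exp(s \sum_i Z_i)]$. Independence then factorizes the MGF as $\prod_i \E[\exp(s Z_i)]$, reducing the problem to controlling each one-dimensional MGF.

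Next, I would invoke the standard sub-exponential MGF bound: under the convention that each mean-zero $Z_i$ has sub-exponential parameter $\lam$, we have $\E[\exp(s Z_i)] \le \exp(s^2 \lam^2 / 2)$ for all $|s| \le 1/\lam$. Plugging this in yields
\[
\Pr\Brack{\sum_{i \in [n]} Z_i > t} \le \exp\Par{-st + \frac{n s^2 \lam^2}{2}} \text{ for all } 0 < s \le \frac{1}{\lam}.
\]

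Then I would optimize over $s$ in two regimes. The unconstrained minimizer of the right-hand exponent is $s^* = t / (n\lam^2)$, producing exponent $-t^2/(2n\lam^2)$; this $s^*$ is feasible precisely when $t \le n\lam$. When $t > n\lam$, the constrained minimum sits at the boundary $s = 1/\lam$, yielding exponent $-t/\lam + n/2 \le -t/(2\lam)$ using $t \ge n\lam$. Combining the two regimes gives the one-sided tail $\Pr[\sum_i Z_i > t] \le \exp(-\tfrac{1}{2}\min(t^2/(n\lam^2),\, t/\lam))$. Applying the same argument to the negated sum $-\sum_i Z_i$ (whose summands are themselves mean-zero and sub-exponential with the same parameter) and union-bounding produces the two-sided statement.

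The main subtlety — rather than a genuine obstacle — is pinning down the convention for ``sub-exponential parameter $\lam$,'' since the fact is stated without an explicit definition and several equivalent-up-to-constants conventions exist in the literature (Orlicz $\psi_1$-norm, Bernstein condition $\E|Z|^k \le \tfrac{1}{2} k! \lam^{k-2} \Var(Z)$, or a direct MGF bound). Once a convention is fixed so that the quoted MGF estimate holds, the remainder of the proof is a one-line Chernoff computation and a case split, so this statement is essentially a bookkeeping invocation of a classical result rather than something requiring new ideas.
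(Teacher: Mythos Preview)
The paper does not actually prove this statement: it is listed in Section~2.2 as one of several ``standard'' concentration facts that are merely stated and then invoked elsewhere. Your proposal is the textbook Chernoff/MGF derivation and is correct in substance; there is nothing to compare against on the paper's side.

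One small bookkeeping point: the union bound over the two one-sided tails would literally give $2\exp(-\tfrac{1}{2}\min(\cdot,\cdot))$ rather than the constant-free version stated. As you already note, this is a convention issue (the paper clearly treats this inequality as ``up to constants,'' and the applications in Lemmas~\ref{lem:Sconcentrate} and~\ref{lem:hmcRbd} only use the one-sided upper tail anyway), so it does not affect anything downstream.
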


\subsection{Metropolis-adjusted Langevin algorithm}\label{ssec:mala}

In this section, we formally define the Metropolis-adjusted Langevin algorithm (MALA) which we study in Sections~\ref{sec:gaussian-mala} and~\ref{sec:general-mala}. Throughout this discussion, fix a distribution $\pi$ on $\R^d$, with density $\frac{d\pi}{dx}(x) = \exp(-f(x))$, and suppose that $f$ is twice-differentiable for simplicity.

The MALA Markov chain is given by a discretization of the (continuous-time) Langevin dynamics
\[dx_t = -\nabla f(x_t)dt + \sqrt{2}dW_t,\]
which is well-known to have stationary density $\exp(-f(x))$. MALA is defined by performing a simple Euler discretization of the Langevin dynamics up to time $h > 0$, and then applying a Metropolis filter. In particular, define the proposal distribution at a point $x$ by
\[\prop_x \defeq \Nor\Par{x - h\nabla f(x), 2h\id}.\]
We obtain the MALA transition distribution by applying the definition \eqref{eq:mhdef}, which yields
\begin{equation}\label{eq:maladef}\tran_x(y) \propto \exp\Par{-\frac{\norm{y - (x - h\nabla f(x))}_2^2}{4h}} \min\Par{1, \frac{\exp\Par{-f(y) - \frac{\norm{x - (y - h\nabla f(y))}_2^2}{4h}}}{\exp\Par{-f(x) - \frac{\norm{y - (x - h\nabla f(x))}_2^2}{4h}}}}.\end{equation}
The normalization constant above is that of the multivariate Gaussian with covariance $2h\id$.

\subsection{Hamiltonian Monte Carlo}\label{ssec:hmc}

In this section, we formally define the (Metropolized) Hamiltonian Monte Carlo (HMC) method which we study in Section~\ref{sec:hmc}. We assume the same setting as Section~\ref{ssec:mala}.

The Metropolized HMC algorithm is governed by two parameters, a step size $\eta > 0$ and a step count $K \in \N$, and can be viewed as a multi-step generalization of MALA. In particular, when $K = 1$ it is straightforward to show that HMC is a reparameterization of MALA, see e.g.\ Appendix A of \cite{LeeST20a}. More generally, from an iterate $x$, HMC performs the following updates.
\begin{enumerate}
	\item $x_0 \gets x$, $v_0 \sim \Nor(0, \id)$
	\item For $0 \le k < K$:
	\begin{enumerate}
		\item $v_{k + \half} \gets v_k - \frac \eta 2 \nabla f(x_k)$
		\item $x_{k + 1} \gets x_k + \eta v_{k + \half}$
		\item $v_{k + 1} \gets v_k - \frac \eta 2 \nabla f(x_{k + 1})$
	\end{enumerate}
	\item Return $x_K$
\end{enumerate}

Each loop of step 2 is known in the literature as a ``leapfrog'' step, and is a discretization of Hamilton's equations for the Hamiltonian function $\ham(x, v) \defeq f(x) + \half \norm{v}_2^2$; for additional background, we refer the reader to \cite{ChenDWY19}. This discretization is well-known to have reversible transition probabilities (i.e.\ the transition density is the same if the endpoints are swapped) because it satisfies a property known as \emph{symplecticity}. Moreover, the Markov chain has stationary density on the expanded space $(x, v) \in \R^d \times \R^d$ proportional to $\exp(-\ham(x, v))$. Correspondingly, the Metropolized HMC Markov chain performs the above algorithm from a point $x$, and accepts with probability
\begin{equation}\label{eq:hmcaccept}\min\Brace{1, \frac{\exp\Par{-\ham(x_K, v_K)}}{\exp\Par{-\ham(x_0, v_0)}}}.\end{equation}

\section{Lower bound for MALA on Gaussians}\label{sec:gaussian-mala}

In this section, we derive a upper bound on the spectral gap of MALA when the target distribution is restricted to being a multivariate Gaussian (i.e.\ its negative log-density is a quadratic in some well-conditioned matrix $\ma$). Throughout this section we will let $f(x) = \half x^\top \ma x$ for some $\id \preceq \ma \preceq \kappa\id$. We remark here that without loss of generality, we have assumed that the minimizer of $f$ is the all-zeros vector and the strong convexity parameter is $\mu = 1$. These follow from invariance of condition number under linear translations and scalings of the variable.

Next, we define a specific hard quadratic function we will consider in this section, $\fhq: \R^d \rightarrow \R$. Specifically, $\fhq$ will be a quadratic in a diagonal matrix $\ma$ which has $\ma_{11} = 1$ and $\ma_{ii} = \kappa$ for $2 \le i \le d$. We can rewrite this as
\begin{equation}\label{eq:fhqdef}\fhq(x) \defeq \sum_{i \in [d]} f_i(x_i), \text{ where } f_i(c) = \begin{cases} \half c^2 & i = 1 \\ \frac{\kappa}{2}c^2 & 2 \le i \le d\end{cases}.\end{equation}
Notice that $\fhq$ is coordinate-wise separable, and behaves identically on coordinates $2 \le i \le d$ (and differently on coordinate $1$). To this end for a vector $v \in \R^d$, we will denote its first coordinate by $v_1 \in \R$, and its remaining coordinates by $v_{-1} \in \R^{d - 1}$. This will help us analyze the behavior of these components separately, and simplify notation.

We next show that for coordinate-separable functions with well-behaved first coordinate, such as our $\fhq$, the spectral gap (defined in \eqref{eq:gapdef}) of the MALA Markov chain is governed by the step size $h$. The following is an extension of an analogous proof in \cite{ChewiLACGR20}.

\begin{lemma}
	\label{lem:specgap}
	Consider the MALA Markov chain \eqref{eq:maladef}, with stationary distribution $\pis$ with negative log-density $f$. Suppose $f$ is coordinate-wise separable (i.e.\ $f(x) = \sum_{i \in [d]} f_i(x_i)$). If $f(x) = f(-x)$ for all $x \in \R^d$, $f_1$ is $O(1)$-smooth, and $\E_{x_1 \sim \exp(-f_1)} [x_1^2] = \Theta(1)$, the spectral gap \eqref{eq:gapdef} is $O(h + h^2)$.
\end{lemma}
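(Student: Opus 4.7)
The natural approach is to exhibit a specific test function $g$ whose ratio $\dir(g,g)/\Var_{\pis}[g]$ is $O(h+h^2)$, since the spectral gap is an infimum over such ratios. Given the coordinate-wise separable structure and the fact that $f_1$ is distinguished by being $O(1)$-smooth while the other $f_i$ may scale with $\kappa$, the obvious choice is the linear functional $g(x) \defeq x_1$. Under $\pis$, the marginal of $x_1$ is exactly $\propto \exp(-f_1)$ by separability, so $\E_{\pis}[x_1^2] = \Theta(1)$ by assumption. The symmetry $f(x) = f(-x)$ forces the marginal to be symmetric about $0$, so $\E_{\pis}[x_1] = 0$ and therefore $\Var_{\pis}[g] = \Theta(1)$.

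The main computation is the upper bound on $\dir(g,g)$. I would start from the identity $\dir(g,g) = \tfrac12 \iint (g(x)-g(y))^2 \tran_x(y) \, d\pis(x) dy$ and decompose $\tran_x$ into its proposal-and-accept part plus the rejection Dirac at $x$; the Dirac contributes zero since $g(x) - g(x) = 0$. Bounding the acceptance probability by $1$ gives
\[
\dir(g,g) \;\le\; \tfrac12 \, \E_{x\sim\pis} \, \E_{\xi \sim \Nor(0,\id)} \Brack{\bigl(x_1 - (x_1 - h f_1'(x_1) + \sqrt{2h}\,\xi_1)\bigr)^2}.
\]
Because the proposal drift uses $\nabla f(x)$ and $f$ is separable, only the first coordinate of $\xi$ and only $f_1'(x_1)$ appear. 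Expanding the square and using $\E[\xi_1] = 0$, $\E[\xi_1^2] = 1$, this simplifies to $\tfrac12 (h^2 \, \E_{\pis}[(f_1'(x_1))^2] + 2h)$.

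It then remains to show $\E_{\pis}[(f_1'(x_1))^2] = O(1)$. Symmetry of $f_1$ gives $f_1'(0) = 0$, and $O(1)$-smoothness gives $|f_1'(c)| \le O(1) \cdot |c|$ for all $c \in \R$, so $\E[(f_1'(x_1))^2] \le O(1) \cdot \E[x_1^2] = O(1)$. Combining,
\[
\lam \;\le\; \frac{\dir(g,g)}{\Var_{\pis}[g]} \;\le\; \frac{O(h^2) + O(h)}{\Theta(1)} \;=\; O(h+h^2),
\]
as desired.

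There is no significant obstacle: this is a one-line test function argument, and the conditions of the lemma are tailored exactly so that (i) separability isolates the behaviour of coordinate $1$ in both the proposal drift and the variance computation, (ii) symmetry yields vanishing mean and $f_1'(0)=0$, and (iii) smoothness of $f_1$ bounds the drift term. The only mild point of care is justifying the passage from the Metropolis kernel to the bare proposal kernel via the $\le 1$ bound on the accept probability, which is clean because the rejection part of $\tran_x$ contributes nothing to the Dirichlet form for any test function.
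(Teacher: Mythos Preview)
Your proof is correct and follows essentially the same approach as the paper: both choose the test function $g(x)=x_1$, bound the Dirichlet form by passing to the proposal kernel, and control $\E[(f_1'(x_1))^2]$ via $f_1'(0)=0$ and smoothness. The only cosmetic difference is that you expand the square exactly (using $\E[\xi_1]=0$ to kill the cross term) whereas the paper applies $(a+b)^2\le 2a^2+2b^2$; your version is marginally tighter but the conclusion is identical.
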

\begin{proof}
	Recalling the definition \eqref{eq:gapdef}, we choose $g(x) = x_1$; note that by symmetry of $f$ around the origin, we have $\E_{\pis}[g] = 0$, and thus by our assumption, 
	\[\Var_{\pis}[g] = \E_{x \sim \pis}[x_1^2] = \Theta(1).\]
	Here we used that $\pis$ is a product distribution. Thus it suffices to upper bound $\dir(g, g)$:
	\begin{align*}
	\dir(g, g) &= \half\iint (x_1 - y_1)^2 \tran_x(y) d\pis(x) dy \\
	&\le \half\iint (x_1 - y_1)^2 \prop_x(y) d\pis(x)dy \\
	&= \half \E_{x \sim \pis, \xi \sim \Nor(0, 1)} \Brack{\Par{hf'_1(x_1) - \sqrt{2h}\xi}^2} \\
	&\le \E_{x \sim \pis} \Brack{h^2\Par{f'_1(x_1)}^2} + 2\E_{\xi \sim \Nor(0, 1)}\Brack{h\xi^2} \\
	&\le O(h^2)\E_{x \sim \pis}\Brack{x_1^2} + 2h = O\Par{h + h^2}.
	\end{align*}
	In the second line, we used that whenever the Markov chain rejects the distribution both terms are zero; in the third, we used the definition of the MALA proposals; in the fourth, we used $(a + b)^2 \le 2a^2 + 2b^2$ for $a, b \in \R$. Finally, the last line used that symmetry implies that the minimizer of $f$ is the origin, so applying Lipschitzness and $f_1'(0) = 0$ yields the desired bound.
\end{proof}
This immediately implies a spectral gap bound on our hard function $\fhq$.

\begin{corollary}\label{corr:malaspectralgap}
	The spectral gap of the MALA Markov chain for sampling from the density proportional to $\exp(-\fhq)$, where $\fhq$ is defined in \eqref{eq:fhqdef}, is $O(h + h^2)$.
\end{corollary}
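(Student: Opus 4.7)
The plan is to observe that this corollary is essentially an immediate instantiation of Lemma~\ref{lem:specgap} applied to the specific function $\fhq$ defined in \eqref{eq:fhqdef}, so the entire proof will consist of verifying the four hypotheses required by that lemma.

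First, I would note coordinate-wise separability, which is manifest from the definition $\fhq(x) = \sum_{i \in [d]} f_i(x_i)$. Second, I would verify the symmetry condition $\fhq(x) = \fhq(-x)$: since each $f_i(c)$ is a pure quadratic in $c$ (either $\tfrac{1}{2}c^2$ or $\tfrac{\kappa}{2}c^2$), it is even in its argument, so the sum is even in $x$. Third, I would check that $f_1$ is $O(1)$-smooth: we have $f_1(c) = \tfrac{1}{2}c^2$, so $f_1''(c) = 1$, giving smoothness constant $1$. Finally, I would compute $\E_{x_1 \sim \exp(-f_1)}[x_1^2]$: the marginal $\exp(-f_1)/Z_1$ is exactly the standard one-dimensional Gaussian $\Nor(0,1)$, whose second moment is $1 = \Theta(1)$.

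Having verified all four hypotheses, Lemma~\ref{lem:specgap} yields the spectral gap bound $O(h + h^2)$ for MALA with target $\propto \exp(-\fhq)$, which is exactly the conclusion of the corollary. There is no real obstacle here; the work was done in establishing Lemma~\ref{lem:specgap}, and the role of the corollary is simply to register that $\fhq$ meets its hypotheses so that the gap bound can be invoked in the subsequent proof of Theorem~\ref{thm:malagaussianlb}.
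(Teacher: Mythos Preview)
Your proposal is correct and matches the paper's approach exactly: the paper states that the corollary ``immediately'' follows from Lemma~\ref{lem:specgap}, and your verification of the four hypotheses (separability, evenness, $f_1'' = 1$, and the standard Gaussian second moment) is precisely what is needed to justify that invocation.
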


It remains to give a lower bound on the step size $h$, which we accomplish by upper bounding the acceptance probability of MALA. We will give a step size analysis for a fairly general characterization of Markov chains, where the proposal distribution from a point $x$ is
\begin{equation}\label{eq:alphabetadef}
\begin{aligned}
y = \begin{pmatrix} y_1 \\ y_{-1} \end{pmatrix}, \text{ where } y_1 &= (1 - \ao) \xo + \bo \go \\
\text{and } y_{-1} &= (1 - \ano) \xno + \bno \gno, \text{ for } g \sim \Nor(0, \id).
\end{aligned}
\end{equation}
To be concrete, recall that the proposal distribution for MALA \eqref{eq:maladef} is given by $y = x - h\ma x + \sqrt{2h} g$. For the $\ma$ used in defining $\fhq$, this is of the form \eqref{eq:alphabetadef} with the specific parameters
\[\ao = h,\; \ano = h\kappa,\; \bo = \bno = \sqrt{2h}.\]
However, this more general characterization will save significant amounts of recalculation when analyzing updates of the HMC Markov chain in Section~\ref{sec:hmc}. Recalling the formula \eqref{eq:maladef}, we first give a closed form for the acceptance probability.

\begin{lemma}\label{lem:gaussianaccept}
For $f(x) = \half x^\top \ma x$, we have
\[f(x) - f(y) + \frac{1}{4h}\Par{\norm{y - (x - h\nabla f(x))}_2^2 - \norm{x - (y - h\nabla f(y))}_2^2} = \frac{h}{4}\norm{x}^2_{\ma^2} - \frac{h}{4}\norm{y}_{\ma^2}^2.\]
Supposing $y$ is of the form in \eqref{eq:alphabetadef} and $\ma$ is as in \eqref{eq:fhqdef}, we have
\begin{align*}\frac{h}{4}\norm{x}^2_{\ma^2} - \frac{h}{4}\norm{y}_{\ma^2}^2 &= \frac{h}{4}\Par{\Par{2\ao - \ao^2}\xo^2 - \bo^2\go^2 - 2(1-\ao)\bo\xo\go} \\
&+ \frac{h\kappa^2}{4}\Par{\Par{2\ano - \ano^2}\norm{\xno}_2^2 - \bno^2\norm{\gno}_2^2 - 2(1-\ano)\bno\inprod{\xno}{\gno}}.\end{align*}
\end{lemma}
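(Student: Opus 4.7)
The proof is essentially a direct algebraic verification in two stages, one for each displayed identity.

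For the first identity, the plan is to exploit that $\nabla f(x) = \ma x$ since $f$ is quadratic and $\ma$ is symmetric. I would expand
\begin{align*}
\norm{y - (x - h\ma x)}_2^2 &= \norm{y - x}_2^2 + 2h(y - x)^\top \ma x + h^2\norm{\ma x}_2^2, \\
\norm{x - (y - h\ma y)}_2^2 &= \norm{y - x}_2^2 - 2h(y - x)^\top \ma y + h^2\norm{\ma y}_2^2,
\end{align*}
and subtract. The $\norm{y - x}_2^2$ terms cancel, and using symmetry of $\ma$ the cross terms combine to $2h(y - x)^\top \ma(x + y) = 2h(y^\top \ma y - x^\top \ma x)$. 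Dividing by $4h$ and adding $f(x) - f(y) = \tfrac{1}{2}(x^\top \ma x - y^\top \ma y)$ cancels out the pure quadratic contribution, leaving exactly $\tfrac{h}{4}(\norm{x}_{\ma^2}^2 - \norm{y}_{\ma^2}^2)$.

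For the second identity, I would use that $\ma = \diag{1, \kappa, \ldots, \kappa}$ implies $\ma^2 = \diag{1, \kappa^2, \ldots, \kappa^2}$, so
\[\norm{x}_{\ma^2}^2 - \norm{y}_{\ma^2}^2 = (\xo^2 - y_1^2) + \kappa^2\Par{\norm{\xno}_2^2 - \norm{y_{-1}}_2^2}.\]
Substituting $y_1 = (1 - \ao)\xo + \bo \go$ gives
\[\xo^2 - y_1^2 = (1 - (1 - \ao)^2)\xo^2 - 2(1 - \ao)\bo \xo \go - \bo^2 \go^2 = (2\ao - \ao^2)\xo^2 - 2(1 - \ao)\bo \xo \go - \bo^2 \go^2,\]
and analogously expanding $\norm{\xno}_2^2 - \norm{y_{-1}}_2^2$ using $y_{-1} = (1 - \ano)\xno + \bno \gno$ produces the matching $(d-1)$-dimensional expression with inner product $\inprod{\xno}{\gno}$ replacing $\xo \go$. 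Multiplying each block by the appropriate prefactor ($\tfrac{h}{4}$ and $\tfrac{h\kappa^2}{4}$) gives the stated formula.

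Both steps are routine bookkeeping; the only place requiring any care is the first identity, where one must correctly track the signs when combining the cross terms from the two squared norms before recognizing the cancellation with $f(x) - f(y)$. No structural obstacle arises beyond that.
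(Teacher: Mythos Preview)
Your proposal is correct and follows essentially the same approach as the paper: both proofs expand the two squared norms using $\nabla f = \ma x$, observe that the $\norm{y-x}_2^2$ terms cancel and the cross terms combine with $f(x)-f(y)$ to vanish, leaving $\tfrac{h}{4}(\norm{x}_{\ma^2}^2 - \norm{y}_{\ma^2}^2)$, and then substitute the block form of $y$ from \eqref{eq:alphabetadef} coordinatewise using the diagonal structure of $\ma$. The only difference is cosmetic presentation of the first expansion.
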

\begin{proof}
This is a direct computation which we perform here for completeness: the given quantity is
\begin{gather*}
\half \norm{x}_{\ma}^2 - \half \norm{y}_{\ma}^2 + \frac{1}{4h}\Par{\norm{y - x + h\ma x}_2^2 - \norm{x - y + h\ma y}_2^2} \\
= \half \norm{x}_{\ma}^2 - \half\norm{y}_{\ma}^2 + \half \inprod{y - x}{\ma x} + \frac h 4 \norm{x}_{\ma^2}^2 - \half \inprod{x - y}{\ma y} - \frac h 4 \norm{y}_{\ma^2}^2 = \frac{h}{4}\norm{x}^2_{\ma^2} - \frac{h}{4}\norm{y}_{\ma^2}^2.
\end{gather*}
The second equality follows from expanding the definition of $y$:
\begin{align*}
\norm{x}_{\ma^2}^2 - \norm{y}_{\ma^2}^2 &= \xo^2 - \Par{\Par{1-\ao}\xo + \bo\go}^2 + \kappa^2\Par{\norm{\xno}_2^2 - \norm{\Par{1 - \ano}\xno + \bno \gno}_2^2} \\
&= \Par{2\ao - \ao^2}\xo^2 - \bo^2\go^2 - 2(1-\ao)\bo\xo\go \\
&+ \kappa^2\Par{\Par{2\ano - \ano^2}\norm{\xno}_2^2 - \bno^2\norm{\gno}_2^2 - 2(1-\ano)\bno\inprod{\xno}{\gno}}.
\end{align*}
\end{proof}

\begin{corollary}\label{corr:gaussianexpectation}
For any fixed $x \in \R^d$, and supposing $y$ is of the form in \eqref{eq:alphabetadef} and $\ma$ is as in \eqref{eq:fhqdef},
\begin{gather*}\E_{g \sim \Nor(0, \id)} \Brack{f(x) - f(y) + \frac{1}{4h}\Par{\norm{y - (x - h\nabla f(x))}_2^2 - \norm{x - (y - h\nabla f(y))}_2^2}} \\
= \frac{h}{4}\Par{\Par{2\ao - \ao^2}\xo^2  - \bo^2} + \frac{h\kappa^2}{4}\Par{\Par{2\ano - \ano^2}\norm{\xno}_2^2 - \bno^2(d - 1)}. \end{gather*}
\end{corollary}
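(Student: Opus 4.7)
The plan is to apply Lemma \ref{lem:gaussianaccept} and then take the expectation over $g \sim \Nor(0, \id)$ term by term, using standard moment identities for the standard Gaussian.

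First I would invoke Lemma \ref{lem:gaussianaccept} to rewrite the quantity inside the expectation in the closed form
\[\frac{h}{4}\Par{\Par{2\ao - \ao^2}\xo^2 - \bo^2\go^2 - 2(1-\ao)\bo\xo\go} + \frac{h\kappa^2}{4}\Par{\Par{2\ano - \ano^2}\norm{\xno}_2^2 - \bno^2\norm{\gno}_2^2 - 2(1-\ano)\bno\inprod{\xno}{\gno}}.\]
Since $x$ is fixed and the only randomness is in $g$, by linearity of expectation it suffices to compute the expectations of the $g$-dependent pieces separately.

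Next I would note that because $g \sim \Nor(0, \id)$ decomposes as $(\go, \gno)$ with $\go \sim \Nor(0, 1)$ independent of $\gno \sim \Nor(0, \id_{d-1})$, we have $\E[\go] = 0$ and $\E[\gno] = 0$, so the cross terms $-2(1-\ao)\bo\xo\go$ and $-2(1-\ano)\bno\inprod{\xno}{\gno}$ vanish in expectation (here $\xno$ is a fixed vector, so $\inprod{\xno}{\gno}$ is a mean-zero Gaussian scalar). For the quadratic terms in $g$, we have $\E[\go^2] = 1$ and $\E[\norm{\gno}_2^2] = d-1$, contributing $-\bo^2$ and $-\bno^2(d-1)$ respectively (scaled by the appropriate prefactors). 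The deterministic pieces $\Par{2\ao - \ao^2}\xo^2$ and $\Par{2\ano - \ano^2}\norm{\xno}_2^2$ pass through the expectation unchanged.

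Collecting these four contributions yields the claimed expression. There is no real obstacle here — the statement is essentially an immediate bookkeeping consequence of Lemma \ref{lem:gaussianaccept} combined with the first two moments of a standard Gaussian — so the proof will be a short one-paragraph calculation.
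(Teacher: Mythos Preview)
Your proposal is correct and matches the paper's own proof essentially verbatim: the paper also invokes Lemma~\ref{lem:gaussianaccept}, then uses independence of $g$ and $x$ together with $\E[gg^\top] = \id$ (equivalently $\E[\go^2]=1$, $\E[\norm{\gno}_2^2]=d-1$, and vanishing first moments) to evaluate the expectation.
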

\begin{proof}
This follows from Lemma~\ref{lem:gaussianaccept}, independence of $g$ and $x$, and linearity of trace and expectation applied on squared coordinates of $g$, where we recognize $\E_{g \sim \Nor(0, \id)}[gg^\top] = \id$.
\end{proof}
Next, for a fixed $x$, consider the random variables $R^x_i$:
\[R^x_i = \begin{cases}
\frac{h}{4}\Par{\Par{2\ao-\ao^2}\xo^2 - \bo^2 \go^2 - 2(1 - \ao)\bo\xo\go}& i = 1 \\
\frac{h\kappa^2}{4}\Par{\Par{2\ano - \ano^2}x_i^2 - \bno^2 g_i^2 - 2(1-\ano)\bno x_i g_i} & 2 \le i \le d
\end{cases}\]
where $g \sim \Nor(0, \id)$ is a standard Gaussian random vector. Notice that for a given realization of $g$, we have by Lemma~\ref{lem:gaussianaccept} that
\begin{equation}\label{eq:rixsum}\sum_{i \in [d]} R_i^x = \frac{h}{4}\norm{x}_{\ma^2} - \frac{h}{4}\norm{y}_{\ma^2}.\end{equation}
We computed the expectation of $\sum_{i \in [d]} R_i^x$ in Corollary~\ref{corr:gaussianexpectation}. We next give a high-probability bound on the deviation of $\sum_{i \in [d]} R_i^x$ from its expectation.
\begin{lemma}\label{lem:gaussiandeviation}
With probability at least $1 - \delta$ over the randomness of $g \sim \Nor(0, \id)$,
\begin{align*}\sum_{i \in [d]} R_i^x - \E_{g \sim \Nor(0, \id)}\Brack{\sum_{i \in [d]} R_i^x} &\le 2h|\ao - 1|\bo |\xo|\sqrt{\log\Par{\frac 4 \delta}} + h\bo^2 \sqrt{\log\Par{\frac 4 \delta}} \\
&+ 2h\kappa^2|\ano - 1|\bno\norm{\xno}_2\sqrt{\log\Par{\frac 4 \delta}} + h\kappa^2\bno^2\sqrt{d\log\Par{\frac 4 \delta}}.\end{align*}
\end{lemma}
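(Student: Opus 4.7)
The plan is to subtract the expectation out of each $R_i^x$ and split the resulting deviation into four pieces that can be handled independently: one quadratic and one linear Gaussian term from the first coordinate, and one quadratic and one linear Gaussian term from the remaining $d-1$ coordinates. Concretely, using the expectation computed in Corollary~\ref{corr:gaussianexpectation}, one can write
\begin{align*}
\sum_{i \in [d]} R_i^x - \E\Brack{\sum_{i \in [d]} R_i^x} &= -\frac{h\bo^2}{4}\Par{g_1^2 - 1} - \frac{h(1-\ao)\bo\xo}{2} g_1 \\
&\quad - \frac{h\kappa^2\bno^2}{4}\sum_{i=2}^d \Par{g_i^2 - 1} - \frac{h\kappa^2(1-\ano)\bno}{2}\sum_{i=2}^d x_i g_i.
\end{align*}
The goal is to upper bound each of these four terms by the corresponding summand in the claimed inequality, each with failure probability at most $\delta/4$, and finish by a union bound.

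For the two linear Gaussian terms, I would observe that $g_1 \sim \Nor(0, 1)$ and, conditional on the fixed $x$, the inner product $\sum_{i \ge 2} x_i g_i \sim \Nor(0, \norm{\xno}_2^2)$. Applying Mill's inequality (Fact~\ref{fact:mills}) in the two-sided form $\Pr[|Z|>t\sigma] \le 2\exp(-t^2/2)$, with $t = 2\sqrt{\log(4/\delta)}$, gives the bounds
\[-\frac{h(1-\ao)\bo\xo}{2} g_1 \le 2h|\ao-1|\bo|\xo|\sqrt{\log(4/\delta)}, \quad -\frac{h\kappa^2(1-\ano)\bno}{2}\sum_{i=2}^d x_i g_i \le 2h\kappa^2|\ano - 1|\bno\norm{\xno}_2 \sqrt{\log(4/\delta)}\]
with plenty of slack (the tail bound actually yields a $\sqrt{2\log(4/\delta)}/2$ factor, which is within the stated $2\sqrt{\log(4/\delta)}$).

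For the two quadratic terms, I would apply the lower-tail inequality in Fact~\ref{fact:chisquared} with $a = (1,\ldots,1)$. For the single-coordinate term, using $n=1$ and $t = \log(4/\delta)$ yields $-(g_1^2 - 1) \le 2\sqrt{\log(4/\delta)}$ with probability $1 - \delta/4$, so the term is at most $\frac{h\bo^2}{2}\sqrt{\log(4/\delta)} \le h\bo^2\sqrt{\log(4/\delta)}$. For the $(d-1)$-dimensional sum, with $\norm{a}_2 = \sqrt{d-1}$ and $t = \log(4/\delta)$, the fact gives $-\sum_{i=2}^d(g_i^2 - 1) \le 2\sqrt{(d-1)\log(4/\delta)}$ with probability $1 - \delta/4$, so this term is at most $\frac{h\kappa^2\bno^2}{2}\sqrt{(d-1)\log(4/\delta)} \le h\kappa^2\bno^2\sqrt{d\log(4/\delta)}$.

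A union bound over the four failure events yields the lemma. I don't anticipate a real obstacle: the four random contributions are natural Gaussian and $\chi^2$ quantities and are handled directly by the concentration facts already recorded in Section~\ref{ssec:concentration}. The only bookkeeping care needed is to track the sign conventions (since we are upper bounding a signed quantity, the $\chi^2$ \emph{lower} tail is the relevant one) and to verify that the constant factors comfortably fit inside the slack of $2\sqrt{\log(4/\delta)}$ versus the tighter $\sqrt{2\log(4/\delta)}$ or $2\sqrt{\log(4/\delta)}$ that the base inequalities produce.
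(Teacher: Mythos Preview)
Your proposal is correct and follows essentially the same approach as the paper: decompose the deviation into the four terms $S^{(1)}_1, S^{(2)}_1, S^{(1)}_{-1}, S^{(2)}_{-1}$, bound the linear-in-$g$ pieces via Mill's inequality and the quadratic $\chi^2$ pieces via the lower tail in Fact~\ref{fact:chisquared}, and union bound. The only cosmetic differences are the precise choices of the threshold constants, which both versions handle with slack.
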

\begin{proof}
In defining $\{R_i^x\}_{i \in [d]}$, the terms involving $\{x_i^2\}_{i \in [d]}$ are deterministic. Thus, we need to upper bound the deviations of the remaining terms, namely
\begin{gather*}
S^{(1)}_1 \defeq \frac{h}{2}(\ao - 1) \bo \xo \go,\; S^{(2)}_1 \defeq \frac{h\bo^2}{4}\Par{1 - \go^2},\\
S^{(1)}_{-1} \defeq \frac{h\kappa^2}{2}(\ano - 1) \bno \sum_{2 \le i \le d} x_i g_i,\; S^{(2)}_{-1} \defeq  \frac{h\kappa^2\bno^2}{4} \sum_{2 \le i \le d} \Par{1 - g_i^2}.
\end{gather*}
To motivate these definitions, $S^{(1)}_1 + S^{(2)}_1 + S^{(1)}_{-1} + S^{(2)}_{-1}$ is the left hand side of the display in the lemma statement. We begin with $S^{(1)}_{-1}$. Notice that this is a one-dimensional Gaussian random variable distributed as
\[\Nor\Par{0, \sigma_1^2} \text{ where } \sigma_1 \defeq \frac{h\kappa^2}{2}|\ano - 1|\bno \norm{\xno}_2.\]
Thus, applying Mill's inequality yields
\begin{align*}
\Pr\Brack{S_{-1}^{(1)} > t} \le \sqrt{\frac 2 \pi} \frac{\sigma_1}{t} \exp\Par{-\frac{t^2}{2\sigma_1^2}} \le \frac \delta 4,\text{ for } t = 4\sigma_1\sqrt{\log\Par{\frac 4 \delta}}.
\end{align*}
Next, to bound the term $S_{-1}^{(2)}$, define 
\[\sigma_2 \defeq \frac{h\kappa^2\bno^2}{4}\sqrt{d-1}.\]
Standard $\chi^2$ concentration results (Fact~\ref{fact:chisquared}) then yield
\[\Pr\Brack{S_{-1}^{(2)} > t} \le \exp\Par{-\frac{t^2}{4\sigma_2^2}} \le \frac{\delta}{4},\text{ for } t = 2\sigma_2\sqrt{\log\Par{\frac 4 \delta}}.\]
Similar bounds follow for $S_1^{(1)}$ and $S_1^{(2)}$, whose computations we omit for brevity. Taking a union bound over these four terms yields the desired claim.
\end{proof}

Finally, we have a complete characterization of a bad set $\Omega \subset \R^d$ where, with high probability over the proposal distribution, the acceptance probability is extremely small.

\begin{proposition}\label{prop:gaussianbadx}
Let $x \in \R^d$ satisfy $\norm{\xno}_2 \le \sqrt{\frac{2d}{3\kappa}}$ and $|\xo| \le 5\sqrt{\log d}$, and suppose $y$ is of the form in \eqref{eq:alphabetadef} and $\ma$ is as in \eqref{eq:fhqdef}. Also suppose that
\[|\ano| \le \frac{3}{5}\bno^2\kappa,\; \bno = \omega\Par{\sqrt{\frac{\log d}{\kappa d}}},\; |\ao| = O(|\ano|),\; \bo = O(\bno).\]
Then with probability at least $1 - d^{-5}$ over the randomness of $g \sim \Nor(0, \id)$, we have
\[\frac{h}{4}\norm{x}_{\ma^2} - \frac{h}{4}\norm{y}_{\ma^2} = -\Omega\Par{h \kappa^2 \bno^2 d}.\]
\end{proposition}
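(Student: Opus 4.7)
The plan is to combine Lemma~\ref{lem:gaussianaccept}, Corollary~\ref{corr:gaussianexpectation}, and Lemma~\ref{lem:gaussiandeviation} to get a high-probability upper bound on $\sum_{i\in[d]} R_i^x$, which by \eqref{eq:rixsum} equals the quantity we want to bound. The strategy is to show the expectation is at most $-\Omega(h\kappa^2\bno^2 d)$, and then verify that the deviation at failure probability $\delta = d^{-5}$ is of strictly smaller order.

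For the expectation, I would plug into Corollary~\ref{corr:gaussianexpectation} and isolate the term $-\frac{h\kappa^2}{4}\bno^2(d-1)$, showing it dominates every other contribution. The most delicate positive contribution is $\frac{h\kappa^2}{4}(2\ano-\ano^2)\|\xno\|_2^2$: using $|\ano|\le\frac{3}{5}\bno^2\kappa$ and $\|\xno\|_2^2 \le \frac{2d}{3\kappa}$ gives a bound of at most $\frac{h\kappa^2\bno^2 d}{5}$, which leaves the sum of the $-1$-coordinate terms at most $-\frac{h\kappa^2\bno^2(d-1)}{4}+\frac{h\kappa^2\bno^2 d}{5}\le -\frac{h\kappa^2\bno^2 d}{50}$ for $d$ sufficiently large. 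The first-coordinate expectation $\frac{h}{4}((2\ao-\ao^2)\xo^2 - \bo^2)$ is controlled by $|\ao|=O(\bno^2\kappa)$, $\bo=O(\bno)$, and $\xo^2=O(\log d)$, so it is $O(h\bno^2\kappa\log d + h\bno^2) = o(h\kappa^2\bno^2 d)$ since $\kappa d/\log d\to\infty$. Hence $\E[\sum_i R_i^x] \le -\Omega(h\kappa^2\bno^2 d)$.

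For the deviation, I would set $\delta = d^{-5}$ in Lemma~\ref{lem:gaussiandeviation}, so $\sqrt{\log(4/\delta)} = O(\sqrt{\log d})$. The four terms are then (up to constants):
\begin{enumerate}
\item $h|\ao-1|\bo|\xo|\sqrt{\log d} = O(h\bno\log d)+O(h\bno^3\kappa\log d)$,
\item $h\bo^2\sqrt{\log d} = O(h\bno^2\sqrt{\log d})$,
\item $h\kappa^2|\ano-1|\bno\|\xno\|_2\sqrt{\log d} = O(h\kappa^{3/2}\bno\sqrt{d\log d})+O(h\kappa^{5/2}\bno^3\sqrt{d\log d})$,
\item $h\kappa^2\bno^2\sqrt{d\log d}$.
\end{enumerate}
Dividing each by the target magnitude $h\kappa^2\bno^2 d$, terms (2) and (4) are obviously $o(1)$, while the critical check is for (1) and (3): the key ratio is $\sqrt{\log d}/(\bno\sqrt{\kappa d})$ (times bounded multiples of $1$ or $\bno^2\kappa$, both of which stay $O(1)$ under the parameter regime of interest), which is $o(1)$ precisely because of the assumption $\bno = \omega(\sqrt{\log d/(\kappa d)})$.

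The main obstacle is the careful bookkeeping in step three above, particularly ensuring the cross-term contribution in coordinate $-1$ is controlled: this is where the $\bno=\omega(\sqrt{\log d/(\kappa d)})$ hypothesis is essential, and it requires tracking $|\ano-1|$ together with $\|\xno\|_2$ at the right scale. Assuming these inequalities go through, a union bound with $\delta=d^{-5}$ together with \eqref{eq:rixsum} yields $\frac{h}{4}\|x\|_{\ma^2}^2 - \frac{h}{4}\|y\|_{\ma^2}^2 \le \E[\sum_i R_i^x] + o(h\kappa^2\bno^2 d) = -\Omega(h\kappa^2\bno^2 d)$ with probability at least $1-d^{-5}$, completing the proof.
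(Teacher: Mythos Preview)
Your overall strategy---combine Corollary~\ref{corr:gaussianexpectation} and Lemma~\ref{lem:gaussiandeviation}, show the expectation dominates---is the same as the paper's. But there is a real gap in your deviation analysis: you assert that the multiplier $\bno^2\kappa$ ``stays $O(1)$ under the parameter regime of interest,'' yet the proposition's hypotheses contain \emph{no} upper bound on $\bno$ or $\ano$. In the actual application to MALA (Theorem~\ref{thm:malagaussianlb}) one has $\ano = h\kappa$ with $h$ allowed to be arbitrarily large, so $\ano$ (and hence $\bno^2\kappa \ge \tfrac{5}{3}|\ano|$) is unbounded. Concretely, your term (3) contains the piece $O(h\kappa^{5/2}\bno^3\sqrt{d\log d})$; divided by the target magnitude $h\kappa^2\bno^2 d$ this ratio is $\bno\sqrt{\kappa\log d / d}$, which is not $o(1)$ without an upper bound on $\bno$. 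The same unjustified assumption appears in your handling of term (1).

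The paper closes this gap by \emph{not} discarding the $-\tfrac{h\kappa^2}{4}\ano^2\norm{\xno}_2^2$ part of the expectation (you threw it away when you upper-bounded $2\ano - \ano^2$ by $2|\ano|$) and instead doing casework on $|\ano|$. When $|\ano| \le 3$, your argument essentially works: $|\ano - 1| \le 4$, and the cross-term is directly $o(h\kappa^2\bno^2 d)$ using the lower bound on $\bno$. When $|\ano| > 3$, the cross-term $h\kappa^2|\ano - 1|\bno\norm{\xno}_2\sqrt{\log d}$ is first coarsened via $\sqrt{\log d} \ll \sqrt{d}$ to $\tfrac{1}{40}h\kappa^2|\ano|\bno\norm{\xno}_2\sqrt{d}$ and then split by Young's inequality into $\tfrac{1}{80}h\kappa^2\bno^2 d$ (absorbed by the main negative term) plus $\tfrac{1}{80}h\kappa^2\ano^2\norm{\xno}_2^2$ (absorbed by the retained negative quadratic). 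The residual $\tfrac{h\kappa^2}{4}(2\ano - 0.9\ano^2)\norm{\xno}_2^2$ is nonpositive precisely because $|\ano| > 3$. This casework, and specifically keeping the $-\ano^2$ term in reserve, is the missing ingredient in your argument.
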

\begin{proof}
We first handle terms involving $\xno$ and $\gno$. Combining \eqref{eq:rixsum}, Corollary~\ref{corr:gaussianexpectation}, and Lemma~\ref{lem:gaussiandeviation}, we have with probability at least $1 - \half d^{-5}$ over the randomness of $g \sim \Nor(0, \id)$ that $\norm{\xno}_{\ma_{-1}^2}^2 - \norm{y_{-1}}_{\ma_{-1}^2}^2$ (where $\ma_{-1}$ is the Hessian of $\fhq$ on the last $d - 1$ coordinates) is upper bounded by
\begin{equation}\label{eq:allterms}
\begin{gathered}
\frac{h\kappa^2}{4}\Par{\Par{2\ano - \ano^2}\norm{\xno}_2^2 - \bno^2(d - 1)} + 5h\kappa^2|\ano - 1|\bno\norm{\xno}_2\sqrt{\log d} + 3h\kappa^2\bno^2\sqrt{d\log d} \\
\le -\frac{h\kappa^2}{4.5}\bno^2 d + \frac{h\kappa^2}{4}(2\ano - \ano^2)\norm{\xno}_2^2 + 5h\kappa^2|\ano - 1|\bno\norm{\xno}_2\sqrt{\log d}.
\end{gathered}
\end{equation}
Here we dropped the last term in the first line by adjusting a constant since it is dominated for sufficiently large $d$. It remains to show that all the terms in the second line other than $-\frac{h\kappa^2}{4.5}\bno^2 d$ are bounded by $O(h\kappa^2\bno^2 d)$. We will perform casework on the size of $\ano$.

\textit{Case 1: $|\ano| > 3$.} In this case, we have for sufficiently large $d$, by Young's inequality
\begin{align*}
5h\kappa^2|\ano - 1|\bno\norm{\xno}_2\sqrt{\log d} &\le \frac{1}{40} h\kappa^2 |\ano|\bno\norm{\xno}_2\sqrt{d} \\
&\le \frac{1}{80} h\kappa^2\bno^2 d + \frac{1}{80} h\kappa^2 \ano^2\norm{\xno}_2^2.
\end{align*}
Plugging this bound into \eqref{eq:allterms}, we have the desired
\begin{align*}
\norm{\xno}_{\ma_{-1}^2}^2 - \norm{y_{-1}}_{\ma_{-1}^2}^2 \le -\frac{h\kappa^2}{5} \bno^2 d + \frac{h\kappa^2}{4}\Par{2\ano - 0.9\ano^2}\norm{\xno}_2^2 \le -\frac{h\kappa^2}{5} \bno^2 d.
\end{align*}
In the last inequality we used $2\ano - 0.9\ano^2 \le 0$ for $|\ano| > 3$.

\textit{Case 2: $|\ano| \le 3$.} In this case, we first observe by our assumed bounds on $\norm{\xno}_2$ and $\bno$,
\begin{align*}
5h\kappa^2|\ano - 1|\bno\norm{\xno}_2\sqrt{\log d} \le 20h\kappa^{1.5}\bno\sqrt{d\log d} = o\Par{h\kappa^2\bno^2 d}.
\end{align*}
Thus, substituting into \eqref{eq:allterms} and dropping the (nonpositive) term corresponding to $\ano^2$,
\begin{align*}\norm{\xno}_{\ma_{-1}^2}^2 - \norm{y_{-1}}_{\ma_{-1}^2}^2 &\le -\frac{h\kappa^2}{4.8}\bno^2 d + \frac{h\kappa^2}{2} \ano\norm{\xno}_2^2 \\
&\le -\frac{h\kappa^2}{4.8}\bno^2 d + \frac{h\kappa\ano d}{3} = -\Omega\Par{h\kappa^2\bno^2 d}. \end{align*}
In the second inequality, we used the assumed bound on $\norm{\xno}_2^2$, and in the last we used the bound $|\ano| \le \frac 3 5 \bno^2 \kappa$ to reach the conclusion. 

To complete the proof we need to show terms involving $\xo$ and $\go$ are small. In particular, combining \eqref{eq:rixsum}, Corollary~\ref{corr:gaussianexpectation}, and Lemma~\ref{lem:gaussiandeviation} and dropping nonnegative terms, it suffices to argue
\[\frac h 2\ao\xo^2 + 5h|\ao - 1|\bo|\xo|\sqrt{\log d} + 3h\bo^2 \sqrt{\log d} = o\Par{h\kappa^2\bno^2 d}.\]
 This bound clearly holds for the last term $h\bo^2 \sqrt{\log d}$ using $\bo = O(\bno)$. For the first term, it suffices to use our assumed bounds on $|\ao|$ and $x_1$. Finally, the middle term $5h|\ao - 1| \bo |\xo| \sqrt{\log d}$ is low-order compared to the term $5h\kappa^2 |\ano - 1|\bno \norm{\xno}_2\sqrt{\log d}$ which we argued about earlier, and hence does not affect any of our earlier bounds by more than a constant. The left-hand side of the above display is an upper bound of the first coordinate's contribution with probability at least $1 - \half d^{-5}$, so a union bound shows the proof succeeds with probability $\ge 1 - d^{-5}$.
\end{proof}

Finally, we are ready to give the main lower bound of this section.

\restatemalagaussianlb*
\begin{proof}
Let $\pis$ be the Gaussian with log-density $-\fhq$ \eqref{eq:fhqdef} throughout this proof. If $h = O\Par{\frac{\sqrt{\log d}}{\kappa \sqrt d}}$, then Corollary~\ref{corr:malaspectralgap} immediately implies the result, so for the remainder of the proof suppose 
\begin{equation}\label{eq:bighgaussian}h = \omega\Par{\frac{\sqrt{\log d}}{\kappa \sqrt d}}.\end{equation}

 We first recall that MALA Markov chains are an instance of \eqref{eq:alphabetadef} with
\[\ao = h,\; \ano = h\kappa,\; \bo = \bno = \sqrt{2h}.\]
It is easy to see that these parameters satisfy the assumptions in Proposition~\ref{prop:gaussianbadx}, for the given range of $h$. We define a ``bad starting set'' as follows:
\begin{equation}\label{eq:malaomegadef}\Omega \defeq \Brace{x \;\middle\vert \; \norm{\xno}_2^2 \le \frac{2d}{3\kappa},\; x_1^2 \le 25\log d}.\end{equation}
For any $x \in \Omega$, and $h$ satisfying \eqref{eq:bighgaussian}, Proposition~\ref{prop:gaussianbadx} is applicable, and by our definition of $\Omega$, any $x \in \Omega$ has proposals which will be accepted with probability
\[\exp\Par{-\Omega\Par{h\kappa^2\bno^2 d}} = \exp\Par{-\Omega(h^2\kappa^2 d)} \le \frac{1}{d^{10}}.\]
The conductance of the Markov chain \eqref{eq:conductance} is then at most $\frac{2}{d^{5}}$ by the witness set $\Omega$ and the failure probability of Proposition~\ref{prop:gaussianbadx}, which concludes the proof by Cheeger's inequality \cite{Cheeger69}, where we use the assumption that $\kappa = O(d^4)$.
\end{proof}

Finally, as it clarifies the required warmness to escape the bad set in the proof of Theorem~\ref{thm:malagaussianlb} (and is used in our mixing time bounds in Section~\ref{sec:mix}), we lower bound the measure of $\Omega$ according to $\pis$. Applying Lemma~\ref{lem:gaussiansmallball} shows with probability at least $\exp(-\frac{1}{12}d)$, $\norm{x_{-1}}_{2}^2 \le \frac{d}{2\kappa}$, and Fact~\ref{fact:mills} shows that $x_1^2 \le 25\log d$ with probability at least $\half$; combining shows that the measure is at least $\exp(-d)$. We required one helper technical fact, a small-ball probability bound for Gaussians.

\begin{lemma}\label{lem:gaussiansmallball}
	Let $v \sim \Nor(0, \id)$ be a random Gaussian vector in $n$ dimensions. For large enough $n$,
	\[\Pr\Brack{\norm{v}_2^2 \le \frac{n}{2}} \ge \exp\Par{-\frac{n}{12}}.\]
\end{lemma}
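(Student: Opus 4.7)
My plan is to exploit the fact that $\|v\|_2^2 \sim \chi_n^2$ and therefore has the explicit density $f(y) = y^{n/2-1} e^{-y/2}/(2^{n/2}\Gamma(n/2))$. The strategy is to lower bound $\Pr[\|v\|_2^2 \le n/2]$ by integrating the density over a subinterval of $[0, n/2]$ on which $f$ is largest, and then to simplify the resulting pointwise estimate via Stirling's approximation.

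First, I would use the well-known unimodality of the chi-squared density to identify the region where $f$ is large on $[0, n/2]$: the unique mode of $f$ sits at $y = n - 2$, so for $n \ge 6$ the density is monotonically increasing on $[n/2 - 1, n/2]$. This yields the pointwise reduction
\[\Pr\Brack{\|v\|_2^2 \le \tfrac{n}{2}} \ge \int_{n/2 - 1}^{n/2} f(y)\, dy \ge f(n/2 - 1),\]
so it suffices to lower bound $f$ at a single point near the upper endpoint $n/2$.

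Second, I would plug in Stirling's approximation $\Gamma(n/2) = \sqrt{4\pi/n}\,(n/(2e))^{n/2}(1+O(1/n))$ to evaluate $f(n/2 - 1)$. After cancellation, the $(n/2)^{n/2-1}$ factor in the numerator combines with the $(n/(2e))^{-n/2}$ factor from Stirling to produce $e^{n/2}/(n/2)$, so up to an $O(n^{-1/2})$ polynomial prefactor the density reduces to $\exp(n/4)/2^{n/2}$, with a bounded multiplicative correction from shifting $n/2$ to $n/2 - 1$. Absorbing the polynomial prefactor into the exponent (valid once $n$ is large enough), this gives the claimed lower bound $\exp(-n/12)$.

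The main obstacle is purely book-keeping: carefully tracking Stirling's multiplicative constants and the $(n/2 - 1)/(n/2)$ shift so that the final exponent works out to the advertised $1/12$. Conceptually the argument is a textbook Laplace / saddle-point estimate for the chi-squared density integrated over a short interval at the upper endpoint of the tail region, and no step requires more than elementary manipulation.
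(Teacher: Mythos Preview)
Your approach is different from the paper's and arguably more elementary: the paper appeals to a bound on the regularized lower incomplete gamma function from the DLMF, namely $P(k,ck)\ge(1-e^{-\ell ck})^{k}$ with $\ell=(\Gamma(k+1))^{-1/(k-1)}$, and then argues that the base exceeds $e^{-1/6}$. You instead bound the $\chi^2_n$ density pointwise near the upper endpoint $n/2$ and invoke Stirling directly. Your reduction to $f(n/2-1)$ is valid (the density is indeed increasing on $[0,n-2]$), and your Stirling computation is correct up to the point where you write the leading factor as $e^{n/4}/2^{n/2}$ times a polynomial in $n$.

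The gap is in the very last arithmetic step. The exponent you arrive at is
\[
\frac{n}{4}-\frac{n}{2}\ln 2 \;=\; -\,n\Bigl(\frac{\ln 2}{2}-\frac{1}{4}\Bigr)\;\approx\;-0.0966\,n,
\]
whereas $-n/12\approx -0.0833\,n$. Since $(\ln 2)/2-1/4>1/12$, your bound $\tfrac{C}{\sqrt n}\,e^{-0.0966n}$ is \emph{smaller} than $e^{-n/12}$ for large $n$, not larger; absorbing the $n^{-1/2}$ prefactor only goes the wrong way. In fact $(\ln 2)/2-1/4$ is the exact large-deviation rate for the event $\{\chi_n^2\le n/2\}$, so no argument can recover the constant $\tfrac{1}{12}$: the lemma as stated is off by a constant in the exponent. (The paper's own computation has a matching slip, since with $c=\tfrac12$ and $\ell\sim e/k$ one gets $1-e^{-\ell ck}\approx 1-e^{-e/2}\approx 0.743$, which is below $e^{-1/6}\approx 0.846$.) Your method correctly proves the statement with any constant $c>(\ln 2)/2-1/4$, e.g.\ $c=\tfrac{1}{10}$, and that is all that is used downstream, where only $\pi^*(\Omega)\ge\exp(-d)$ is needed.
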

\begin{proof}
	Observe that $\norm{v}_2^2$ follows a $\chi^2$ distribution with $n$ degrees of freedom. Thus this probability is governed by the $\chi^2$ cumulative density function, and is
	\[\frac{1}{\Gamma(k)}\gamma\Par{k, ck} \]
	where we define $k \defeq \frac n 2$ and $c \defeq \frac 12$; here $\Gamma$ is the standard gamma function, and $\gamma$ is the lower incomplete gamma function. Next, we have the bound from \cite{DLMF20}
	\[\frac{1}{\Gamma(k)}\gamma\Par{k, ck} \ge \Par{1 - \exp\Par{-\ell ck}}^k,\; \ell \defeq \Par{\Gamma(k + 1)}^{-\frac{1}{k - 1}}.\]
	A direct calculation yields $\ell \ge \frac{2.5}{k} \implies 1 - \exp(-\ell ck) \ge \exp(-\tfrac 16)$ for large enough $k$. Recalling we defined $k = \frac n 2$ yields the conclusion.
\end{proof} 	%

\section{Lower bound for MALA on well-conditioned distributions}\label{sec:general-mala}
In this section, we derive a lower bound on the relaxation time of MALA for sampling from a distribution with density proportional to $\exp(-f(x))$, where  $f:\mathbb{R}^d \rightarrow \mathbb{R}$ is a (non-quadratic) target function with condition number $\kappa$. In particular, by exploiting the structure of non-cancellations which do not occur for quadratics, we will attain a stronger lower bound.

Our first step is to derive an upper bound on the acceptance probability for a general target function $f$ according to the MALA updates $\eqref{eq:maladef}$, analogously to Lemma~\ref{lem:gaussianaccept} in the Gaussian case.

\begin{lemma}
	\label{lem: accbd}
	For any function $f: \mathbb{R}^d \rightarrow \mathbb{R}$, we have 
		\begin{gather*}
			f(x) - f(y) + \frac{1}{4h}\Par{\norm{y - (x - h\nabla f(x))}_2^2 - \norm{x - (y - h\nabla f(y))}_2^2} \\
			=  -f(y) + f(x) -\frac{1}{2} \inprod{x-y}{\nabla f(x)+\nabla f(y)} +\frac{h}{4}\norm{\nabla f(x)}_2^2 - \frac{h}{4}\norm{\nabla f(y)}_2^2.
		\end{gather*}
\end{lemma}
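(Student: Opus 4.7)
The plan is to verify this identity by direct expansion, since it is purely algebraic and does not use any properties of $f$ beyond being differentiable. I would begin by expanding the two squared norms inside the second summand on the left-hand side. Writing
\[
\norm{y - (x - h\nabla f(x))}_2^2 = \norm{(y - x) + h\nabla f(x)}_2^2 = \norm{y-x}_2^2 + 2h\inprod{y-x}{\nabla f(x)} + h^2\norm{\nabla f(x)}_2^2,
\]
and analogously
\[
\norm{x - (y - h\nabla f(y))}_2^2 = \norm{x-y}_2^2 + 2h\inprod{x-y}{\nabla f(y)} + h^2\norm{\nabla f(y)}_2^2.
\]

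Subtracting these two expressions, the $\norm{y-x}_2^2 = \norm{x-y}_2^2$ terms cancel exactly, leaving only inner product terms and $h^2$ terms involving $\nabla f$. Using $\inprod{y-x}{\nabla f(x)} = -\inprod{x-y}{\nabla f(x)}$, the cross terms combine to $-2h\inprod{x-y}{\nabla f(x) + \nabla f(y)}$. Dividing by $4h$ then produces exactly $-\tfrac{1}{2}\inprod{x-y}{\nabla f(x) + \nabla f(y)} + \tfrac{h}{4}\norm{\nabla f(x)}_2^2 - \tfrac{h}{4}\norm{\nabla f(y)}_2^2$. Adding the $f(x) - f(y)$ term that was already on the left gives the claimed right-hand side, completing the proof.

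There is no real obstacle here: the lemma is a bookkeeping identity that isolates the ``symmetric'' piece $-\tfrac{1}{2}\inprod{x-y}{\nabla f(x) + \nabla f(y)}$ (which vanishes by a Taylor-type integration only when $f$ is quadratic) plus the drift correction $\tfrac{h}{4}(\norm{\nabla f(x)}_2^2 - \norm{\nabla f(y)}_2^2)$. The only care required is to track signs carefully when swapping $x-y$ and $y-x$ inside inner products, and to confirm that the $\norm{y-x}_2^2$ terms truly cancel so that no residual $\tfrac{1}{4h}\norm{x-y}_2^2$ survives on the right.
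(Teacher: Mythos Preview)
Your proposal is correct and is essentially identical to the paper's own proof: both expand the two squared norms, cancel the $\norm{x-y}_2^2$ terms, and collect the cross terms into $-\tfrac{1}{2}\inprod{x-y}{\nabla f(x)+\nabla f(y)}$ plus the $\tfrac{h}{4}$ gradient-norm correction.
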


\begin{proof}
This is a direct computation which we perform here for completeness:
	\begin{equation*}
		\begin{gathered}
			f(x) - f(y) + \frac{1}{4h}\Par{\norm{y - (x - h\nabla f(x))}_2^2 - \norm{x - (y - h\nabla f(y))}_2^2} \\
			= f(x) - f(y) + \frac{1}{2h} \inprod{y - x}{h\nabla f(x)} - \frac{1}{2h} \inprod{x - y}{h\nabla f(y)} +\frac{h}{4}\norm{\nabla f(x)}_2^2 - \frac{h}{4}\norm{\nabla f(y)}_2^2 \\
			= -f(y) + f(x) -\frac{1}{2} \inprod{x-y}{\nabla f(x)+\nabla f(y)} +\frac{h}{4}\norm{\nabla f(x)}_2^2 - \frac{h}{4}\norm{\nabla f(y)}_2^2.
		\end{gathered}
	\end{equation*}
\end{proof}

Next, recall the proposal distribution of the MALA updates \eqref{eq:maladef} sets $y = x - h\nabla f(x) + \sqrt{2h} g$ where $g \sim \Nor(0, \id)$. We further split this update into a random step and a deterministic step, by defining
\begin{equation}\label{eq:xgdef}
x_g \defeq x + \sqrt{2h} g, \text{ where } g \sim \Nor(0, \id) \text{ and } y = x_g - h\nabla f(x).
\end{equation}
This will allow us to reason about the effects of the stochastic and drift terms separately. We crucially will use the following decomposition of the equation in Lemma~\ref{lem: accbd}:
\begin{equation}\label{eq:threesplit}
\begin{gathered}
	-f(y) + f(x) -\frac{1}{2} \inprod{x-y}{\nabla f(x)+\nabla f(y)} +\frac{h}{4}\norm{\nabla f(x)}_2^2 - \frac{h}{4}\norm{\nabla f(y)}_2^2 \\
= -f(x_g) + f(x) - \half \inprod{x- x_g}{\nabla f(x) + \nabla f(x_g)} \\
+ f(x_g) - f(y) - \half \inprod{x-x_g}{\nabla f(y) -\nabla f(x_g)} \\
-\half \inprod{x_g-y}{\nabla f(x)+\nabla f(y)} + \frac{h}{4}\norm{\nabla f(x)}_2^2 - \frac{h}{4}\norm{\nabla f(y)}_2^2.
\end{gathered}
\end{equation}
We will use the following observation, which gives an alternate characterization of the second line of \eqref{eq:threesplit}, as well as a bound on the third and fourth lines for smooth functions.
\begin{lemma}
	\label{lem:genf_prop}
 For twice-differentiable $f: \mathbb{R}^d \rightarrow \mathbb{R}$, letting $x_s \defeq x + s(x_g - x)$ for $s \in [0, 1]$, we have
	\begin{equation*}
		\begin{aligned}
			-f(x_g) + f(x) - \half \inprod{x- x_g}{\nabla f(x) + \nabla f(x_g)}
			= -2h\int_0^1 \Par{\half - s}g^\top \nabla^2 f(x_s)gds.
		\end{aligned}
	\end{equation*}
	Moreover, assuming $f$ is $\kappa$-smooth, 
	\begin{equation*}
		\begin{gathered}
			f(x_g) - f(y) - \half \inprod{x-x_g}{\nabla f(y) -\nabla f(x_g)} 
			 \\-\half \inprod{x_g-y}{\nabla f(x)+\nabla f(y)} + \frac{h}{4}\norm{\nabla f(x)}_2^2 + \frac{h}{4}\norm{\nabla f(y)}_2^2 \\
			  \leq  2\Par{ h^2 \kappa + h^3\kappa^2} \norm{\nabla f(x)}_2^2 + 3\Par{h^{1.5}\kappa + h^{2.5}\kappa^2} \norm{g}_2\norm{\nabla f(x)}_2+ h^2\kappa^2 \norm{g}_2^2.
		\end{gathered}
	\end{equation*}
\end{lemma}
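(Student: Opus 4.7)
For part 1, the identity is exactly the remainder formula for trapezoidal integration applied along the segment from $x$ to $x_g$. The plan is to define $\phi: [0,1] \to \R$ by $\phi(s) \defeq f(x_s)$ with $x_s \defeq x + s(x_g - x)$, so that $\phi'(s) = \inprod{\nabla f(x_s)}{x_g - x}$ and $\phi''(s) = (x_g - x)^\top \nabla^2 f(x_s) (x_g - x) = 2h\, g^\top \nabla^2 f(x_s) g$, where in the last step I use $x_g - x = \sqrt{2h}\, g$. A single integration by parts then gives
\[
\int_0^1 \Par{\half - s}\phi''(s) ds = \Brack{\Par{\half - s}\phi'(s)}_0^1 + \int_0^1 \phi'(s) ds = -\half\Par{\phi'(0) + \phi'(1)} + \phi(1) - \phi(0),
\]
so rearranging and flipping sign (to convert $x_g - x$ to $x - x_g$ inside the inner product) yields exactly the stated formula.

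For part 2, the key structural observations are $x_g - y = h \nabla f(x)$ and $x - x_g = -\sqrt{2h}\, g$, and that $\kappa$-smoothness gives both the quadratic bound on $f$ and $\norm{\nabla f(a) - \nabla f(b)}_2 \le \kappa \norm{a - b}_2$. I would bound the four pieces separately, labeling them $T_1,\dots,T_4$. First, smoothness applied to $f(x_g) - f(y)$ with base point $y$ yields $T_1 \le h \inprod{\nabla f(x)}{\nabla f(y)} + \frac{h^2 \kappa}{2} \norm{\nabla f(x)}_2^2$. Second, Cauchy--Schwarz plus the Lipschitz bound $\norm{\nabla f(y) - \nabla f(x_g)}_2 \le \kappa h \norm{\nabla f(x)}_2$ gives $T_2 \le \frac{\kappa h^{1.5}}{\sqrt 2}\norm{g}_2 \norm{\nabla f(x)}_2$. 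Third, substituting $x_g - y = h\nabla f(x)$ makes $T_3 = -\half h \norm{\nabla f(x)}_2^2 - \half h \inprod{\nabla f(x)}{\nabla f(y)}$ purely algebraic.

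The heart of the argument is then that the inner-product piece of $T_1$ combines cleanly with $T_3$ and $T_4$ via AM-GM: writing $\frac{h}{2}\inprod{\nabla f(x)}{\nabla f(y)} \le \frac{h}{4}(\norm{\nabla f(x)}_2^2 + \norm{\nabla f(y)}_2^2)$ makes the entire bookkeeping collapse to a nonpositive residue (plus the manifestly positive $\frac{h^2\kappa}{2}\norm{\nabla f(x)}_2^2$ from $T_1$). To account for the remaining $\norm{\nabla f(y)}_2^2$ terms (and to convert to the stated form involving only $\norm{\nabla f(x)}_2$ and $\norm{g}_2$), I would invoke the Lipschitz consequence
\[
\norm{\nabla f(y)}_2 \le (1 + h\kappa)\norm{\nabla f(x)}_2 + \sqrt{2h}\,\kappa \norm{g}_2,
\]
derived by the triangle inequality from $y - x = \sqrt{2h}\, g - h\nabla f(x)$, then square and apply $(a + b)^2 \le 2a^2 + 2b^2$. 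Matching the resulting $h^j \kappa^\ell$ monomials in $\norm{\nabla f(x)}_2^2$, $\norm{g}_2 \norm{\nabla f(x)}_2$, and $\norm{g}_2^2$ against the claim gives the stated inequality up to loose constants.

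The main obstacle is purely bookkeeping: several terms of similar shape $h^j \kappa^\ell \norm{g}_2^a \norm{\nabla f(x)}_2^b$ appear, and the cancellation of the leading $h\norm{\nabla f(x)}_2^2$ term is easy to miss. The cleanest ordering is to first identify that $T_3 + T_4$ together with the $h\inprod{\nabla f(x)}{\nabla f(y)}$ contribution from $T_1$ reduces (via AM-GM) to a nonpositive quantity, and only then expand the residual quadratic-in-$h$ error terms through the Lipschitz gradient bound on $\norm{\nabla f(y)}_2$.
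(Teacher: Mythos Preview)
Your Part 1 is correct and slightly slicker than the paper's argument: the paper writes two Taylor expansions with integral remainder (one from $x$ to $x_g$ and one back) and subtracts, whereas your single integration by parts on $\int_0^1(\half-s)\phi''(s)\,ds$ arrives at the same identity in one line.

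For Part 2 there is a sign issue you should flag: the lemma as printed has $+\tfrac{h}{4}\norm{\nabla f(y)}_2^2$, but the decomposition \eqref{eq:threesplit} and the paper's own proof both carry $-\tfrac{h}{4}\norm{\nabla f(y)}_2^2$. With the $+$ sign the inequality is actually false (take $f$ quadratic and $g=0$; the left side is $\Theta(h)\norm{\nabla f(x)}_2^2$ while the right side is $O(h^2\kappa)\norm{\nabla f(x)}_2^2$). Your ``nonpositive residue'' claim implicitly uses the minus sign: with $T_4=\tfrac{h}{4}\norm{\nabla f(x)}_2^2-\tfrac{h}{4}\norm{\nabla f(y)}_2^2$, the inner-product piece of $T_1$ plus $T_3+T_4$ is exactly
\[
\tfrac{h}{2}\inprod{\nabla f(x)}{\nabla f(y)}-\tfrac{h}{4}\norm{\nabla f(x)}_2^2-\tfrac{h}{4}\norm{\nabla f(y)}_2^2=-\tfrac{h}{4}\norm{\nabla f(x)-\nabla f(y)}_2^2\le 0,
\]
so your cancellation is an identity, not merely an AM--GM bound. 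Once you observe this, there are \emph{no} remaining $\norm{\nabla f(y)}_2^2$ terms, and your final paragraph invoking $\norm{\nabla f(y)}_2\le(1+h\kappa)\norm{\nabla f(x)}_2+\sqrt{2h}\kappa\norm{g}_2$ is unnecessary; you are already done with the bound $\tfrac{h^2\kappa}{2}\norm{\nabla f(x)}_2^2+\tfrac{h^{1.5}\kappa}{\sqrt 2}\norm{g}_2\norm{\nabla f(x)}_2$, well inside the stated right-hand side.

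This is a genuinely different (and tighter) route than the paper's. The paper groups $T_1+T_2+T_3$ and bounds it by smoothness and gradient-Lipschitzness, then separately bounds $T_4=\tfrac{h}{4}(\norm{\nabla f(x)}_2^2-\norm{\nabla f(y)}_2^2)$ by factoring as $\tfrac{h}{4}\inprod{\nabla f(x)+\nabla f(y)}{\nabla f(x)-\nabla f(y)}$ and applying Lipschitzness again; this is where the higher-order $h^3\kappa^2$ and $h^{2.5}\kappa^2$ terms and the $h^2\kappa^2\norm{g}_2^2$ term enter. Your completing-the-square observation avoids all of that. The paper's decomposition is more mechanical; yours exploits the specific algebraic structure and yields sharper constants.
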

\begin{proof}
By integrating twice and using the definition $x_g = x+ \sqrt{2h}g$, 
	\begin{equation}
		\label{eq: accbd2}
		\begin{aligned}
			f(x_g)   &= f(x) +  \int_0^1  \inprod{\nabla f(x_s)}{x_g - x} ds \\
			&= f(x) + \inprod{\nabla f(x)}{x_g-x} + \int_0^1  \inprod{\int_0^s \nabla^2 f(x_{t}) \Par{x_g - x} dt}{x_g - x} ds \\
			&= f(x)+  \inprod{\nabla f(x)}{x_g-x} + 2h \int_0^1 \Par{1-s} g^\top \nabla ^2f(x_{s}) g ds.
		\end{aligned}
	\end{equation} 
	Similarly, 
	\begin{equation}
		\label{eq: accbd3}
		\begin{aligned}
			f(x)  
			= f(x_g) +  \inprod{\nabla f(x_g)}{x-x_g}  +2h \int_0^1 s g^\top \nabla^2 f(x_{s}) g ds.
		\end{aligned}
	\end{equation}
	The first conclusion follows from combining \eqref{eq: accbd2} and \eqref{eq: accbd3}.
	Next, assuming $f$ is $\kappa$-smooth,
	\begin{equation}
		\label{eq:fpropeq1}
		\begin{gathered}
			f(x_g) - f(y) - \half \inprod{x-x_g}{\nabla f(y) -\nabla f(x_g)} 
			-\half \inprod{x_g-y}{\nabla f(x)+\nabla f(y)} \\
			= f(x_g)- f(y) + \frac{\sqrt{2h}}{2} \inprod{g}{\nabla f(y) -\nabla f(x_g)}
			- \inprod{x_g-y}{\nabla f(y)} - \frac h 2\inprod{\nabla f(x)}{\nabla f(x) - \nabla f(y)} \\
			\leq f(x_g)- f(y) 
			- \inprod{x_g-y}{\nabla f(y)}  +  \frac{\sqrt{2h}}{2}\norm{g}_2\norm{\nabla f(x_g) - \nabla f(y)}_2 +\frac h 2\norm{\nabla f(x)}_2\norm{\nabla f(x) - \nabla f(y)}_2 \\
			\leq \frac{\kappa}{2} \norm{x_g - y}_2^2 +  \frac{\sqrt{2h}\kappa}{2}\norm{g}_2\norm{x_g - y}_2 +\frac {h \kappa} 2\norm{\nabla f(x)}_2\norm{x-y}_2\\
			\le \frac{h^2\kappa}{2}\norm{\nabla f(x)}_2^2 + \frac{\sqrt 2}{2}h^{1.5} \kappa\norm{g}_2\norm{\nabla f(x)}_2 + \frac{h\kappa}{2}\norm{\nabla f(x)}_2\Par{\sqrt{2h}\norm{g}_2 + h\norm{\nabla f(x)}_2} \\= {h^2 \kappa } \norm{\nabla f(x)}_2^2 +  {\sqrt{2}h^{1.5} \kappa}\norm{g}_2\norm{\nabla f(x)}_2.
		\end{gathered}
	\end{equation}
	The second line used the definitions of $x_g$ and $y$ in \eqref{eq:xgdef}, and the third used Cauchy-Schwarz. The fourth used smoothness (which implies gradient Lipschitzness), and the fifth again used \eqref{eq:xgdef} and the triangle inequality. Next, we bound the remaining terms $ \frac{h}{4}\norm{\nabla f(x)}_2^2 - \frac{h}{4}\norm{\nabla f(y)}_2^2$:
	\begin{equation}
		\label{eq:fpropeq2}
		\begin{gathered}
			\frac{h}{4}\norm{\nabla f(x)}_2^2 - \frac{h}{4}\norm{\nabla f(y)}_2^2 
			=\frac{h}{4} \inprod{\nabla f(x)+\nabla f(y)}{ \nabla f(x) - \nabla f(y)}\\
			\leq \frac{h\kappa}{4} \Par{2\norm{\nabla f(x)}_2 + \kappa \norm{x-y}_2} \norm{x-y}_2\\
			\le \frac{h\kappa}{4} \Par{2\norm{\nabla f(x)}_2 + h\kappa\norm{\nabla f(x)}_2 + \sqrt{2h}\kappa\norm{g}_2}\Par{h\norm{\nabla f(x)}_2 + \sqrt{2h}\norm{g}_2}\\
			\leq  \half\Par{h^2 \kappa +h^3\kappa^2 } \norm{\nabla f(x)}_2^2 + \frac{\sqrt{2}}{2}\Par{h^{1.5} \kappa + h^{2.5}\kappa^2}\norm{g}_2\norm{\nabla f(x)}_2 + h^2\kappa^2\norm{g}_2^2.
		\end{gathered}
	\end{equation}
	Combining \eqref{eq:fpropeq1} and \eqref{eq:fpropeq2} yields the conclusion.
\end{proof}

We will ultimately use the second bound in Lemma~\ref{lem:genf_prop} to argue that the third and fourth lines in \eqref{eq:threesplit} are low-order, so it remains to concentrate on the remaining term,
\begin{equation}\label{eq:mainterm}	-f(x_g) + f(x) - \half \inprod{x- x_g}{\nabla f(x) + \nabla f(x_g)}
= -2h\int_0^1 \Par{\half - s}g^\top \nabla^2 f(x_s)gds.\end{equation}
Our goal is to demonstrate this term is $-\Omega(h\kappa d)$ over an inverse-exponentially sized region, for a particular hard distribution. As it is coordinate-wise separable, our proof strategy will be to construct a hard one-dimensional function, and replicate it to obtain a linear dependence on $d$.

We now define the specific hard function $\fha: \mathbb{R}^d \rightarrow \mathbb{R}$ we work with for the remainder of the section; it is straightforward to see $\fha$ is $\kappa$-smooth and $1$-strongly convex. 

\begin{equation}
	\label{eq:fhadef}
	\fha(x) \defeq \sum_{i \in [d]} f_i(x_i), \text{ where } f_i(c) = \begin{cases} \half c^2 & i = 1 \\ \frac \kappa 3 c^2 - \frac{\kappa h}{3} \cos \frac c {\sqrt{h}} & 2 \le i \le d\end{cases}.
\end{equation}

We will now show that sampling from the distribution with density proportional to $\exp(-\fha)$ is hard. First, notice that the function $\fha$ has condition number $\kappa$ and is coordinate-wise separable. It immediately follows from Lemma~\ref{lem:specgap} that the spectral gap (defined in  \eqref{eq:gapdef}) of the MALA Markov chain is governed by the step size $h$ as follows.

\begin{corollary}\label{corr:malaspectralgap2}
	The spectral gap of the MALA Markov chain for sampling from the density proportional to $\exp(-\fha)$, where $\fha$ is defined in \eqref{eq:fhadef}, is $O(h + h^2)$.
\end{corollary}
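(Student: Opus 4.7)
The plan is to deduce this corollary as an immediate instantiation of Lemma~\ref{lem:specgap}, so the entire work is to verify that $\fha$ satisfies the four hypotheses of that lemma. This should be a short verification, not a new argument.

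First I would note that $\fha$ is coordinate-wise separable by its defining formula \eqref{eq:fhadef}, with $f_1(c) = \thalf c^2$ and $f_i(c) = \tfrac{\kappa}{3} c^2 - \tfrac{\kappa h}{3}\cos(c/\sqrt{h})$ for $2 \le i \le d$. Next I would check the symmetry condition $\fha(x) = \fha(-x)$: by separability it suffices that each $f_i$ be even, which holds because $c \mapsto c^2$ is even and $c \mapsto \cos(c/\sqrt{h})$ is even (cosine being an even function).

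Then I would verify that $f_1$ is $O(1)$-smooth, which is immediate from $f_1''(c) = 1$, so $f_1$ is $1$-smooth. For the final hypothesis, since $f_1(c) = \thalf c^2$ the marginal $\exp(-f_1)$ (after normalization) is the standard Gaussian $\Nor(0,1)$, so $\E_{x_1 \sim \exp(-f_1)}[x_1^2] = 1 = \Theta(1)$. All four assumptions of Lemma~\ref{lem:specgap} are met, and invoking it yields that the spectral gap of the MALA Markov chain with target density proportional to $\exp(-\fha)$ is $O(h + h^2)$.

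I do not anticipate any real obstacle here; the only point that requires a moment of attention is that the cosine term in $f_i$ for $i \ge 2$ does not disturb the symmetry needed by Lemma~\ref{lem:specgap}, but since $\cos$ is even this is automatic. No property of the second through $d$th coordinates other than the overall symmetry is actually used, which is exactly why the one-dimensional testing $g(x) = x_1$ in the proof of Lemma~\ref{lem:specgap} suffices and lets us bypass any analysis of the more complicated cosine-perturbed coordinates.
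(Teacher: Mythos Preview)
Your proposal is correct and matches the paper's approach: the paper states this corollary as an immediate consequence of Lemma~\ref{lem:specgap}, noting only that $\fha$ is coordinate-wise separable, while you go slightly further and verify each hypothesis explicitly.
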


For the remainder of the section, we focus on upper bounding \eqref{eq:mainterm} over a large region according to the density proportional to $\exp(-\fha)$. Recall $\{f_i\}_{i\in[d]}$ are the summands of $\fha$. For a fixed $x$, consider the random variables $S_i^x$:
\[S^x_i = -f_i([x_g]_i) + f_i(x_i) - \half (x_i - [x_g]_i)(f'_i(x_i) + f'_i([x_g]_i)). 
\]
It is easy to check that for a given realization of $g$, we have 
\[
	\sum_{i\in[d]}S_i^x = - f(x_g) + f(x) -\half \inprod{x-x_g}{\nabla f(x) + \nabla f(x_g)},
\]
where the right-hand side of the above display is the left-hand side of \eqref{eq:mainterm}. We bound the expectation of $\sum_{i \in [d]} S_i^x$, and its deviation from its expectation, in Lemma~\ref{lem:Sexpect} and Lemma~\ref{lem:Sconcentrate} respectively.
\begin{lemma}
	\label{lem:Sexpect}
	For any fixed $x\in \Brace{x\; \middle\vert\;-\half \pi  \sqrt h + 2\pi k_i\sqrt h \leq x_i \leq \half \pi   \sqrt h + 2\pi k_i \sqrt h, k_i \in \mathbb{N}, \forall 2\leq i \leq d}$ and $h\leq 1$, the random variables $S_i^x$, $1\leq i\leq d$ satisfy
	\begin{equation*}
		\begin{aligned}
			\E_{g\sim \mathcal{N}(0, \id)} \Brack{S_i^x}  \leq   \begin{cases} 0 & i = 1 \\ -0.08h\kappa \cos \frac {x_i} {\sqrt{h}} & 2 \le i \le d\end{cases}. 
		\end{aligned}
	\end{equation*}
\end{lemma}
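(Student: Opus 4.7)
The plan is to invoke Lemma~\ref{lem:genf_prop} coordinate-wise. This is legitimate because $\fha$ is separable and $x_g = x + \sqrt{2h}g$ has independent coordinates, so applying the Lemma to each one-dimensional $f_i$ rewrites
\[
S_i^x = -2h\, g_i^2 \int_0^1 \Par{\half - s} f_i''\Par{x_i + s\sqrt{2h}\,g_i}\,ds.
\]
For $i = 1$, $f_1''(c) \equiv 1$, and since $\int_0^1(\half - s)\,ds = 0$, we obtain $S_1^x = 0$ deterministically, which already handles the first case. For $i \ge 2$, $f_i''(c) = \frac{2\kappa}{3} + \frac{\kappa}{3}\cos(c/\sqrt h)$; the constant piece $\frac{2\kappa}{3}$ again integrates against $(\half - s)$ to zero, leaving
\[
S_i^x = -\frac{2h\kappa}{3}\, g_i^2 \int_0^1 \Par{\half - s}\cos\Par{\frac{x_i}{\sqrt h} + s\sqrt 2\, g_i}\,ds.
\]

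Next I would take expectation by applying the angle addition formula $\cos(\theta + \alpha) = \cos\theta\cos\alpha - \sin\theta\sin\alpha$ with $\theta = x_i/\sqrt h$ and $\alpha = s\sqrt 2\, g_i$. Since $\sin(s\sqrt 2\, g_i)$ is odd in $g_i$ while $g_i^2$ is even, the $\sin\theta$ contribution vanishes under $\E_{g_i}$. For the surviving cosine piece, I would use the Gaussian moment identity
\[
\E_{g \sim \Nor(0,1)}\Brack{g^2 \cos(tg)} = (1 - t^2)e^{-t^2/2},
\]
which follows from differentiating $\E[e^{itg}] = e^{-t^2/2}$ twice and taking real parts. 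At $t = s\sqrt 2$, this yields $\E[g_i^2 \cos(s\sqrt 2\, g_i)] = (1 - 2s^2)e^{-s^2}$, so
\[
\E[S_i^x] = -\frac{2h\kappa}{3}\cos\Par{\frac{x_i}{\sqrt h}} \int_0^1 \Par{\half - s}(1 - 2s^2)\, e^{-s^2}\,ds.
\]

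The final step is to evaluate the constant $I \defeq \int_0^1 (\half - s)(1-2s^2)e^{-s^2}\,ds$. Expanding $(\half - s)(1-2s^2) = \half - s - s^2 + 2s^3$ and computing the moments $\int_0^1 s^k e^{-s^2}\,ds$ for $k = 0,1,2,3$ by integration by parts (reducing the $k=2$ case to $k=0$, and evaluating the $k=3$ case via the substitution $u = s^2$), the $\mathrm{erf}$ and $J_0$ contributions cancel and the integral collapses to the exact closed form $I = \half - e^{-1} \approx 0.1321$. Hence $\frac{2}{3}I = \frac{1}{3} - \frac{2}{3e} > 0.08$, and the hypothesis that $x_i \in [-\half\pi\sqrt h + 2\pi k_i\sqrt h,\, \half\pi\sqrt h + 2\pi k_i\sqrt h]$ guarantees $\cos(x_i/\sqrt h) \ge 0$, yielding the stated bound.

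The main obstacle is simply the bookkeeping in the moment computation, and verifying that the closed form $I = \half - e^{-1}$ comes out with a coefficient strictly exceeding $0.08$ once the factor $\frac{2}{3}$ is multiplied in; once this is in place the rest is a direct application of the Taylor identity in Lemma~\ref{lem:genf_prop} together with an elementary Fourier moment, and no concentration or structural insight beyond parity is required.
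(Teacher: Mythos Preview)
Your proof is correct and arrives at exactly the same constant as the paper, namely $\E[S_i^x] = \frac{\kappa h}{3}\Par{\frac{2}{e} - 1}\cos\Par{\frac{x_i}{\sqrt h}}$, i.e.\ a coefficient of $\frac 1 3 - \frac{2}{3e} \approx 0.088$.

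The route is genuinely different from the paper's, though. The paper expands $S_i^x$ directly from the definition of $f_i$, observes the quadratic contributions cancel algebraically, and then takes expectations of the two remaining trigonometric pieces using the identities $\E_{g}[\cos(\theta + \sqrt 2 g)] = e^{-1}\cos\theta$ and $\E_g[g\sin(\theta + \sqrt 2 g)] = \sqrt 2 e^{-1}\cos\theta$. You instead invoke the integral representation of Lemma~\ref{lem:genf_prop}, which has the nice feature that the quadratic cancellation becomes automatic (any constant second derivative annihilates against $\int_0^1(\half - s)\,ds = 0$), at the cost of needing the second-moment identity $\E[g^2\cos(tg)] = (1-t^2)e^{-t^2/2}$ and then a further $\int_0^1$ integration in $s$. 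Your evaluation $I = \half - e^{-1}$ is correct: the $J_0$ terms from $\half J_0$ and $-J_2 = \half e^{-1} - \half J_0$ cancel, and the remaining pieces combine to $\half - e^{-1}$. Either approach is short; yours generalizes more cleanly to other perturbations of a quadratic, while the paper's avoids the intermediate integral.
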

\begin{proof}
	We remark that the condition on $x$ simply enforces coordinatewise in $2 \le i \le d$, $\cos \frac{x_i}{\sqrt{h}} > 0$. Consider some coordinate $2\leq i \leq d$: since $[x_g]_i = x_i + \sqrt{2h} g_i$,
	\begin{equation*}
		\begin{aligned}
			S^x_i &= -f_i\Par{x_i + \sqrt{2h}g_i} + f_i(x_i) + \frac{\sqrt{2h}}{2} g_i\Par{f'_i(x_i) + f'_i\Par{x_i + \sqrt{2h} g_i}} \\
			&= -\frac \kappa 3\Par{x_i+\sqrt{2h}g_i}^2 +\frac  {\kappa h} 3 \cos \Par{\frac{x_i} {\sqrt h}+\sqrt{2}g_i} +\frac {\kappa }3 x_i^2 - \frac  {\kappa h} 3 \cos \Par{\frac{x_i}{\sqrt h}} \\
			&+ \frac{\sqrt{2h}}{2} g_i\Par{\frac{4\kappa}{3} x_i+\frac{2\sqrt{2h}\kappa}{3} g_i + \frac {\kappa \sqrt h }3\sin \Par{\frac {x_i}{\sqrt h}+\sqrt{2}g_i} +\frac {\kappa \sqrt h}3\sin\Par{ \frac{x_i} {\sqrt h} } }\\
			&= \frac{ \kappa  h } 3 \Par{ \cos \Par{\frac {x_i} {\sqrt h}+\sqrt{2}g_i} - \cos \Par{\frac {x_i}{\sqrt h }}} + \frac{\sqrt{2}h\kappa}{6}g_i\Par{\sin  \Par{\frac {x_i} {\sqrt h}+\sqrt{2}g_i}  + \sin \Par{\frac {x_i}{\sqrt h }} }
		\end{aligned}
	\end{equation*}
	Here, we used that the quadratic terms in the second and third lines cancel (this also follows from examining the proof of Lemma~\ref{lem:gaussianaccept}):
	\begin{align*}
	-\frac{\kappa }{3}\Par{x_i + \sqrt{2h} g_i}^2 + \frac \kappa 3 x_i^2 + \frac{\sqrt{2h}}{2} g_i \Par{\frac{4\kappa}{3}x_i + \frac{2\sqrt{2h}\kappa}{3} g_i} = 0.
	\end{align*}
	By a direct computation, taking an expectation over $g_i \sim \mathcal{N}(0, 1)$ yields
	\begin{equation*}
		\begin{aligned}
			 	\E_{g_i \sim \mathcal{N}(0, 1)} \Brack{\cos \Par{\frac {x_i} {\sqrt h}+\sqrt{2}g_i} } =\frac{ \cos \Par{\frac {x_i} {\sqrt h}}}{\exp \Par{1}}, \\
			 	\E_{g_i \sim \mathcal{N}(0, 1)} \Brack{\sin  \Par{\frac {x_i} {\sqrt h}+\sqrt{2}g_i} g_i} =\frac{ \sqrt 2\cos \Par{\frac {x_i} {\sqrt h}}}{\exp \Par{1}}.
		\end{aligned}
	\end{equation*}

	Putting these pieces together,
	\begin{equation*}
		\begin{aligned}
			\E_{g_i \sim \mathcal{N}(0, 1)} \Brack{S^x_i } &= \frac{\kappa h}{3}\Par{\frac {2}{\exp(1)} -1} \cos \Par{\frac{x_i}{\sqrt h}}
			& \le - 0.08 \kappa h \cos \Par{\frac {x_i}{\sqrt h}} .
		\end{aligned}
	\end{equation*}
Here, we used $\cos \frac{x_i}{\sqrt h} >0$. For $i = 1$, Lemma~\ref{lem:gaussianaccept} shows $\E_{g_1 \sim \mathcal{N}(0, 1)} \Brack{S^x_1 } = 0$.
\end{proof}
\begin{lemma}
	\label{lem:Sconcentrate}
	With probability at least $1-\frac 1 {d^5}$ over the randomness of $g\sim \mathcal{N}(0, \id)$,
	\begin{equation*}
		\sum_{i\in[d]}S_i^x - \E_{g\sim \mathcal{N}(0, \id)}\Brack{\sum_{i\in[d]}S_i^x}  \leq  10h\kappa\sqrt{d\log d}.
	\end{equation*}
\end{lemma}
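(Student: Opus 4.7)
First I would handle the $i = 1$ coordinate separately: since $f_1(c) = \tfrac{1}{2}c^2$ is a pure quadratic, an immediate expansion (or equivalently, specializing Lemma~\ref{lem:gaussianaccept} to one dimension with $\ma = 1$) shows that all cross-terms cancel and $S_1^x = 0$ identically. Thus the sum collapses to $\sum_{i=2}^{d}(S_i^x - \E S_i^x)$, and we may treat each coordinate $i \ge 2$ using the explicit expression already derived inside the proof of Lemma~\ref{lem:Sexpect}.

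Namely, recall from that derivation that for each $2 \le i \le d$,
\[ S_i^x = \frac{\kappa h}{3}\Par{\cos\Par{\frac{x_i}{\sqrt h} + \sqrt 2 g_i} - \cos\frac{x_i}{\sqrt h}} + \frac{\sqrt 2\, h \kappa}{6}\, g_i\Par{\sin\Par{\frac{x_i}{\sqrt h} + \sqrt 2 g_i} + \sin \frac{x_i}{\sqrt h}}. \]
Using $|\sin|, |\cos| \le 1$ gives the almost-sure bound $|S_i^x| \le Ch\kappa(1 + |g_i|)$ for an absolute constant $C$, and Lemma~\ref{lem:Sexpect} already supplies $|\E S_i^x| = O(h\kappa)$. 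Letting $T_i \defeq S_i^x - \E S_i^x$, this shows $|T_i| \le C' h\kappa(1 + |g_i|)$, so since $|g_i|$ is sub-Gaussian with constant parameter, each $T_i$ is mean-zero and sub-exponential with parameter $\lambda = O(h\kappa)$. Crucially, since $\{g_i\}_{2 \le i \le d}$ are independent and $x$ is fixed, the variables $\{T_i\}_{2 \le i \le d}$ are independent as well.

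Finally, I would apply Bernstein's inequality (Fact~\ref{fact:bernstein}) to the independent mean-zero sub-exponential family $\{T_i\}$ with $n = d - 1$. Choosing $t = 10 h\kappa\sqrt{d \log d}$, the quantities governing Bernstein's tail bound satisfy
\[ \frac{t^2}{(d-1)\lambda^2} = \Omega(\log d), \qquad \frac{t}{\lambda} = \Omega\Par{\sqrt{d \log d}}, \]
so the minimum is $\Omega(\log d)$, and the failure probability is $\exp(-\Omega(\log d)) \le d^{-5}$ once the absolute constants inside $\lambda$ are tracked and the leading constant $10$ is made large enough.

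The only real obstacle is bookkeeping of constants: I need to verify that the implicit constant hidden in $\lambda = O(h\kappa)$ (inherited from the $\sin/\cos$ bounds and the sub-Gaussian norm of $|g_i|$) is small enough that the Bernstein exponent exceeds $5 \log d$ at $t = 10 h\kappa\sqrt{d \log d}$. Everything else in the argument is routine: Lemma~\ref{lem:Sexpect}'s closed-form computation has already done all the substantive trigonometric manipulation, so this concentration lemma is essentially a direct application of Bernstein once the sub-exponentiality of each $T_i$ is extracted.
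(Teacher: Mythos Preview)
Your argument is correct and reaches the conclusion by the same overall mechanism as the paper: show that each centered summand $T_i = S_i^x - \E S_i^x$ is sub-exponential with parameter $O(h\kappa)$, then invoke Bernstein (Fact~\ref{fact:bernstein}) with $t = 10h\kappa\sqrt{d\log d}$. The difference lies only in how the per-coordinate sub-exponential bound is obtained. The paper does not use the trigonometric closed form at all; instead it appeals to the integral representation of Lemma~\ref{lem:genf_prop},
\[
S_i^x \;=\; -2h\int_0^1\Par{\tfrac12 - s}\, f_i''\!\bigl([x_s]_i\bigr)\,g_i^2\,ds,
\]
and then uses only $\kappa$-smoothness to get $|S_i^x| \le \tfrac{h\kappa}{2}\,g_i^2$, so $T_i$ inherits the sub-exponential tail of a scaled $\chi^2_1$ variable with the explicit parameter $\tfrac{h\kappa}{2}$. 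Your route, via $|\sin|,|\cos|\le 1$ on the explicit formula from the proof of Lemma~\ref{lem:Sexpect}, gives the slightly stronger pointwise bound $|S_i^x|\le C h\kappa(1+|g_i|)$, so $T_i$ is in fact sub-Gaussian. What the paper's approach buys is generality (it uses only smoothness of $f$, not the particular cosine structure of $\fha$) and a clean constant $\tfrac12$ that makes the final $d^{-5}$ bookkeeping immediate; what your approach buys is a nominally sharper tail and avoidance of the integral identity, at the cost of being tied to this specific $\fha$ and leaving the constant $C$ to be tracked. Both are valid; the paper's is the more reusable argument.
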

\begin{proof}
By Lemma~\ref{lem:genf_prop}, for coordinate $1\leq i \leq d$,
\[
S_i^x = -2h\int_0^1 \Par{\half - s}f_i''([x_s]_i)ds g_i^2,\text{ where } \left|2h\int_0^1\Par{\half - s} f''_i([x_s]_i) ds\right| \le \frac{h\kappa}{2}.
\]
We attained the latter bound by smoothness. Now, each random variable $S_i^x - \E[S_i^x]$ is sub-exponential with parameter $\frac{h\kappa}{2}$ (for coordinates where the coefficient is negative, note the negation of a sub-exponential random variable is still sub-exponential). Hence, by Fact~\ref{fact:bernstein},
\begin{equation*}
	\Pr \Brack{	\sum_{i\in[d]}S_i^x - \E_{g\sim \mathcal{N}(0, \id)}\Brack{\sum_{i\in[d]}S_i^x} \geq 10h\kappa\sqrt{d\log d} } \leq \frac{1}{d^5}. 
\end{equation*}
\end{proof}

Now, we build a bad set $\oha$ with lower bounded measure that starting from a point $x\in \oha$, with high probability, $- \E_{g\sim \mathcal{N}(0, \id)}\Brack{\sum_{i\in[d]}S_i^x}$ is negative:
\begin{equation}
	\label{eq:hardsetdef}
	\begin{aligned}
	\oha = \Bigg\{x \;\Big\vert \; |x_1| \leq 2,\forall 2\leq i \leq d, \exists k_i\in\mathbb{Z}, |k_i| \leq \floor{\frac{5}{\pi\sqrt{ h\kappa}}}, \textup{ such that } \\
	 -\frac 9 {20} \pi  \sqrt h+ 2\pi k_i\sqrt h \leq x_i \leq  \frac 9 {20} \pi  \sqrt h+ 2\pi k_i \sqrt h\Bigg\}.
	\end{aligned}
\end{equation}
In other words, $\oha$ is the set of points where $\cos x_i$ is large for $2\leq i \leq d$, and coordinates are bounded. We first lower bound the measure of $\oha$, and show $\norm{\nabla f(x)}_2$ is small within $\oha$. Our measure lower bound will not be used in this section, but will become relevant in Section~\ref{sec:mix}.
\begin{restatable}{lemma}{ohapropbd}
	\label{lem:oha_propbd}
   Let $h \leq \frac{1}{10000\pi^2\kappa}$. Let $\pi^*$ have log-density $-\fha$ \eqref{eq:fhadef}. Then, $\pi^*(\oha) \geq \exp(-d)$. Moreover, for all $x\in \oha$, $\norm{\nabla f(x)}_2\leq 10 \sqrt {\kappa d}$.
\end{restatable}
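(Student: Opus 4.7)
Proof plan. Since $\fha$ is coordinate-wise separable, $\pis$ factors as a product measure, and since $\oha$ is itself a product set,
\[\pis(\oha) = \Pr[|x_1| \le 2] \cdot \prod_{i=2}^d \Pr[x_i \in B],\]
where $B \defeq \bigcup_{|k| \le K} [2\pi k\sqrt h - \tfrac{9\pi\sqrt h}{20}, 2\pi k\sqrt h + \tfrac{9\pi\sqrt h}{20}]$ with $K = \floor{5/(\pi\sqrt{h\kappa})}$, and the one-dimensional marginal on coordinate $i \ge 2$ has density proportional to $\exp(-\tfrac\kappa 3 x^2 + \tfrac{\kappa h}{3}\cos(x/\sqrt h))$. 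The first factor is a standard Gaussian tail probability bounded below by $0.95$.

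The main task is to lower bound the per-coordinate bad-set probability by a constant strictly exceeding $1/e$. To this end, let $P \defeq [-(2K+1)\pi\sqrt h, (2K+1)\pi\sqrt h]$ denote the union of full periods. By the choice of $K$ and the hypothesis $h \le \tfrac{1}{10000\pi^2\kappa}$, $P$ covers an interval of radius at least $\tfrac{10}{\sqrt\kappa} - O(\tfrac{1}{\sqrt\kappa})$, which is many standard deviations of the near-Gaussian marginal (whose variance is at most $\tfrac{3}{2\kappa}$), so $\Pr[x_i \in P] \ge 1 - o(1)$. Within any single period of length $2\pi\sqrt h$ centered at some $2\pi k\sqrt h \in P$, the density's multiplicative variation is bounded by a small constant: the quadratic term $\tfrac\kappa 3 x^2$ varies by at most $O(\sqrt{\kappa h}) + O(\kappa h) = o(1)$ using $|x| \le 10/\sqrt\kappa$ inside $P$ and $h \le \tfrac{1}{10000\pi^2\kappa}$, and the cosine term varies by at most $\tfrac{2\kappa h}{3} = o(1)$. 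Hence on each period, the conditional probability of lying in the central $\tfrac{9\pi\sqrt h}{10}$-length bad sub-interval (out of the full period length $2\pi\sqrt h$) is at least $\tfrac{9/20}{1 + o(1)}$, which strictly exceeds $1/e$. Combining the two displays yields per-coordinate bad-set probability at least $1/e + \Omega(1)$, and taking the $d$-fold product gives $\pis(\oha) \ge 0.95 \cdot (1/e)^{d-1} \ge \exp(-d)$.

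For the gradient bound I evaluate coordinate-wise. For $i=1$, $|\nabla \fha(x)_1| = |x_1| \le 2$. For $2 \le i \le d$, the definition of $\oha$ forces $|x_i| \le 2K\pi\sqrt h + \tfrac{9\pi\sqrt h}{20} \le \tfrac{11}{\sqrt\kappa}$ using $h \le \tfrac{1}{10000\pi^2\kappa}$, so
\[|\nabla \fha(x)_i| = \left|\tfrac{2\kappa}{3}x_i + \tfrac{\kappa \sqrt h}{3}\sin(x_i/\sqrt h)\right| \le \tfrac{2\kappa}{3}\cdot \tfrac{11}{\sqrt\kappa} + \tfrac{\kappa\sqrt h}{3} \le 8\sqrt\kappa.\]
Therefore $\norm{\nabla \fha(x)}_2^2 \le 4 + 64\kappa(d-1) \le 100\kappa d$, giving the bound $\norm{\nabla \fha(x)}_2 \le 10\sqrt{\kappa d}$.

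The main obstacle is the per-coordinate measure bound: the geometric fraction of each period occupied by the bad interval is $9/20 = 0.45$, which only slightly exceeds $1/e \approx 0.368$, so I need the density's within-period multiplicative variation to be essentially trivial in order to preserve this margin before raising to the $(d-1)$-th power. The hypothesis $h \le \tfrac{1}{10000\pi^2\kappa}$ is calibrated precisely to ensure this by simultaneously making the period $2\pi\sqrt h$ short relative to the effective standard deviation $\Theta(1/\sqrt\kappa)$ and keeping the cosine's amplitude $\tfrac{\kappa h}{3}$ negligible.
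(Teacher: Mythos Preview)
Your proposal is correct and follows essentially the same route as the paper. Both arguments exploit separability to reduce to a per-coordinate estimate, partition the relevant range into periods of length $2\pi\sqrt h$, bound the multiplicative oscillation of the density on each period using $\sqrt{\kappa h} \le \tfrac{1}{100\pi}$, and compare to the geometric fraction $9/20$ occupied by the bad sub-interval. The paper organizes this via an intermediate box $\Omega' \supseteq \oha$ (showing $\pis(\Omega') \ge 1.1^{-d}$ and then $\pis(\oha)/\pis(\Omega') \ge 0.42^{d-1}$), with periods $(2\pi k\sqrt h, 2\pi(k+1)\sqrt h]$ so the bad pieces sit at the edges; you instead condition directly on periods centered at the bad intervals and compute $\Pr[x_i \in B]$ in one shot. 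These are equivalent reorganizations of the same estimate. Your gradient argument also matches (using $|\sin| \le 1$ where the paper uses $|\sin c| \le |c|$; both give the needed bound).

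One stylistic caveat: your repeated use of ``$o(1)$'' is slightly misleading, since the within-period log-density oscillation is a \emph{fixed} constant (about $0.134$ from the $\tfrac{40\pi}{3}\sqrt{\kappa h}$ term), not a quantity tending to zero in any parameter. The argument is still sound because $\tfrac{0.45}{e^{0.134}} \approx 0.394 > e^{-1} \approx 0.368$, and you correctly flag this tightness in your final paragraph; but in a polished version you should replace the $o(1)$'s with the explicit constant and verify the margin numerically, as the paper does.
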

\begin{proof}
	 We first consider a superset of $\oha$. We define the set, for $K \defeq \floor{\frac{5}{\pi\sqrt{ h\kappa}}}$,
	\begin{equation*}
		\Omega' = \Brace{x \;\Big\vert \;  |x_1| \leq 2, \forall 2\leq i \leq d,
		 -\frac 9 {20} \pi  \sqrt h- 2\pi K \sqrt h \leq x_i \leq  \frac 9 {20} \pi  \sqrt h+ 2\pi K \sqrt h}.
	\end{equation*}
    It is easy to verify that $\Omega' \supseteq \oha$. We first show $\pi^*(\Omega')$ is lower bounded by $1.1^{-d}$. Since $\fha$ is separable, the coordinates are independent, so it suffices to show each one-dimensional measure is lower bounded by $\frac{1}{1.1}$. This is a standard computation of Gaussian measure for the first coordinate, which we omit.
	For $2\leq i \leq d$, since the marginal distribution is $\frac{\kappa}{3}$-strongly logconcave, it is sub-Gaussian with parameter $\frac{3}{\kappa}$ (see Lemma 1, \cite{DwivediCW018}). It follows from a standard sub-Gaussian tail bound that the measure of the set $|x_i| \le \frac{9}{\sqrt{\kappa}}$ is at least $\frac{1}{1.1}$. For our choice of $K$, by assumption on $h$, $2\pi\sqrt{h} K \geq \frac {10}{\sqrt \kappa}-2\pi\sqrt{h} \ge \frac{9}{\sqrt \kappa}$. Combining across coordinates gives $\pi^* \Par{\Omega'} \geq 1.1^{-d}$.
	
	Next, we lower bound $\frac {\pi^*(\oha)}{\pi^*(\Omega')}$. We divide the support of the set $\oha$ and $\Omega'$ into small disjoint regions and bound $\frac {\pi^*(\oha)}{\pi^*(\Omega')} $ for each small region and each coordinate separately. For $2\leq i \leq d$, $k \in \Brack{- \floor{\frac{5}{\pi\sqrt{ h\kappa}}}-1,\floor{\frac{5}{\pi\sqrt{ h\kappa}}}}$, $k\in \mathbb{Z}$, let $$\Omega'^{(i, k)} = \Big ( 2\pi k \sqrt h , 2\pi (k+1) \sqrt h\Big],$$ and 
	$$\oha^{(i, k)}  = \bigg (2\pi k \sqrt h , 2\pi k \sqrt h + \frac 9 {20} \pi \sqrt h \bigg] \cup \Brack{ 2\pi (k+1) \sqrt h - \frac 9 {20} \pi \sqrt h   ,2\pi (k+1) \sqrt h }.$$
	Then, letting $\pis_i$ be the marginal of $\pis$ on coordinate $i$, we have 
	\begin{equation*}
		\begin{aligned}
			\frac{\pi^*_i\Par{\oha^{(i, k)} }}{\pi^*_i\Par{\Omega'^{(i, k)} }} & = \frac {\int_{2\pi k \sqrt h}^{2\pi k \sqrt h + \frac 9 {20}\pi\sqrt h} \exp \Par{-\frac \kappa 3 x_i^2 + \frac{\kappa h}{3} \cos \frac {x_i} {\sqrt{h}} } dx_i + \int_{2\pi (k+1) \sqrt h - \frac 9 {20} \pi \sqrt h }^{2\pi (k+1) \sqrt h}\exp \Par{-\frac \kappa 3 x_i^2 + \frac{\kappa h}{3} \cos \frac {x_i} {\sqrt{h}} } dx_i}{ \int_{2\pi k \sqrt h}^{2\pi (k+1) \sqrt h} \exp \Par{-\frac \kappa 3 x_i^2 + \frac{\kappa h}{3} \cos \frac {x_i} {\sqrt{h}} } dx_i}  \\
			& \geq  \frac {\int_{2\pi k \sqrt h}^{2\pi k \sqrt h + \frac 9 {20}\pi\sqrt h} \exp \Par{-\frac \kappa 3 x_i^2  } dx_i + \int_{2\pi (k+1) \sqrt h - \frac 9 {20} \pi \sqrt h }^{2\pi (k+1) \sqrt h}\exp \Par{-\frac \kappa 3 x_i^2  } dx_i}{ \int_{2\pi k \sqrt h}^{2\pi (k+1) \sqrt h} \exp \Par{-\frac \kappa 3 x_i^2 }dx_i \exp\Par{\frac{\kappa h}{3}}  } \\
			& \geq  \exp\Par{ - \frac{\kappa h}{3}}  \cdot\frac{{\frac 9 {10}\pi\sqrt h}}{2 \pi \sqrt h} \cdot \frac{\exp\Par{-\frac \kappa 3 \Par{2 \pi (k+1) \sqrt h}^2}}{ \exp \Par{ - \frac{\kappa }{3} \Par{2 \pi k \sqrt h}^2}} \geq 0.42.
		\end{aligned}
	\end{equation*}
    The second step used $\cos \frac{x_i}{\sqrt h} \geq 0$ for $x_i \in \oha^{(i, k)} $. The fourth step used the assumption $\kappa h \leq \frac{1}{10000\pi^2}$.
    
    Finally, letting $\Omega'^{(i)}$ and $\oha^{(i)}$ be the projections of $\Omega'$ and $\oha$ on the $i^{\text{th}}$ coordinate. For any $x_i \in \Omega'_i$ with $x \notin \Omega'^{(i,k)}$, and for all integers $k \in \Brack{-K - 1, K} $, $x_i \in \oha^{(i)}$, so $	\frac{\pi^*\Par{\oha^{(i)} }}{\pi^*\Par{\Omega'^{(i)} }} \geq 0.42$. Since the coordinates are independent under $\pi^*$, $\frac{\pi^*\Par{\oha} }{\pi^*\Par{\Omega'}} \geq 0.42^d$. Combining our lower bounds,
	\begin{equation*}
		\pi^*\Par{\oha} = \pi^*(\Omega') \frac {\pi^*(\oha)}{\pi^*(\Omega')} \geq \Par{\frac {1.1}{0.42}}^{-d} \geq \exp\Par{-d}.
	\end{equation*}

    Finally, we bound $\norm{\nabla \fha (x)}_2$ for $x\in \Omega'$, from the definition of $\fha$ \eqref{eq:fhadef}, 
    \begin{equation*}
    	\begin{aligned}
    		\norm{\nabla \fha(x)} _2 = \sqrt{f_1'(x)^2+ \sum_{i=2}^d f_i'(x)^2} 
    		= \sqrt{x_1^2 + \sum_{i=2}^d \Par{\frac{2\kappa}{3} x_i + \frac{\kappa\sqrt{h}}{3} \sin \frac{x_i}{\sqrt{h}}}^2}.
    	\end{aligned}
    \end{equation*}
    Then directly plugging in the definition of $\oha$ and using $|\sin c| \le |c|$ for all $c$,
    \begin{equation*}
    	\norm{\nabla \fha(x)}_2 \leq \sqrt{1.5^2 + (d-1)\Par{9\sqrt{\kappa}}^2} \leq 10\sqrt{\kappa d}.
    \end{equation*} 
\end{proof}

Finally, we combine the bounds we derived to show the acceptance probability is small within $\oha$.

\begin{lemma}
	\label{lem:acchardbd}
	Let $h = o\Par{\frac{1}{\kappa \log d}}$. For any $x\in \oha$, let $y = x-h\nabla \fha(x) + \sqrt{2h}g$ for $g\sim \mathcal{N}(0, \id)$. With probability at least $1-\frac 2 {d^5}$, we have 
	\begin{equation*}
		\fha(x)-\fha(y) +\frac{1}{4h}\Par{\norm{y-(x-h\nabla \fha(x))}_2^2 - \norm{x-(y-h\nabla \fha(y))}_2^2}= - \Omega \Par{h\kappa d}.
	\end{equation*}
\end{lemma}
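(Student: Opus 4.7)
The plan is to apply the three-way decomposition \eqref{eq:threesplit} and then bound each group of terms separately, showing that the ``main'' second-line term contributes $-\Omega(h\kappa d)$ while the remaining ``residual'' terms are lower order. Throughout, $x \in \oha$ is fixed and the probability is over $g \sim \Nor(0,\id)$ (with $x_g = x + \sqrt{2h} g$ and $y = x_g - h\nabla\fha(x)$ as in \eqref{eq:xgdef}).

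For the main term $\sum_{i \in [d]} S_i^x = -\fha(x_g) + \fha(x) - \tfrac 12 \inprod{x - x_g}{\nabla \fha(x) + \nabla \fha(x_g)}$, I would first observe that on $\oha$ the coordinates $2 \le i \le d$ satisfy $|x_i/\sqrt h - 2\pi k_i| \le \tfrac{9\pi}{20}$, so $\cos(x_i/\sqrt h) \ge \cos(9\pi/20) = \Omega(1)$. Therefore Lemma~\ref{lem:Sexpect} gives
\[
\E_{g}\Brack{\sum_{i\in[d]} S_i^x} \le -0.08 h\kappa \sum_{i=2}^d \cos \frac{x_i}{\sqrt h} \le -c_0 h\kappa d
\]
for a universal constant $c_0 > 0$. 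Lemma~\ref{lem:Sconcentrate} then bounds the deviation by $10 h\kappa \sqrt{d \log d}$ with probability $\ge 1 - d^{-5}$, which is $o(h\kappa d)$ for large $d$.

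For the residual terms in lines three and four of \eqref{eq:threesplit}, I would invoke the second inequality of Lemma~\ref{lem:genf_prop}, which upper bounds them by
\[
2(h^2 \kappa + h^3 \kappa^2)\norm{\nabla \fha(x)}_2^2 + 3(h^{1.5}\kappa + h^{2.5}\kappa^2)\norm{g}_2 \norm{\nabla\fha(x)}_2 + h^2 \kappa^2 \norm{g}_2^2.
\]
Lemma~\ref{lem:oha_propbd} gives $\norm{\nabla\fha(x)}_2 \le 10\sqrt{\kappa d}$ on $\oha$, and Fact~\ref{fact:chisquared} gives $\norm{g}_2^2 \le O(d)$ with probability $\ge 1 - d^{-5}$. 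Plugging these in, the three contributions are respectively $O(h^2 \kappa^2 d)$, $O(h^{1.5} \kappa^{1.5} d)$, and $O(h^2 \kappa^2 d)$. Under the hypothesis $h = o(1/(\kappa \log d))$, we have $h\kappa = o(1)$, so each is $o(h\kappa d)$.

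Combining the main and residual bounds via a union bound over the two high-probability events yields, with probability $\ge 1 - 2 d^{-5}$,
\[
\fha(x) - \fha(y) + \frac{1}{4h}\Par{\norm{y - (x - h\nabla \fha(x))}_2^2 - \norm{x - (y - h\nabla \fha(y))}_2^2} \le -c_0 h\kappa d + o(h\kappa d) = -\Omega(h\kappa d),
\]
as desired. The only technically delicate step is justifying the constant lower bound on $\cos(x_i/\sqrt h)$ on $\oha$, which is why the set was defined with the tighter $\tfrac{9}{20}\pi\sqrt h$ windows rather than the full $\tfrac{1}{2}\pi\sqrt h$ used in Lemma~\ref{lem:Sexpect}; everything else is bookkeeping to confirm the order-of-magnitude separation $h\kappa d \gg h\kappa\sqrt{d\log d}, h^2 \kappa^2 d, h^{1.5}\kappa^{1.5} d$ under the assumed step size regime.
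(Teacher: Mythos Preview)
Your proposal is correct and follows essentially the same approach as the paper: apply the decomposition \eqref{eq:threesplit}, use Lemmas~\ref{lem:Sexpect} and~\ref{lem:Sconcentrate} together with the fact that $\cos(x_i/\sqrt h)$ is bounded away from zero on $\oha$ to show the main term is $-\Omega(h\kappa d)$, and then control the residual via Lemma~\ref{lem:genf_prop}, the gradient bound from Lemma~\ref{lem:oha_propbd}, and the $\chi^2$ concentration on $\norm{g}_2$. Your write-up is in fact slightly more explicit than the paper's in spelling out the constant $\cos(9\pi/20)$ and the individual orders of the residual terms.
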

\begin{proof}
By combining Lemma~\ref{lem: accbd} and the decomposition \eqref{eq:threesplit}, the conclusion is equivalent to showing that the following quantity is $-\Omega(h\kappa d)$:
\begin{gather*}
-\fha(x_g) + \fha(x) - \half \inprod{x- x_g}{\nabla \fha(x) + \nabla \fha(x_g)} \\
+ \fha(x_g) - \fha(y) - \half \inprod{x-x_g}{\nabla \fha(y) -\nabla \fha(x_g)} \\
-\half \inprod{x_g-y}{\nabla \fha(x)+\nabla \fha(y)} + \frac{h}{4}\norm{\nabla \fha(x)}_2^2 - \frac{h}{4}\norm{\nabla \fha(y)}_2^2.
\end{gather*}
For $x\in \oha $, every $x_i$ for $2 \le i \le d$ has $\cos \frac{x_i}{\sqrt h} $ bounded away from $0$ by a constant and hence combining Lemmas~\ref{lem:Sexpect} and~\ref{lem:Sconcentrate} implies the first line is $-\Omega(h\kappa d)$ with probability at least $\frac 1 {d^5}$. Regarding the second and third lines, Lemma~\ref{lem:genf_prop} shows that it suffices to bound (over the set $\oha$)
\[(h^2 \kappa + h^3\kappa^2)\norm{\nabla \fha(x)}_2^2 + \Par{h^{1.5}\kappa + h^{2.5}\kappa^2}\norm{g}_2\norm{\nabla \fha(x)}_2 + h^2\kappa^2\norm{g}_2^2 = o(h\kappa d).\]
Fact~\ref{fact:chisquared} implies $\norm{g}_2 \le \sqrt{2d}$ with probability at least $1-\frac{1}{d^5}$. Combining this bound, the bound on $\norm{\nabla \fha(x)}_2$ from Lemma~\ref{lem:oha_propbd}, and the upper bound on $h$ yields the conclusion.
\end{proof}
We conclude by giving the main result of this section.

\begin{proposition}\label{prop:mainwcmala}
For $h = o(\frac{1}{\kappa \log d})$, there is a target density on $\R^d$ whose negative log-density always has Hessian eigenvalues in $[1, \kappa]$, such that the relaxation time of MALA is $\Omega(\frac{\kappa d}{\log d})$.
\end{proposition}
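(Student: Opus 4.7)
The plan is a case split on the step size $h$: for small $h$ I would invoke Corollary~\ref{corr:malaspectralgap2} directly, and for the remaining moderate range I would run a Cheeger argument with witness set $\oha$ from \eqref{eq:hardsetdef}, powered by Lemma~\ref{lem:acchardbd} and Lemma~\ref{lem:oha_propbd}. The hard density is the one with log-density $-\fha$ defined in \eqref{eq:fhadef}, whose Hessian eigenvalues always lie in $[1,\kappa]$ by construction.

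First, if $h \le C_1 \log d /(\kappa d)$ for a sufficiently small absolute constant $C_1$, then Corollary~\ref{corr:malaspectralgap2} already certifies a spectral gap of $O(h + h^2) = O(\log d/(\kappa d))$, giving relaxation time $\Omega(\kappa d/\log d)$ immediately. Otherwise $h$ lies in the range $(C_1 \log d/(\kappa d),\, o(1/(\kappa\log d)))$, so Lemma~\ref{lem:acchardbd} applies: for every $x \in \oha$, with probability at least $1 - 2/d^5$ over $g \sim \Nor(0,\id)$ the MALA acceptance probability is at most $\exp(-\Omega(h\kappa d))$. Choosing $C_1$ large enough makes this $\exp(-\Omega(h\kappa d))$ itself at most $1/d^5$, so the expected acceptance probability -- and hence $\tran_x(\oha^c)$ -- is $O(1/d^5)$ for every $x \in \oha$. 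Plugging into Cheeger's inequality \eqref{eq:conductance} with witness set $\oha$:
\[
\text{spectral gap} \;\le\; \frac{2}{\pis(\oha)}\int_\oha \tran_x(\oha^c)\,d\pis(x) \;\le\; O\!\Par{\frac{1}{d^5}},
\]
which yields relaxation time $\Omega(d^5)$, dominating $\Omega(\kappa d/\log d)$ under the standing assumption $\kappa = O(d^4)$.

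The subtle step I anticipate is verifying that $\pis(\oha) \le 1/2$, so that $\oha$ is a legitimate Cheeger witness rather than its complement. This should follow because $\oha$ constrains each of the $d-1$ non-trivial coordinates to a periodic union of intervals of Lebesgue density $9/20 < 1/2$; since the regime of interest has $\kappa h \ll 1$, the $\cos$-modulation in $\fha$ contributes only a bounded multiplicative factor to the one-period density, so the per-coordinate marginal measure of the good region is uniformly below $1/2$. Multiplying across coordinates makes $\pis(\oha)$ exponentially small in $d$, comfortably below $1/2$, completing the conductance argument and merging with Case 1 into the stated lower bound.
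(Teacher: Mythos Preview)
Your proposal is correct and follows essentially the same route as the paper: a case split on $h$, with the small-$h$ case handled directly by Corollary~\ref{corr:malaspectralgap2} and the moderate range handled by a conductance argument on the witness set $\oha$ via Lemma~\ref{lem:acchardbd}, concluding with Cheeger's inequality and the standing assumption $\kappa = O(d^4)$. The paper's own proof simply says ``identical to that of Theorem~\ref{thm:malagaussianlb}, using Lemma~\ref{lem:acchardbd} in place of Proposition~\ref{prop:gaussianbadx},'' which is exactly your plan; you additionally spell out the check $\pis(\oha)\le \tfrac12$, which the paper leaves implicit but which your per-coordinate density argument handles correctly. One cosmetic point: you call $C_1$ ``sufficiently small'' in Case~1 but then ``large enough'' in Case~2---just fix a single appropriate constant $C_1$ (any constant works for Case~1, so take it large enough for Case~2).
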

\begin{proof}
The proof is identical to that of Theorem~\ref{thm:malagaussianlb}, where we use Lemma~\ref{lem:acchardbd} in place of Proposition~\ref{prop:gaussianbadx}.
\end{proof}

\restatemalalb*
\begin{proof}
This is immediate from combining Theorem~\ref{thm:malagaussianlb} (with the hard function $\fhq$ in the range $h = \Omega(\frac{1}{\kappa \log d})$) and Proposition~\ref{prop:mainwcmala} (with the hard function $\fha$ in the range $h = o(\frac{1}{\kappa \log d})$).
\end{proof} 	%

\section{Mixing time lower bound for MALA}\label{sec:mix}

In this section, we derive a mixing time lower bound for MALA. Concretely, we show that for any step size $h$, there is a hard distribution $\pis \propto \exp(-f)$ such that $\nabla^2 f$ always has eigenvalues in $[1, \kappa]$, yet there is a $\exp(d)$-warm start $\pi_0$ such that the chain cannot mix in $o(\frac{\kappa d}{\log^2 d})$ iterations, starting from $\pi_0$. We begin by giving such a result for $h = O(\frac{\log d}{\kappa d})$ in Section~\ref{ssec:smallh}, and combine it with our developments in Sections~\ref{sec:gaussian-mala} and~\ref{sec:general-mala} to prove our main mixing time result.

\subsection{Mixing time lower bound for small $h$}\label{ssec:smallh}

Throughout this section, let $h = O\Par{\frac{\log d}{\kappa d}}$, and let $\pis = \Nor(0, \id)$ be the standard $d$-dimensional multivariate Gaussian. We will let $\pi_0$ be the marginal distribution of $\pis$ on the set
\[\Omega \defeq \Brace{x \mid \norm{x}_2^2 \le \half d}.\]
Recall from Lemma~\ref{lem:gaussiansmallball} that $\pi_0$ is a $\exp(d)$-warm start. Our main proof strategy will be to show that for such a small value of $h$, after $T = O(\frac{\kappa d}{\log^2 d})$ iterations, with constant probability both of the following events happen: no rejections occur throughout the Markov chain, and $\norm{x_t}_2^2 \le \frac{9}{10}d$ holds for all $t \in [T]$. Combining these two facts will demonstrate our total variation lower bound.

\begin{lemma}\label{lem:helpermix}
Let $\{x_t\}_{0 \le t < T}$ be the iterates of the MALA Markov chain with step size $h = O\Par{\frac{\log d}{\kappa d}}$, for $T = o(\frac{\kappa d}{\log^2 d})$ and $x_0 \sim \pi_0$. With probability at least $\frac{99}{100}$, both of the following events occur:
\begin{enumerate}
	\item Throughout the Markov chain, $\norm{x_t}_2 \le 0.9\sqrt{d}$.
	\item Throughout the Markov chain, the Metropolis filter never rejected.
\end{enumerate}
\end{lemma}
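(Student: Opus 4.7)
The plan is to couple the MALA chain $\Brace{x_t}$ with an auxiliary ``always-accept'' linear process defined by $\tilde{x}_0 \defeq x_0$ and $\tilde{x}_{t+1} \defeq (1-h)\tilde{x}_t + \sqrt{2h}\, g_t$, using the same Gaussian $g_t \sim \Nor(0, \id)$ as the MALA proposal at step $t$. Since a MALA rejection leaves the state unchanged, $x_t = \tilde{x}_t$ holds for all $t$ strictly before the first MALA rejection. It thus suffices to establish two bounds: (i) $\norm{\tilde{x}_t}_2 \le 0.9\sqrt d$ for all $t \le T$ with probability at least $\frac{199}{200}$; and (ii) on the event in (i), MALA rejects at some point in the first $T$ steps with probability at most $\frac{1}{200}$. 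A union bound, together with the coupling (on the no-rejection event, MALA's trajectory coincides with $\tilde{x}_t$), then yields both events of the lemma.

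For (i), let $Y_t \defeq \norm{\tilde{x}_t}_2^2$. Expanding the quadratic gives the recursion $\E\Brack{Y_{t+1} \mid \tilde{x}_t} = (1-h)^2 Y_t + 2hd$; iterating from $Y_0 \le d/2$ and using $(1-h)^{2T} \ge 1 - 2Th$ yields $\E[Y_T] \le d/2 + O(Thd) = d/2 + o(d/\log d)$ for our ranges of $h$ and $T$. The martingale increments $\xi_t \defeq Y_{t+1} - \E\Brack{Y_{t+1} \mid \tilde{x}_t} = 2h\Par{\norm{g_t}_2^2 - d} + 2(1-h)\sqrt{2h}\inprod{\tilde{x}_t}{g_t}$ have conditional variance $O\Par{h^2 d + h Y_t}$, which is $O(hd)$ whenever $Y_s \le 0.6d$. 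Applying Freedman's inequality to the martingale $Y_t - \E[Y_t \mid Y_0] = \sum_{s < t}(1-h)^{2(t-1-s)}\xi_s$ stopped at $\tau \defeq \inf\Brace{t \le T \mid Y_t > 0.6d}$, the cumulative conditional variance along the stopped path is $O(Thd) = o(d/\log d)$, so $\sup_{t \le T \wedge \tau}\abs{Y_t - \E[Y_t \mid Y_0]} \le 0.09 d$ with the required probability. Combined with the expectation bound this forces $\tau > T$, establishing (i).

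For (ii), Lemma~\ref{lem:gaussianaccept} with $\ma = \id$ identifies the MALA log-acceptance ratio at the proposal $y = (1-h)x + \sqrt{2h}\, g$ as $\Delta = \frac{h}{4}\Par{\norm{x}_2^2 - \norm{y}_2^2}$, so the per-step rejection probability $1 - \min(1, e^\Delta)$ is bounded by $\Par{-\Delta}_+ \le \abs{\Delta}$. On the event of (i), up to the first MALA rejection we have $x_s = \tilde{x}_s$ with $\norm{x_s}_2^2 \le 0.81d$, and an identical expansion gives $\E_{g_s}\Brack{\Par{\norm{y}_2^2 - \norm{x_s}_2^2}^2} = O(hd + h^2 d^2)$; Cauchy--Schwarz then bounds the expected per-step rejection probability by $O(h^{3/2}\sqrt d + h^2 d)$. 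Summing over $T$ steps, the total expected number of rejections is $O\Par{T(h^{3/2}\sqrt d + h^2 d)} = o(1)$ for our parameter ranges, and Markov's inequality completes (ii).

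The main technical obstacle is the interdependence of (i) and (ii): the rejection bound in (ii) requires uniform control of $\norm{\tilde x_t}_2^2$ along the path, while the martingale-concentration step in (i) needs analogous pathwise control to bound conditional variances. Both issues dissolve by working with the process stopped at $\tau$, for which the conditional variance and rejection estimates apply deterministically; the concentration step in (i) then shows $\tau > T$ with high probability, closing the loop.
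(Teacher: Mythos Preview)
Your approach is correct but takes a genuinely different route from the paper's. The paper avoids martingale concentration entirely: for the norm bound it telescopes $x_{t+1} = x_0 - h\sum_{s \le t} x_s + \sqrt{2h}\sum_{s \le t} g_s$ and applies the triangle inequality, so that the only random quantity to control is the norm of a \emph{single} Gaussian vector $G_t = \sum_{s \le t} g_s \sim \Nor(0, (t+1)\id)$, handled directly by Fact~\ref{fact:chisquared}. For the rejection bound the paper shows that, conditional on high-probability events for $g_s$ (namely $\norm{g_s}_2^2 \le 2d$ and $\inprod{x_s}{g_s} \le 10\sqrt{\log d}\,\norm{x_s}_2$), the log-acceptance ratio is $\ge -O(h\log d)$ at every step, hence the per-step acceptance probability is $\ge 1 - O(h\log d) \ge 1 - \tfrac{1}{400T}$; a union bound over $T$ steps finishes.

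Your Freedman-based argument for (i) is heavier machinery than needed here, and the stopping-time-plus-time-varying-weights structure $Y_t - \E[Y_t \mid Y_0] = \sum_{s<t}(1-h)^{2(t-1-s)}\xi_s$ requires a bit of care (this is a different weighted sum for each terminal $t$, so one either union-bounds over $t \le T$ or controls the unweighted martingale $\sum_{s<t}\xi_s$ and uses that the weights lie in $[0,1]$); also, the $\xi_s$ are unbounded, so one needs the sub-exponential version of Freedman or a preliminary truncation. Your Cauchy--Schwarz-plus-Markov argument for (ii) is a clean alternative to the paper's per-step tail bounds. Overall the paper's route is shorter and more elementary; yours would generalize more readily to nonlinear drifts where the telescoping trick is unavailable.
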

\begin{proof}
We inductively bound the failure probability of the above events in every iteration by $\frac{0.01}{T}$, which will yield the claim via a union bound. Take some iteration $t + 1$, and note that by triangle inequality, and assuming all prior iterations did not reject,
\[\norm{x_{t + 1}}_2 \le \norm{x_0}_2 + h\sum_{s = 0}^t \norm{x_s}_2 + \sqrt{2h}\norm{\sum_{s = 0}^t g_s}_2 \le \norm{x_0}_2 +0.9 hT \sqrt{d} + \sqrt{2h}\norm{G_t}_2 \le 0.8\sqrt{d} + \sqrt{2h}\norm{G_t}_2.\]
Here, we applied the inductive hypothesis on all $\norm{x_s}_2$, the initial bound $\norm{x_0}_2 \le \sqrt{\half d}$, and that $hT = o(1)$ by assumption. We also defined $G_t = \sum_{s = 0}^t g_s$, where $g_s$ is the random Gaussian used by MALA in iteration $s$; note that by independence, $G_t \sim \Nor(0, t + 1)$. By Fact~\ref{fact:chisquared}, with probability at least $\frac{1}{200T}$, $\norm{G_t}_2 \le 2\sqrt{Td}$, and hence $0.8\sqrt{d} + \sqrt{2h}\norm{G_t}_2 \le 0.9\sqrt{d}$, as desired.

Next, we prove that with probability $\ge 1 - \frac{1}{200T}$, step $t$ does not reject. This concludes the proof by union bounding over both events in iteration $t$, and then union bounding over all iterations. By the calculation in Lemma~\ref{lem:gaussianaccept}, the accept probability is
\[\min\Par{1, \exp\Par{\frac{h}{4} \Par{\Par{2h - h^2} \norm{x_t}_2^2 - 2h\norm{g}_2^2 - 2\sqrt{2h}\Par{1 - h} \inprod{x_t}{g}}}}.\]
We lower bound the argument of the exponential as follows. With probability at least $1 - d^{-5} \ge 1 - \frac{1}{400T}$, Facts~\ref{fact:mills} and~\ref{fact:chisquared} imply both of the events $\norm{g}_2^2 \le 2d$ and $\inprod{x_t}{g} \le 10\sqrt{\log d}\norm{x}_2$ occur. Conditional on these bounds, we compute (using $2h \ge h^2$ and the assumption $\norm{x_t}_2 \le 0.9\sqrt{d}$)
\[\Par{2h - h^2} \norm{x_t}_2^2 - 2h\norm{g}_2^2 - 2\sqrt{2h}\Par{1 - h} \inprod{x_t}{g} \ge -4hd - 40\sqrt{hd\log d} \ge -44\log d.\]
Hence, the acceptance probability is at least
\[\exp\Par{- 11h\log d} \ge 1 - \frac{1}{400T},\]
by our choice of $T$ with $Th\log d = o(1)$, concluding the proof.
\end{proof}

\begin{proposition}\label{prop:hugeh}
The MALA Markov chain with step size $h = O\Par{\frac{\log d}{\kappa d}}$ requires $\Omega(\frac{\kappa d}{\log^2 d})$ iterations to reach total variation distance $\frac 1 e$ to $\pis$, starting from $\pi_0$.
\end{proposition}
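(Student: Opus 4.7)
The plan is to combine Lemma~\ref{lem:helpermix} with a standard concentration bound on $\|x\|_2^2$ under the stationary Gaussian $\pis = \Nor(0,\id)$, exhibiting a set that separates the law at time $T$ from $\pis$ in total variation.

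Concretely, take the witness set $A \defeq \{x \in \R^d : \norm{x}_2^2 \le 0.81 d\}$ and let $\pi_T$ denote the distribution of $x_T$, the $T$-th MALA iterate started from $\pi_0$, for any $T = o(\frac{\kappa d}{\log^2 d})$. On one hand, Lemma~\ref{lem:helpermix} states that with probability at least $\tfrac{99}{100}$, both the norm bound $\norm{x_t}_2 \le 0.9\sqrt d$ holds for every $t \le T$ and no proposal is rejected; in particular $\norm{x_T}_2^2 \le 0.81 d$, so $\pi_T(A) \ge \tfrac{99}{100}$. On the other hand, $\norm{x}_2^2$ under $\pis$ is $\chi^2_d$-distributed, and Fact~\ref{fact:chisquared} with $a = \mathbf{1}$ and $t = \Theta(d)$ shows $\pis(A) = \Pr_{x\sim\pis}[\norm{x}_2^2 \le 0.81d] \le \exp(-\Omega(d))$.

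By the definition of total variation distance, $\norm{\pi_T - \pis}_{\textup{TV}} \ge \pi_T(A) - \pis(A) \ge \tfrac{99}{100} - \exp(-\Omega(d)) > \tfrac{1}{e}$ for large enough $d$. Since $\pi_0$ is $\exp(d)$-warm with respect to $\pis$ by Lemma~\ref{lem:gaussiansmallball}, this establishes the desired mixing time lower bound.

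There is essentially no hard step here, since Lemma~\ref{lem:helpermix} already carries out the delicate work: controlling the joint event that the chain neither rejects nor drifts far from the origin for $T$ steps. The only thing to check carefully is that the quantitative constants line up (i.e.\ that $0.9\sqrt d$ from the lemma leaves a genuine gap with the typical radius $\sqrt d$ under $\pis$), which they do with room to spare. The argument also uses nothing about $\kappa$ beyond its appearance through the step-size range in Lemma~\ref{lem:helpermix}, which is the only place the constraint $h = O(\tfrac{\log d}{\kappa d})$ enters.
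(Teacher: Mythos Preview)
Your argument is correct and follows the same idea as the paper's: use the norm bound from Lemma~\ref{lem:helpermix} together with $\chi^2$ concentration for $\pis$ on the witness set $\{x:\norm{x}_2^2\le 0.81 d\}$. The only cosmetic point is indexing (the lemma bounds $\norm{x_t}_2$ for $0\le t<T$, so apply it with $T+1$ in place of $T$).

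There is one minor structural difference worth noting. The paper introduces an auxiliary unadjusted chain $\tpi$ (the law after $T$ unfiltered steps), couples it to the actual MALA law $\hat\pi$ via the ``no rejection'' event of Lemma~\ref{lem:helpermix}, and then uses the triangle inequality $\norm{\hat\pi-\pis}_{\textup{TV}}\ge\norm{\tpi-\pis}_{\textup{TV}}-\norm{\tpi-\hat\pi}_{\textup{TV}}$. You instead read off $\pi_T(A)\ge \tfrac{99}{100}$ directly from event~1 of the lemma, since the $x_t$ there are already the MALA iterates; this is a cleaner route that avoids the intermediate chain and the coupling step entirely. The paper's detour buys nothing extra here, though the decomposition into ``unadjusted drift'' plus ``rejection coupling'' is a template that generalizes more transparently (e.g.\ to the HMC setting of Appendix~\ref{app:mixhmccosine}).
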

\begin{proof}
Let $\tpi$ be the distribution of the MALA Markov chain after $T = o(\frac{\kappa d}{\log^2 d})$ steps without applying a Metropolis filter in any step, and let $\hat{\pi}$ be the distribution after applying the actual MALA chain (including rejections). To show $\norm{\hat{\pi} - \pis}_{\textup{TV}} \ge \frac 1 e$, it suffices to show the bounds
\[\norm{\tpi - \pis}_{\textup{TV}} \ge \frac 2 5,\; \norm{\tpi - \hat{\pi}}_{\textup{TV}} \le 0.01,\]
and then we apply the triangle inequality. By the coupling characterization of total variation, the second bound follows immediately from the second claim in Lemma~\ref{lem:helpermix}, wherein we couple the two distributions whenever a rejection does not occur. To show the first bound, the measure of 
\[\Omega_{\text{large}} \defeq \Brace{x \mid \norm{x}_2^2 \ge 0.81d}\]
according to $\pis$ is at least $0.99$ by Fact~\ref{fact:chisquared}, and according to $\tpi$ it can be at most $0.01$ by the first conclusion of Lemma~\ref{lem:helpermix}. This yields the bound via the definition of total variation.
\end{proof}

\subsection{Proof of Theorem~\ref{thm:malalbmix}}

Finally, we put together the techniques of Sections~\ref{sec:gaussian-mala},~\ref{sec:general-mala}, and~\ref{ssec:smallh} to prove Theorem~\ref{thm:malalbmix}.

\restatemalalbmix*
\begin{proof}
We consider three ranges of $h$. First, if $h = \Omega\Par{\frac{1}{\kappa \log d}}$, we use the hard function $\fhq$ and the hard set in \eqref{eq:malaomegadef}, which has measure at least $\exp(-d)$ according to the stationary distribution by Lemma~\ref{lem:gaussiansmallball}. Then, applying Proposition~\ref{prop:gaussianbadx} demonstrates that the chance the Markov chain can move over $d^5$ iterations is $o(\frac 1 d)$, and hence it does not reach total variation $\frac 1 e$ in this time. Next, if $h = o\Par{\frac{1}{\kappa \log d}} \cap \omega\Par{\frac{\log d}{\kappa d}}$, we use the hard function $\fha$ and the hard set in \eqref{eq:hardsetdef}, which has measure at least $\exp(-d)$ by Lemma~\ref{lem:oha_propbd}. Applying Lemma~\ref{lem:acchardbd} again implies the chain does not mix in $d^5$ iterations. Finally, if $h = O\Par{\frac{\log d}{\kappa d}}$, applying Proposition~\ref{prop:hugeh} yields the claim.
\end{proof} 	%

\section{Lower bounds for HMC}\label{sec:hmc}

In this section, we derive a lower bound on the spectral gap of HMC. We first analyze some general structural properties of HMC in Section~\ref{ssec:hmcstructure}, as a prelude to later sections. We then provide a lower bound for HMC on quadratics in Section~\ref{ssec:hmcgaussian}, with any number of leapfrog steps $K$.

\subsection{Structure of HMC: a detour to Chebyshev polynomials}\label{ssec:hmcstructure}

We begin our development with a bound on the acceptance probability for general HMC Markov chains. Recall from \eqref{eq:hmcaccept} that this probability is (for $\ham(x, v) \defeq f(x) + \half \norm{v}_2^2$)
\begin{equation}\tag{\ref{eq:hmcaccept}}\min\Brace{1, \frac{\exp\Par{-\ham(x_K, v_K)}}{\exp\Par{-\ham(x_0, v_0)}}}.\end{equation}
We first state a helper calculation straightforwardly derived from the exposition in Section~\ref{ssec:hmc}.

\begin{fact}\label{fact:hmciterates}
One step of the HMC Markov chain starting from $x_0$ generates iterates $\{(v_{k - \half}, x_k, v_k)\}_{0 \le k \le K}$ defined recursively by the closed-form equations:
\begin{align*}
v_{k - \half} &= v_0 - \frac \eta 2 \nabla f(x_0) - \eta \sum_{j \in [k - 1]} \nabla f(x_j),\\
v_k &= v_0 - \frac \eta 2 \nabla f(x_0) - \eta \sum_{j \in [k - 1]} \nabla f(x_j) - \frac \eta 2 \nabla f(x_k), \\
x_k &= x_0 + \eta k v_0 - \frac{\eta^2k}{2}\nabla f(x_0) - \eta^2\sum_{j \in [k - 1]} (k - j)\nabla f(x_j).
\end{align*}
\end{fact}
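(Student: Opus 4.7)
The plan is to prove all three formulas simultaneously by strong induction on the leapfrog index $k$, directly unfolding the recursion from Section~\ref{ssec:hmc}. The base cases are trivial: at $k=0$ we have $x_0 = x_0$ and $v_0 = v_0$; at $k=1$ the first half-velocity update yields $v_{1 - \half} = v_0 - \frac{\eta}{2}\nabla f(x_0)$, the position update yields $x_1 = x_0 + \eta v_{\half} = x_0 + \eta v_0 - \frac{\eta^2}{2}\nabla f(x_0)$, and the second half-velocity update yields $v_1 = v_{\half} - \frac{\eta}{2}\nabla f(x_1) = v_0 - \frac{\eta}{2}\nabla f(x_0) - \frac{\eta}{2}\nabla f(x_1)$. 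These match the claimed formulas with the empty sum convention $[k-1] = \emptyset$ when $k = 1$.

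For the inductive step, I would assume the three formulas hold through index $k-1$ and deduce them at index $k$. Applying the leapfrog step $v_{k - \half} = v_{k-1} - \frac{\eta}{2}\nabla f(x_{k-1})$ to the inductive expression for $v_{k-1}$ promotes the $-\frac{\eta}{2}\nabla f(x_{k-1})$ term into a $-\eta\nabla f(x_{k-1})$ term, extending the summation index from $[k-2]$ to $[k-1]$ and producing the claimed formula for $v_{k-\half}$. Next, $x_k = x_{k-1} + \eta v_{k - \half}$; substituting the inductive formula for $x_{k-1}$ along with the just-derived expression for $v_{k-\half}$, the coefficient on $v_0$ becomes $\eta(k-1) + \eta = \eta k$, the coefficient on $\nabla f(x_0)$ becomes $\frac{\eta^2(k-1)}{2} + \frac{\eta^2}{2} = \frac{\eta^2 k}{2}$, and for each $j \in [k-1]$ the coefficient on $\nabla f(x_j)$ becomes $\eta^2(k-1-j) + \eta^2 = \eta^2(k-j)$, which matches the triangular pattern in the stated formula. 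Finally, the closing half-step $v_k = v_{k - \half} - \frac{\eta}{2}\nabla f(x_k)$ appends the required $-\frac{\eta}{2}\nabla f(x_k)$ term to the expression for $v_{k - \half}$, completing the induction.

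I do not expect a genuine obstacle here; this is purely a bookkeeping unwinding of a linear recurrence, and the only delicate point is verifying that the arithmetic-progression coefficients $(k-j)$ in the position update are produced correctly by the telescoping of the sums, which is the source of the factor that will later connect these iterates to Chebyshev polynomials in Section~\ref{ssec:hmcgaussian}. Because the computation is self-contained and short, I would present it as a concise verification rather than a separately-structured proof.
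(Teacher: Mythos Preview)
Your proposal is correct and is precisely the approach the paper intends: the paper does not give a formal proof of this fact, merely presenting it as ``a helper calculation straightforwardly derived from the exposition in Section~\ref{ssec:hmc},'' which is exactly the inductive unrolling of the leapfrog recursion that you carry out. Your bookkeeping on the coefficients, including the triangular $(k-j)$ pattern in the position update, is accurate.
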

When expanding the acceptance probability \eqref{eq:hmcaccept} using the equations in Fact~\ref{fact:hmciterates}, many terms conveniently cancel, which we capture in Lemma~\ref{lem:hmcsimplify}. This phenomenon underlies the improved performance of HMC on densities with highly-Lipschitz Hessians \cite{ChenDWY19}.

\begin{lemma}\label{lem:hmcsimplify}
For the iterates given by Fact~\ref{fact:hmciterates},
\begin{align*}\ham\Par{x_0, v_0} - \ham\Par{x_K, v_K} &= \sum_{0 \le k \le K - 1} \Par{f(x_k) - f(x_{k + 1}) + \half\inprod{\nabla f(x_k) + \nabla f(x_{k + 1})}{x_{k + 1} - x_k}} \\
&+ \frac{\eta^2}{8}\norm{\nabla f(x_0)}_2^2 - \frac{\eta^2}{8}\norm{\nabla f(x_K)}_2^2.
\end{align*}
\end{lemma}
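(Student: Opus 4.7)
The plan is to unwind the Hamiltonian difference by exploiting the three updates inside each leapfrog step, so that the claim follows from two telescoping sums. Starting from the definition $\ham(x,v) = f(x) + \tfrac12\|v\|_2^2$, I would first split
\[
\ham(x_0,v_0) - \ham(x_K,v_K) = \bigl(f(x_0) - f(x_K)\bigr) + \bigl(\tfrac12\|v_0\|_2^2 - \tfrac12\|v_K\|_2^2\bigr),
\]
and note that $f(x_0) - f(x_K) = \sum_{k=0}^{K-1}(f(x_k) - f(x_{k+1}))$ already matches the corresponding piece on the right-hand side. Thus everything reduces to proving the kinetic-energy identity
\[
\tfrac12\|v_0\|_2^2 - \tfrac12\|v_K\|_2^2 = \tfrac12\sum_{k=0}^{K-1}\langle \nabla f(x_k) + \nabla f(x_{k+1}), x_{k+1}-x_k\rangle + \tfrac{\eta^2}{8}\|\nabla f(x_0)\|_2^2 - \tfrac{\eta^2}{8}\|\nabla f(x_K)\|_2^2.
\]

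The key observation is that the middle leapfrog update $x_{k+1} = x_k + \eta v_{k+\half}$ gives $v_{k+\half} = (x_{k+1}-x_k)/\eta$, which is exactly the factor that will produce the inner products on the right. I would then analyze one leapfrog step at a time, using the other two updates $v_{k+\half} = v_k - \tfrac\eta2 \nabla f(x_k)$ and $v_{k+1} = v_{k+\half} - \tfrac\eta2\nabla f(x_{k+1})$. Expanding the squared norms,
\[
\tfrac12\|v_k\|_2^2 - \tfrac12\|v_{k+\half}\|_2^2 = \tfrac\eta2\langle v_{k+\half},\nabla f(x_k)\rangle + \tfrac{\eta^2}{8}\|\nabla f(x_k)\|_2^2,
\]
\[
\tfrac12\|v_{k+\half}\|_2^2 - \tfrac12\|v_{k+1}\|_2^2 = \tfrac\eta2\langle v_{k+\half},\nabla f(x_{k+1})\rangle - \tfrac{\eta^2}{8}\|\nabla f(x_{k+1})\|_2^2.
\]
Substituting $\eta v_{k+\half} = x_{k+1}-x_k$ and adding these two equalities gives the per-step identity
\[
\tfrac12\|v_k\|_2^2 - \tfrac12\|v_{k+1}\|_2^2 = \tfrac12\langle \nabla f(x_k)+\nabla f(x_{k+1}), x_{k+1}-x_k\rangle + \tfrac{\eta^2}{8}\bigl(\|\nabla f(x_k)\|_2^2 - \|\nabla f(x_{k+1})\|_2^2\bigr).
\]

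Finally, summing over $k=0,\dots,K-1$, the left-hand side telescopes to $\tfrac12\|v_0\|_2^2 - \tfrac12\|v_K\|_2^2$, the inner-product sum is exactly what the lemma claims, and the squared-gradient contributions telescope to $\tfrac{\eta^2}{8}(\|\nabla f(x_0)\|_2^2 - \|\nabla f(x_K)\|_2^2)$. Combining with the $f(x_0)-f(x_K)$ telescoping noted at the start completes the proof. There is no real obstacle here: the whole argument is driven by recognizing that $v_{k+\half}$ simultaneously carries the half-step velocity information and the position displacement $(x_{k+1}-x_k)/\eta$, so that the half-step cross terms turn precisely into the trapezoidal gradient inner products while the half-step quadratic corrections telescope to the boundary.
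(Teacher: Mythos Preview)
Your proof is correct and is arguably cleaner than the paper's. The paper does not work step-by-step through the half-velocities $v_{k+\half}$; instead it uses the fully-expanded closed form from Fact~\ref{fact:hmciterates} for $v_K$, writes $\tfrac12\|v_0\|_2^2 - \tfrac12\|v_K\|_2^2$ as a single large expansion in $v_0$ and all the gradients $\nabla f(x_j)$, and then separately expands each $\tfrac12\langle \nabla f(x_k)+\nabla f(x_{k+1}), x_{k+1}-x_k\rangle$ using the closed form for $x_{k+1}-x_k$, matching the two expressions term by term (with a residual $\tfrac{\eta^2}{8}$ boundary piece). Your approach instead exploits the local leapfrog structure directly: by inserting $v_{k+\half}$ between $v_k$ and $v_{k+1}$ and using $\eta v_{k+\half} = x_{k+1}-x_k$, you obtain a clean per-step identity and two simultaneous telescopes. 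This avoids any global bookkeeping and makes the appearance of the trapezoidal term and the $\tfrac{\eta^2}{8}$ boundary correction transparent; the paper's route is more computational but has the minor advantage of reusing the same closed-form machinery that drives the rest of Section~\ref{sec:hmc}.
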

\begin{proof}
Recall $\ham(x_0, v_0) - \ham(x_K, v_K) = f(x_0) - f(x_K) + \half\norm{v_0}_2^2 - \half\norm{v_K}_2^2$. We begin by expanding
\begin{align*}
\half\norm{v_0}_2^2 - \half\norm{v_K}_2^2 &= \half\norm{v_0}_2^2 - \half\norm{v_0 - \frac \eta 2 \nabla f(x_0) - \eta\sum_{j \in [K - 1]} \nabla f(x_j) - \frac \eta 2 \nabla f(x_K)}_2^2 \\
&= \eta\inprod{v_0}{\half \nabla f(x_0) + \sum_{j \in [K - 1]} \nabla f(x_j) + \half \nabla f(x_K)} \\
&- \frac{\eta^2}{2}\norm{\half \nabla f(x_0) + \sum_{j \in [K - 1]} \nabla f(x_j) + \half \nabla f(x_K)}_2^2 \\
&= \frac \eta 2 \sum_{0 \le k \le K - 1} \inprod{v_0}{\nabla f(x_k) + \nabla f(x_{k + 1})} \\
&-\frac{\eta^2}{2} \sum_{0 \le k \le K - 1} \inprod{\nabla f(x_k) + \nabla f(x_{k + 1})}{\half \nabla f(x_0) + \sum_{j \in [k]} \nabla f(x_j)}\\
&+ \frac{\eta^2}{8}\inprod{\nabla f(x_0) - \nabla f(x_K)}{\nabla f(x_0) + \nabla f(x_K)}.
\end{align*}
Here the first equality used Fact~\ref{fact:hmciterates}. Moreover, for each $0 \le k \le K - 1$, by Fact~\ref{fact:hmciterates}
\begin{align*}
\half\inprod{\nabla f(x_k) + \nabla f(x_{k + 1})}{x_{k + 1} - x_k}&=\frac{\eta}{2}\inprod{\nabla f(x_k) + \nabla f(x_{k + 1})}{v_0} \\
&- \frac{\eta^2}{2}\inprod{\nabla f(x_k) + \nabla f(x_{k + 1})}{\half \nabla f(x_0) + \sum_{j \in [k]} \nabla f(x_j)}.
\end{align*}
Combining yields the result.
\end{proof}

We state a simple corollary of Lemma~\ref{lem:hmcsimplify} in the case of quadratics.

\begin{corollary}\label{corr:hmcsimplifyquadratic}
For $f(x) = \half x^\top \ma x$, the iterates given by Fact~\ref{fact:hmciterates} satisfy
\[\ham(x_0, v_0) - \ham(x_K, v_K) = \frac{\eta^2}{8}\norm{\nabla f(x_0)}_2^2 - \frac{\eta^2}{8}\norm{\nabla f(x_K)}_2^2.\]
\end{corollary}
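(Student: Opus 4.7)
The plan is to invoke Lemma~\ref{lem:hmcsimplify} directly and observe that the summand vanishes term-by-term when $f$ is quadratic, leaving only the boundary term $\tfrac{\eta^2}{8}(\norm{\nabla f(x_0)}_2^2 - \norm{\nabla f(x_K)}_2^2)$.

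Concretely, for $f(x) = \half x^\top \ma x$ we have $\nabla f(x) = \ma x$ (here $\ma$ can be assumed symmetric without loss of generality). For any two points $u, w \in \R^d$, the quantity
\[ f(u) - f(w) + \half\inprod{\nabla f(u) + \nabla f(w)}{w - u} \]
expands to
\[ \half u^\top \ma u - \half w^\top \ma w + \half (u + w)^\top \ma (w - u) = \half u^\top \ma u - \half w^\top \ma w + \half\Par{w^\top \ma w - u^\top \ma u} = 0,\]
using symmetry of $\ma$ in the cross-term. Geometrically, this is simply the statement that the trapezoidal rule is exact for linear integrands ($\nabla f$ is linear here).

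Applying this identity with $(u, w) = (x_k, x_{k+1})$ for each $0 \le k \le K - 1$ annihilates every summand in the first line of Lemma~\ref{lem:hmcsimplify}, and the claimed identity follows. There is no nontrivial obstacle: the only substantive input is Lemma~\ref{lem:hmcsimplify} itself, and the quadratic specialization is a one-line cancellation via symmetry of $\ma$.
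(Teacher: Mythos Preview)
Your proof is correct and matches the paper's own argument essentially line for line: both invoke Lemma~\ref{lem:hmcsimplify} and verify that $f(u) - f(w) + \tfrac{1}{2}\inprod{\nabla f(u) + \nabla f(w)}{w - u} = 0$ for quadratic $f$ via the same expansion.
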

\begin{proof}
It suffices to observe that for any two points $x, y \in \R^d$,
\begin{align*}f(x) - f(y) + \half\inprod{\nabla f(x) + \nabla f(y)}{y - x} = \half x^\top \ma x - \half y^\top \ma y + \half \inprod{\ma(x + y)}{y - x} = 0.
\end{align*}
\end{proof}

Finally, it will be convenient to have a more explicit form of iterates in the case of quadratics, which follows directly from examining the recursion in Fact~\ref{fact:hmciterates}.

\begin{lemma}\label{lem:quadraticiterates}
For $f(x) = \half x^\top \ma x$, the iterates $\{x_k\}_{0 \le k \le K}$ given by Fact~\ref{fact:hmciterates} satisfy
\begin{equation}\label{eq:quadraticiterates}
\begin{aligned}x_k = \Par{\sum_{0 \le j \le k} D_{j, k} (\eta^2\ma)^{j}}x_0 + \Par{\eta\sum_{0 \le j \le k - 1} E_{j, k} (\eta^2 \ma)^j}v_0, \\
\text{where } D_{j, k} \defeq (-1)^j \cdot \frac{k}{k+j} \cdot \binom{k + j}{2j},\; E_{j, k} \defeq (-1)^j \cdot \binom{k + j}{2j+1}. \end{aligned}
\end{equation}
\end{lemma}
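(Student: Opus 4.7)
The approach is to collapse the leapfrog recursion of Fact~\ref{fact:hmciterates} into a clean second-order linear recurrence in $x_k$, recognize it as the matrix Chebyshev recurrence, and then match the closed-form coefficients. Concretely, for quadratic $f$ we have $\nabla f(x) = \ma x$, and the leapfrog updates give $v_{k+\half} - v_{k-\half} = -\eta\ma x_k$ (for $k \ge 1$) and $x_{k+1} - x_k = \eta v_{k+\half}$. Subtracting consecutive $x$-equations yields
\[
x_{k+1} - 2x_k + x_{k-1} = \eta\bigl(v_{k+\half} - v_{k-\half}\bigr) = -\eta^2 \ma x_k,
\]
so for all $k \ge 1$,
\[
x_{k+1} = \bigl(2\id - \eta^2\ma\bigr) x_k - x_{k-1},
\]
with initial conditions $x_0$ and (directly from Fact~\ref{fact:hmciterates}) $x_1 = \bigl(\id - \tfrac{\eta^2\ma}{2}\bigr)x_0 + \eta v_0$. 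Since $\eta^2\ma$ commutes with itself, setting $\mathbf{C} \defeq \id - \tfrac{\eta^2\ma}{2}$ gives $x_{k+1} = 2\mathbf{C} x_k - x_{k-1}$, which is precisely the Chebyshev recurrence. Uniqueness of solutions then shows $x_k = T_k(\mathbf{C})x_0 + \eta\, U_{k-1}(\mathbf{C}) v_0$, where $T_k$ and $U_k$ are the Chebyshev polynomials of the first and second kind.

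It remains to expand these polynomials in powers of $\mb \defeq \eta^2\ma$ and match the claimed coefficients $D_{j,k}$ and $E_{j,k}$. I would do this by a direct induction on $k$ on the whole claimed formula for $x_k$, avoiding any detour through $T_k,U_{k-1}$ in the write-up. The base cases $k=0,1$ are immediate from the definitions of $D_{0,0},D_{0,1},D_{1,1},E_{0,1}$. For the inductive step, substituting the assumed expansions for $x_k, x_{k-1}$ into $x_{k+1} = (2\id - \mb) x_k - x_{k-1}$ and matching the coefficient of $\mb^j$ on the $x_0$ and $\eta v_0$ summands reduces the claim to the two scalar recurrences
\[
D_{j,k+1} = 2 D_{j,k} - D_{j-1,k} - D_{j,k-1}, \qquad
E_{j,k+1} = 2 E_{j,k} - E_{j-1,k} - E_{j,k-1},
\]
with the conventions $D_{-1,k} = E_{-1,k} = 0$ and $D_{j,k} = 0$ for $j > k$ (and similarly for $E$). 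These are combinatorial identities that are verified by plugging in the explicit closed forms and simplifying with Pascal's rule $\binom{n}{r} = \binom{n-1}{r} + \binom{n-1}{r-1}$, applied twice to align the indices $(k{+}1{+}j, 2j)$, $(k{+}j, 2j)$, $(k{+}j, 2j{-}1)$, $(k{-}1{+}j, 2j)$ appearing in the expressions for $D$ (and analogously for $E$).

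The main obstacle is the coefficient identity: the rescaling factor $\tfrac{k}{k+j}$ in $D_{j,k}$ means it is not a pure binomial coefficient, so Pascal's rule has to be used together with careful bookkeeping of the rational prefactors. A clean way I would organize this is to first clear denominators by multiplying through by $(k+1+j)(k+j)(k-1+j)(k+j-1)$, turning each term into a product of a polynomial in $k,j$ with a single binomial; the resulting polynomial identity is then a straightforward (though slightly tedious) algebraic manipulation. The $E$ identity is strictly easier since $E_{j,k}$ is just a signed binomial. If the write-up is tight on space, one can instead invoke the classical coefficient expansions of $T_k(1-t/2)$ and $U_{k-1}(1-t/2)$ as a black box, in which case the proof reduces to the paragraph above establishing the Chebyshev recurrence and matching initial conditions.
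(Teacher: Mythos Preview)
Your proposal is correct and is in fact cleaner than the paper's route, though it arrives at the same combinatorial identity. The paper proves Lemma~\ref{lem:quadraticiterates} by working directly from the \emph{unrolled} recursion in Fact~\ref{fact:hmciterates} (the formula expressing $x_k$ as a sum over all previous gradients), which leads to an induction on the power $j$ via
\[
D_{j,k} = -\sum_{i \in [k-1]} (k-i) D_{j-1,i},\qquad E_{j,k} = -\sum_{i \in [k-1]} (k-i) E_{j-1,i},
\]
and then reduces these to hockey-stick-type binomial identities. Only afterwards, in Lemma~\ref{lem:chebyshev}, does the paper derive the three-term recurrence $D_{j,k+1} = 2D_{j,k} - D_{j-1,k} - D_{j,k-1}$ to identify the polynomials with $T_k$ and $U_{k-1}$. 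You instead go straight from the leapfrog updates to the second-order recurrence $x_{k+1} = (2\id - \eta^2\ma)x_k - x_{k-1}$ and induct on $k$, which collapses the two lemmas into one argument and makes the Chebyshev structure visible from the start. The price is that verifying $D_{j,k+1} = 2D_{j,k} - D_{j-1,k} - D_{j,k-1}$ with the $\tfrac{k}{k+j}$ prefactor is a slightly fiddly Pascal-rule computation (as you note), whereas the paper's $j$-recursion lands on a cleaner hockey-stick identity; but both are routine, and your organization is arguably the more natural one.
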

\begin{proof}
This formula can be verified to match the recursions of Fact~\ref{fact:hmciterates} by checking the base cases $D_{0, k} = 1$, $D_{1, k} = - \frac{k^2}{2}$, $E_{0, k} = k$, and (where $D_{j, k} \defeq 0$ for $j > k$ and $E_{j, k} \defeq 0$ for $j \ge k$)
\[D_{j, k} = -\sum_{i \in [k - 1]} (k - i) D_{j-1 , i} ,\;
E_{j, k} = -\sum_{i \in [k - 1]} (k - i) E_{j-1, i}.\]
In particular, by using the third displayed line of Fact~\ref{fact:hmciterates}, the coefficient of $(\eta^2 \ma)^j x_0$ in $x_k$ for $j \ge 2$ is the negated sum of the coefficients of $(\eta^2 \ma)^{j - 1}$ in all $(k - i) x_i$. Similarly, the coefficient of $\eta (\eta^2 \ma)^j v_0$ in $x_k$ for $j \ge 1$ is the negated sum of the coefficients of $\eta (\eta^2 \ma)^{j - 1}$ in all $(k - i) x_i$. The displayed coefficient identities follow from the binomial coefficient identities
\begin{align*}
\frac{k}{k + j} \binom{k + j}{2j} = \sum_{j - 1 \le i \le k - 1} \frac{(k - i)i}{i + j - 1} \binom{i + j - 1}{2j - 2},\; \binom{k + j}{2j + 1} = \sum_{j \le i \le k - 1} (k - i)\binom{i + j - 1}{2j - 1}.
\end{align*}
\iffalse
The latter follows from induction, and that Pascal's identity and the hockey-stick identity imply
\[\binom{k + j}{2j + 1} - \binom{k + j}{2j + 1} = \binom{k + j}{2j} = \sum_{j \le i \le k} \binom{i + j - 1}{2j - 1}.\]
\fi
\end{proof}

Lemma~\ref{lem:quadraticiterates} motivates the definition of the polynomials
\begin{equation}\label{eq:pqkdef}
p_k(z) \defeq \sum_{0 \le j \le k} D_{j, k} z^j,\; q_k(z) \defeq \sum_{0 \le j \le k - 1} E_{j, k} z^j.
\end{equation}
In this way, at least in the case when $\ma = \diag{\lam}$ for a vector of eigenvalues $\lam \in \R^d$, we can concisely express the coordinates of iterates in \eqref{eq:quadraticiterates} by
\begin{equation}\label{eq:xkpolynomial}
[x_k]_i = p_k(\eta^2 \lam_i) [x_0]_i + \eta q_k(\eta^2 \lam_i) [v_0]_i.
\end{equation}
Interestingly, the polynomial $p_k$ turns out to have a close relationship with the $k^{\text{th}}$ \emph{Chebyshev polynomial} (of the first kind), which we denote by $T_k$. Similarly, the polynomial $q_k$ is closely related to the $(k - 1)^{\text{th}}$ \emph{Chebyshev polynomial} of the second kind, denoted $U_{k - 1}$. The relationship between the Chebyshev polynomials and the phenomenon of \emph{acceleration} for optimizing quadratics via first-order methods has been known for some time (see e.g.\ \cite{Hardt13, Bach19} for discussions), and we find it interesting to further explore this relationship. Concretely, the following identities hold.
\begin{lemma}\label{lem:chebyshev}
Following definitions \eqref{eq:quadraticiterates}, \eqref{eq:pqkdef},
\[p_k(z) = T_k\Par{1 - \frac{z}{2}},\; q_k(z) = U_{k - 1}\Par{1 - \frac{z}{2}}. \]
\end{lemma}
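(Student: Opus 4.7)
The plan is to establish the two identities by showing that both sides satisfy the same three-term recurrence with the same two initial terms. Recall that the Chebyshev polynomials obey $T_{k+1}(y) = 2y\,T_k(y) - T_{k-1}(y)$ and $U_{k}(y) = 2y\,U_{k-1}(y) - U_{k-2}(y)$, with $T_0 = U_0 = 1$, $T_1(y) = y$, $U_1(y) = 2y$. Substituting $y = 1 - z/2$, we get the ``target'' recursion
\[
r_{k+1}(z) = (2 - z)\,r_k(z) - r_{k-1}(z),
\]
for $r_k(z) \in \{T_k(1 - z/2),\, U_{k-1}(1 - z/2)\}$. So it suffices to show that $p_k$ and $q_k$ obey the same recurrence in $k$, and to match the base cases.

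For the recurrence, my approach is to go back to the HMC iterates themselves rather than wrestle directly with the binomial identities defining $D_{j,k}, E_{j,k}$. Using the closed form $x_k = x_0 + \eta k v_0 - \tfrac{\eta^2 k}{2}\nabla f(x_0) - \eta^2 \sum_{j \in [k-1]} (k-j)\nabla f(x_j)$ from Fact~\ref{fact:hmciterates} with $f(x) = \tfrac12 x^\top \ma x$, a direct computation of $x_{k+1} - 2x_k + x_{k-1}$ shows that all contributions from $x_0$, $v_0$, $\nabla f(x_0)$, and $\nabla f(x_j)$ for $j \le k-1$ cancel, leaving only the ``new'' gradient term $-\eta^2 \ma x_k$. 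This gives the operator identity
\[
x_{k+1} = (2\id - \eta^2 \ma)\,x_k - x_{k-1} \quad \text{for all } k \ge 1.
\]
Since this holds for arbitrary initial $(x_0, v_0)$, and since the coordinate decomposition \eqref{eq:xkpolynomial} expresses each coordinate of $x_k$ via $p_k$ and $q_k$ evaluated at eigenvalues of $\eta^2 \ma$, matching coefficients of $[x_0]_i$ and $\eta [v_0]_i$ yields exactly
\[
p_{k+1}(z) = (2 - z)\,p_k(z) - p_{k-1}(z), \qquad q_{k+1}(z) = (2 - z)\,q_k(z) - q_{k-1}(z).
\]

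For the base cases, a direct evaluation of \eqref{eq:pqkdef} gives $p_0(z) = 1$ and $p_1(z) = 1 - z/2$, matching $T_0(1 - z/2) = 1$ and $T_1(1 - z/2) = 1 - z/2$. Similarly $q_1(z) = E_{0,1} = \binom{1}{1} = 1 = U_0(1 - z/2)$ and $q_2(z) = \binom{2}{1} - \binom{3}{3}\,z = 2 - z = U_1(1 - z/2)$. An immediate induction using the shared three-term recurrence then yields both identities.

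I do not anticipate a significant obstacle: the only slightly delicate step is the cancellation in $x_{k+1} - 2x_k + x_{k-1}$, which can be verified coefficient-by-coefficient across the four summand types appearing in Fact~\ref{fact:hmciterates}, and is completely mechanical. Alternatively, one can bypass the HMC route and verify the recurrence $p_{k+1}(z) = (2-z) p_k(z) - p_{k-1}(z)$ directly from the closed forms of $D_{j,k}$ via Pascal's identity applied to $\binom{k+j}{2j}$ and the factor $\tfrac{k}{k+j}$, but the HMC derivation is cleaner and reuses machinery already established.
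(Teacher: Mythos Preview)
Your proposal is correct and, in fact, cleaner than the paper's own argument. Both proofs rest on the three-term recurrence $r_{k+1}(z) = (2-z)r_k(z) - r_{k-1}(z)$ together with matching base cases, but you and the paper obtain this recurrence differently. For $p_k$, the paper verifies the coefficient identity $D_{j,k+1} = 2D_{j,k} - D_{j-1,k} - D_{j,k-1}$ ``by direct expansion'' of the binomial definitions; for $q_k$, it takes a detour through Morgan-Voyce polynomials, identifying $q_k(z) = B_{k-1}(-z)$ and then citing an external reference (Andr\'e-Jeannin) for the fact that $B_{k-1}(-z) = U_{k-1}(1 - z/2)$. Your route instead extracts the recurrence uniformly for both $p_k$ and $q_k$ from the leapfrog dynamics themselves: the second-difference computation $x_{k+1} - 2x_k + x_{k-1} = -\eta^2 \ma x_k$ follows immediately from Fact~\ref{fact:hmciterates} (or even more directly from the one-step leapfrog update), and reading off the coefficients of $[x_0]_i$ and $\eta[v_0]_i$ via \eqref{eq:xkpolynomial} gives both recurrences at once. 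This is more self-contained, avoids the external citation, and treats the two polynomial families symmetrically. The paper's approach has the minor advantage of being purely combinatorial (no appeal back to the dynamics), but yours is the argument one would naturally want here given the context.
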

\begin{proof}
It is easy to check $p_0(z) = 1$ and $p_1(z) = 1 - \frac z 2$, so the former conclusion would follow from
\[p_{k + 1}(z) = (2 - z)p_{k}(z) - p_{k - 1}(z) \iff D_{j, k + 1} = 2D_{j, k} - D_{j - 1, k} - D_{j, k - 1},\]
following well-known recursions defining the Chebyshev polynomials of the first kind. This identity can be verified by direct expansion. Moreover, for the latter conclusion, recalling the definition of Morgan-Voyce polynomials of the first kind $B_k(z)$, we can directly match $q_k(z) = B_{k - 1}(-z)$. The conclusion follows from Section 4 of \cite{AndreJeannin94}, which shows $B_{k - 1}(-z) = U_{k - 1}(1 - \frac z 2)$ as desired (note that in the work \cite{AndreJeannin94}, the indexing of Chebyshev polynomials is off by one from ours).
\end{proof}

Now for $z = \eta^2 \lam_i$, we have from \eqref{eq:xkpolynomial} and Lemma~\ref{lem:chebyshev} that $[x_k]_i = \pm [x_0]_i$ precisely when
\[p_k(z) = T_k\Par{1 - \frac{z}{2}} = \pm 1,\; q_k(z) = U_{k - 1}\Par{1 - \frac z 2} = 0.\]
Hence, this occurs whenever $1 - \frac z 2$ is both an \emph{extremal point} of $T_k$ in the range $[-1, 1]$ and a root of $U_{k - 1}$. Both of these occur exactly at the points $\cos(\frac j k \pi)$, for $0 \le j \le k$. 

\begin{proposition}\label{prop:hmcscalingnomix}
For $\kappa \ge \pi^2$ and $K \ge 2$, no $K$-step HMC Markov chain with step size $1 \ge \eta^2 \ge \frac{\pi^2}{\kappa K^2}$ can mix in finite time for all densities on $\R^d$ whose negative log-density's Hessian has eigenvalues between $1$ and $\kappa$ for all points $x \in \R^d$, initialized at a constant-warm start.
\end{proposition}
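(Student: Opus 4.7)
The plan is to use Lemma~\ref{lem:chebyshev} to exhibit step sizes at which HMC deterministically cycles on a carefully chosen quadratic, regardless of the Metropolis filter outcome. For a given $\eta^2 \in [\pi^2/(\kappa K^2), 1]$, I would take the hard density $\exp(-f)$ with $f(x) = \tfrac{\lam}{2} x_1^2 + \tfrac{1}{2}\sum_{i \geq 2} x_i^2$ for $\lam \in [1,\kappa]$ to be chosen so that $\eta^2\lam = z_j \defeq 4\sin^2(j\pi/(2K))$ for some $j \in \{1,\ldots,K-1\}$. Because $f$ is separable, the leapfrog iterates decouple across coordinates. For coordinate $1$, Lemma~\ref{lem:quadraticiterates} combined with Lemma~\ref{lem:chebyshev} gives $[x_K]_1 = T_K(1 - \eta^2\lam/2)[x_0]_1 + \eta\, U_{K-1}(1 - \eta^2\lam/2)[v_0]_1$; a parallel identity for $[v_K]_1$ can be extracted by viewing the single-coordinate leapfrog update as the $2\times 2$ symplectic map $M$ with trace $2 - \eta^2\lam$ and determinant $1$, whose eigenvalues are $e^{\pm i\phi}$ with $\cos\phi = 1 - \eta^2\lam/2$. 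For $\eta^2\lam = z_j$ one has $\cos\phi = \cos(j\pi/K)$, hence $T_K(\cos\phi) = (-1)^j$, $U_{K-1}(\cos\phi) = 0$, and (by diagonalizability of $M$ for $j \in \{1,\dots,K-1\}$) $M^K = (-1)^j \id$, so $([x_K]_1, [v_K]_1) = (-1)^j ([x_0]_1, [v_0]_1)$.

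The key consequence is that coordinate $1$ contributes exactly $0$ to $\ham(x_0, v_0) - \ham(x_K, v_K)$, so the accept probability in \eqref{eq:hmcaccept} is determined entirely by the remaining coordinates; moreover, whether the proposal is accepted or rejected, $|[x_K]_1|$ is deterministically preserved at every iteration. I would then choose the initial distribution $\pi_0$ to be $\pis$ restricted to $A \defeq \{x : [x]_1^2 \geq 1/\lam\}$. Under $\pis$, $\lam [x]_1^2$ is a $\chi^2_1$ random variable, so $\pis(A)$ is a universal constant bounded away from $0$, making $\pi_0$ constant-warm. Invariance of $|[x_K]_1|$ forces the chain distribution to assign mass $0$ to $\{x : [x]_1^2 < 1/\lam\}$ for all time, while $\pis$ assigns it constant positive mass; the total variation to $\pis$ therefore remains bounded away from $0$ for all finite iteration counts.

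The main technical step is verifying that a valid $\lam$ exists for every allowed $\eta$, i.e., that $\bigcup_{j=1}^{K-1}[z_j/\kappa, z_j]$ covers $[\pi^2/(\kappa K^2), 1]$. Using $\sin x \leq x$ one has $z_1 \leq \pi^2/K^2$, so the leftmost endpoint $z_1/\kappa$ is at most $\pi^2/(\kappa K^2)$; for $K \geq 2$, $z_{K-1} = 4\cos^2(\pi/(2K)) \geq 2 \geq 1$, so the rightmost endpoint reaches $1$. The consecutive ratios $z_{j+1}/z_j = \sin^2((j+1)\pi/(2K))/\sin^2(j\pi/(2K))$ are maximized at $j = 1$, where they equal $4\cos^2(\pi/(2K)) \leq 4$; since $\kappa \geq \pi^2 > 4$, consecutive intervals overlap, and their union is a single interval containing $[\pi^2/(\kappa K^2), 1]$. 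I expect this coverage check to be the main subtlety of the proof, as it is the precise place where the hypothesis $\kappa \geq \pi^2$ is consumed; everything else reduces to the structural Chebyshev identities already established.
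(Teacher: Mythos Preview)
Your proposal is correct and follows essentially the same route as the paper: pick a resonant eigenvalue $\lambda$ so that $1-\eta^2\lambda/2=\cos(j\pi/K)$, use the Chebyshev identities to see that the proposal satisfies $[x_K]_1=\pm[x_0]_1$ independently of $v_0$, and conclude the chain never leaves a symmetric set of constant stationary measure. Your coverage argument---bounding consecutive ratios $z_{j+1}/z_j\le 4\cos^2(\pi/(2K))\le 4<\kappa$ so that the intervals $[z_j/\kappa,z_j]$ overlap---is in fact more complete than the paper's, which only checks the two endpoints $\eta^2=1$ (with $j=K-1$) and $\eta^2=\pi^2/(\kappa K^2)$ (with $j=1$) and appeals to monotonicity of $\lambda$ in $\eta$ without explicitly verifying the overlap; your symplectic-map observation $M^K=(-1)^j\id$ is a clean extra that the paper does not need but does not hurt.
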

\begin{proof}
Fix a value of $1 \ge \eta \ge \sqrt{\frac{\pi^2}{\kappa K^2}}$. We claim there exists a $1 \le j \le K -1$ such that for
\[\lam \defeq \frac{2 \Par{1 - \cos\Par{\frac {j\pi} K}}}{\eta^2},\; 1 \le \lam \le \kappa.\]
Since $\lam$ is a monotone function of $\eta$, it suffices to check the endpoints of the interval $[\frac{\pi^2}{\kappa K^2}, 1]$. For $\eta^2 = 1$, we choose $j = K - 1$, which using $\frac{2x^2}{\pi^2} \le 1 - \cos(x) \le \frac{x^2}{2}$ for all $-\pi \le x \le \pi$, yields
\[1 \le \frac{4(K - 1)^2}{K^2} \le \lam \le  \frac{(K - 1)^2 \pi^2}{K^2} \le \pi^2 \le \kappa.\]
Similarly, for $\eta^2 = \frac{\pi^2}{\kappa K^2}$, we choose $j = 1$, which yields
\[1 \le  \frac{4}{\eta^2 K^2} \le \lam \le \frac{\pi^2}{\eta^2 K^2} \le \kappa. \]
Now, consider the quadratic $f(x) = \half x^\top \ma x$ where $\ma \in \R^{d \times d}$ is a diagonal matrix, $\ma_{11} = 1$, $\ma_{ii} = \kappa$ for all $3 \le i \le d$, and $\ma_{22} = \lam \defeq \frac{2 \Par{1 - \cos\Par{\frac {j\pi} K}}}{\eta^2}$ for the choice of $j$ which makes $1 \le \lam \le \kappa$. For any symmetric starting set capturing a constant amount of measure along the second coordinate, by Lemma~\ref{lem:quadraticiterates} and the following exposition, $x_K = \pm x_0$ along the second coordinate regardless of the random choice of velocity and thus the chain cannot leave the starting set.
\end{proof}

\subsection{HMC lower bound for all $K$}\label{ssec:hmcgaussian}

We now give our HMC lower bound, via improving Proposition~\ref{prop:hmcscalingnomix} by a dimension dependence. We begin in Section~\ref{sssec:smaller}, where we give a stronger upper bound on $\eta$ in the range $\eta^2 \le \frac{1}{\kappa K^2}$. Noting that there is a constant-sized gap between this range and the bound in Proposition~\ref{prop:hmcscalingnomix}, we rule out this gap in Section~\ref{sssec:nogap}. Finally, we handle the case of extremely large $\eta^2 \ge 1$ in Section~\ref{sssec:hugeeta}. We put these pieces together to prove Theorem~\ref{thm:mainhmc} in Section~\ref{sssec:proof}.

\subsubsection{Upper bounding $\eta = O(K^{-1} \kappa^{-\half})$ under a constant gap}\label{sssec:smaller}

For this section, we let $\ma$ be the $d \times d$ diagonal matrix which induces the hard quadratic function $\fhq$, defined in \eqref{eq:fhqdef} and reproduced here for convenience:
\[\fhq(x) \defeq \sum_{i \in [d]} f_i(x_i), \text{ where } f_i(c) = \begin{cases} \half c^2 & i = 1 \\ \frac{\kappa}{2}c^2 & 2 \le i \le d\end{cases}.\]
We also let $h \defeq \frac{\eta^2}{2}$, $x \defeq x_0$, $g \defeq v_0$, and $y \defeq x_K$ throughout for analogy to Section~\ref{sec:gaussian-mala}, so that we can apply Proposition~\ref{prop:gaussianbadx}. Next, note that by the closed-form expression given by Lemma~\ref{lem:quadraticiterates}, we can write the iterates of the HMC chain in the form \eqref{eq:alphabetadef}, reproduced here:
\begin{align*}
	y = \begin{pmatrix} y_1 \\ y_{-1} \end{pmatrix}, \text{ where } y_1 &= (1 - \ao) \xo + \bo \go \\
	\text{and } y_{-1} &= (1 - \ano) \xno + \bno \gno, \text{ for } g \sim \Nor(0, \id).
\end{align*}
Concretely, we have by Lemma~\ref{lem:quadraticiterates} that
\begin{equation}\label{eq:alphabetahmcdef}\begin{aligned}
\ao = -\sum_{1 \le j \le K} (-1)^j \Par{2h}^j \Par{\frac{K}{K+j}} \binom{K+j}{2j},\\ \ano = -\sum_{1 \le j \le K} (-1)^j \Par{2h\kappa}^j \Par{\frac{K}{K+j}} \binom{K+j}{2j},\\
\bo = \sqrt{2h} \sum_{0 \le j \le K - 1} (-1)^j (2h)^j \binom{K+j}{2j+1},\\ \bno = \sqrt{2h} \sum_{0 \le j \le K - 1} (-1)^j (2h\kappa)^j \binom{K+j}{2j+1}.
\end{aligned}
\end{equation}
By a straightforward computation, the parameters in \eqref{eq:alphabetahmcdef} satisfy the conditions of Proposition~\ref{prop:gaussianbadx}.

\begin{lemma}\label{lem:alphabetahmcbad}
Supposing $\eta^2 \le \frac{1}{\kappa K^2}$, $\ao$, $\ano$, $\bo$, $\bno$ defined in \eqref{eq:alphabetahmcdef} satisfy
\[|\ano| \le \frac 3 5 \bno^2 \kappa,\; |\ao| = O(|\ano|),\; \bo = O(\bno).\]
\end{lemma}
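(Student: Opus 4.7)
The plan is to translate the four binomial sums in \eqref{eq:alphabetahmcdef} into closed-form trigonometric expressions via the Chebyshev identification from Lemma~\ref{lem:chebyshev}, and then verify each of the three inequalities under the smallness hypothesis. Recall from \eqref{eq:pqkdef} that
\[
1 - \ao = p_K(2h),\quad 1 - \ano = p_K(2h\kappa),\quad \bo = \sqrt{2h}\,q_K(2h),\quad \bno = \sqrt{2h}\,q_K(2h\kappa),
\]
where $h \defeq \eta^2/2$. Setting $\cos\theta \defeq 1 - h\kappa$ and $\cos\psi \defeq 1 - h$ (with $\theta,\psi \in (0, \pi/2)$), the classical identities $T_K(\cos\alpha) = \cos(K\alpha)$ and $U_{K-1}(\cos\alpha) = \sin(K\alpha)/\sin\alpha$ combined with Lemma~\ref{lem:chebyshev} yield
\[
\ano = 2\sin^2\!\Par{\tfrac{K\theta}{2}},\quad \bno = \sqrt{2h}\,\frac{\sin(K\theta)}{\sin\theta},\quad \ao = 2\sin^2\!\Par{\tfrac{K\psi}{2}},\quad \bo = \sqrt{2h}\,\frac{\sin(K\psi)}{\sin\psi}.
\]
The hypothesis $\eta^2 \le 1/(\kappa K^2)$ is equivalent to $h\kappa \le 1/(2K^2)$, which by $\arccos(1-x) \approx \sqrt{2x}$ forces $K\theta$ to lie in a bounded interval near the origin, and $\psi \le \theta$ since $\kappa \ge 1$.

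For the main inequality $|\ano| \le \tfrac{3}{5}\bno^2\kappa$, I would compute
\[
\bno^2\kappa \;=\; 2h\kappa \cdot \frac{\sin^2(K\theta)}{\sin^2\theta} \;=\; \frac{4\sin^2(\theta/2)\sin^2(K\theta)}{\sin^2\theta} \;=\; \frac{\sin^2(K\theta)}{\cos^2(\theta/2)},
\]
where the last step uses $\sin\theta = 2\sin(\theta/2)\cos(\theta/2)$ and $h\kappa = 2\sin^2(\theta/2)$. Applying the double-angle identity once more inside $\sin^2(K\theta)$ collapses the ratio to the compact expression
\[
\frac{\ano}{\bno^2 \kappa} \;=\; \frac{\cos^2(\theta/2)}{2\cos^2(K\theta/2)},
\]
which only needs to be bounded above by $3/5$. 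Since $K\theta$ is quantitatively bounded by the hypothesis and $\cos$ is decreasing, one gets explicit lower bounds on $\cos(K\theta/2)$ and the trivial $\cos^2(\theta/2) \le 1$ that close the inequality.

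For $|\ao| = O(|\ano|)$, bound $\ao \le K^2\psi^2/2 = O(K^2 h)$ using $\sin(x) \le x$, and bound $\ano \ge c(K\theta)^2 = \Omega(K^2 h\kappa)$ using $\sin(x) \gtrsim x$ on the bounded interval where $K\theta/2$ lives. The ratio collapses to $O(1/\kappa) = O(1)$. For $\bo = O(\bno)$, the common $\sqrt{2h}$ factor cancels and what remains is comparing $\sin(K\psi)/\sin\psi$ to $\sin(K\theta)/\sin\theta$; both $\psi$ and $\theta$ are at most $O(1/K)$, and on this range $\sin(Kx)/\sin x$ is $\Theta(K)$, so the ratio is $\Theta(1)$.

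The main technical obstacle is the tightness of the $3/5$ constant in the first inequality: the reduced ratio $\cos^2(\theta/2)/(2\cos^2(K\theta/2))$ is close to $1/2$ in the bulk but approaches $\tfrac{1}{2\cos^2(K\theta/2)}$ near the boundary of the hypothesis, so I would need careful control of $K\theta$ via an explicit $\arccos$ expansion (or, equivalently, the expansion $1-\cos y = y^2/2 - y^4/24 + \cdots$ applied to $y = K\theta$) to confirm the constant; this is the only place where the numerics are genuinely delicate, the other two parts being routine order-of-magnitude estimates.
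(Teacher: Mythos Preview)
Your approach is correct and takes a genuinely different route from the paper's. The paper argues directly on the alternating binomial sums: it shows consecutive summands satisfy $|c_{j+1}/c_j| \le 2h\kappa K^2/12$ and $|d_{j+1}/d_j| \le 2h\kappa K^2/6$, so each sum is pinned near its leading term, yielding the explicit intervals $\ano \in [0.8\, h\kappa K^2,\, h\kappa K^2]$, $\bno \in [0.8\,\sqrt{2h}K,\, \sqrt{2h}K]$ (and analogously for $\ao,\bo$), from which all three claims follow by inspection. Your route through Lemma~\ref{lem:chebyshev} is cleaner and more informative: the exact identity $\ano/(\bno^2\kappa) = \cos^2(\theta/2)\big/\bigl(2\cos^2(K\theta/2)\bigr)$ replaces series manipulations with a single trigonometric estimate, and the remaining two claims become routine $\Theta(1)$ comparisons. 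The paper's argument is more elementary (no Chebyshev identities invoked), while yours gives sharper control and makes the dependence on $K\theta$ transparent.

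Your caution about the $3/5$ constant at the boundary is well placed, and in fact sharper than the paper's own treatment. At $\eta^2 = 1/(\kappa K^2)$ exactly, your formula gives a ratio strictly above $3/5$: for $K=2$ one computes directly $\ano = 15/32$, $\bno^2\kappa = 49/64$, so $\ano/(\bno^2\kappa) = 30/49 \approx 0.612$, and for large $K$ the limiting ratio is $1\big/\bigl(2\cos^2(1/2)\bigr) \approx 0.65$. The paper's proof quietly works under the tighter hypothesis $\eta^2 \le 1/(10\kappa K^2)$, under which both arguments deliver the $3/5$ bound with room to spare; this discrepancy between the stated and used hypotheses is harmless downstream, since the constant in Proposition~\ref{prop:gaussianbadx} has slack.
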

\begin{proof}
The proof follows since under $\eta^2 \le \frac{1}{10\kappa K^2}$, all of the parameters in \eqref{eq:alphabetahmcdef} are dominated by their first summand. We will argue this for $\ano$ and $\bno$; the corresponding conclusions for $\ao$ and $\bo$ follow analogously since $\kappa \ge 1$. Define the summands of $\ano$ and $\bno$ by
\begin{align*}
c_j \defeq (-1)^{j + 1} (2h\kappa)^j \Par{\frac{K}{K+j}} \binom{K+j}{2j},\; 1 \le j \le K,\\
d_j \defeq \sqrt{2h}(-1)^j (2h\kappa)^j \binom{K+j}{2j+1},\; 0 \le j \le K - 1.
\end{align*}
Then, we compute that for all $1 \le j \le K -1$, assuming $2h\kappa K^2 \le 1$,
\begin{equation}\label{eq:cratio}0 \ge \frac{c_{j + 1}}{c_j} = (-2h\kappa) \frac{(K + j)(K - j)}{(2j + 2)(2j + 1)} \ge -\frac{2h\kappa K^2}{12} \ge -0.1.\end{equation}
Similarly, for all $0 \le j \le K - 2$,
\begin{equation}\label{eq:dratio}0 \ge \frac{d_{j + 1}}{d_j} = \Par{-2h\kappa}\frac{(K + j + 1)(K - j - 1)}{(2j + 3)(2j + 2)} \ge -\frac{2h\kappa K^2}{6} \ge -0.2.\end{equation}
By repeating these calculations for $\ao$ and $\bo$, we see that all parameters are given by rapidly decaying geometric sequences, and thus the conclusion follows by examination from
\begin{align*}
\ao &\in \Brack{0.8 hK^2, hK^2},\; \ano \in \Brack{0.8 h\kappa K^2, h\kappa K^2},\\
\bo &\in \Brack{0.8 \sqrt{2h}K, \sqrt{2h}K},\; \bno \in \Brack{0.8 \sqrt{2h} K, \sqrt{2h} K}.
\end{align*}
\end{proof}

We obtain the following corollary by combining Lemma~\ref{lem:alphabetahmcbad}, Corollary~\ref{corr:hmcsimplifyquadratic}, and Proposition~\ref{prop:gaussianbadx}.

\begin{corollary}\label{corr:hmcbadmix}
Let $x \in \R^d$ satisfy $\norm{x_{-1}}_2 \le \sqrt{\frac{2d}{3\kappa}}$ and $|x_1| \le 5\sqrt{\log d}$, let $(x_K, v_K)$ be the result of the $K$-step HMC Markov chain with step size $\eta = \sqrt{2h}$ with $\eta^2 \le \frac{1}{\kappa K^2}$ from $x_0 = x$, and let $\ma$ be as in \eqref{eq:fhqdef}. Then with probability at least $1 - d^{-5}$ over the randomness of $v_0 \sim \Nor(0, \id)$, we have
\[\ham(x_0, v_0) - \ham(x_K, v_K) = -\Omega\Par{h^2 \kappa^2 K^2 d}.\]
\end{corollary}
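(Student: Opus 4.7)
The plan is to chain the three cited ingredients. Since $\fhq(x) = \half x^\top \ma x$ is quadratic, Corollary~\ref{corr:hmcsimplifyquadratic} causes all of the inner-step terms from Lemma~\ref{lem:hmcsimplify} to telescope away, reducing the Hamiltonian gap to
\[\ham(x_0, v_0) - \ham(x_K, v_K) = \frac{\eta^2}{8}\Par{\norm{\nabla \fhq(x_0)}_2^2 - \norm{\nabla \fhq(x_K)}_2^2} = \frac{h}{4}\Par{\norm{x_0}_{\ma^2}^2 - \norm{x_K}_{\ma^2}^2},\]
using $h = \eta^2/2$ to match the convention of Section~\ref{sec:gaussian-mala}, and $\nabla \fhq(x) = \ma x$. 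This is exactly the quantity bounded by Proposition~\ref{prop:gaussianbadx}.

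Next, I would invoke the closed form from Lemma~\ref{lem:quadraticiterates}: $x_K$ is a linear combination of $x_0$ and $v_0$ whose coefficients are polynomials in $\eta^2\ma$. Because $\ma$ in~\eqref{eq:fhqdef} is diagonal with a distinguished first coordinate, this expansion decouples into a first-coordinate update and a common update on the remaining $d-1$ coordinates, fitting exactly the template~\eqref{eq:alphabetadef} with the scalars $\ao,\ano,\bo,\bno$ recorded in~\eqref{eq:alphabetahmcdef}. Setting $y \defeq x_K$ and $g \defeq v_0$, we may thus view the one-step HMC transition on $\fhq$ as an instance of the product-form proposal analyzed in Section~\ref{sec:gaussian-mala}.

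Finally, I would apply Proposition~\ref{prop:gaussianbadx} to this $(x,y)$. The small-ball bounds on $\xno$ and $|\xo|$ are hypothesized in the statement; the structural conditions $|\ano|\le \tfrac{3}{5}\bno^2\kappa$, $|\ao| = O(|\ano|)$, and $\bo = O(\bno)$ are supplied by Lemma~\ref{lem:alphabetahmcbad} under $\eta^2 \le \frac{1}{\kappa K^2}$, whose proof also pins $\bno = \Theta(\sqrt{2h}\,K)$, i.e.\ $\bno^2 = \Theta(hK^2)$. Plugging in, Proposition~\ref{prop:gaussianbadx} yields, with probability at least $1 - d^{-5}$,
\[\ham(x_0, v_0) - \ham(x_K, v_K) = -\Omega\Par{h\kappa^2\bno^2 d} = -\Omega\Par{h^2 \kappa^2 K^2 d},\]
matching the stated conclusion.

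The only subtle point is verifying the remaining precondition of Proposition~\ref{prop:gaussianbadx}, namely $\bno = \omega(\sqrt{\log d/(\kappa d)})$, which under $\bno^2 = \Theta(hK^2)$ becomes $hK^2 = \omega(\log d/(\kappa d))$. This must hold throughout the regime in which Corollary~\ref{corr:hmcbadmix} is ultimately applied to prove Theorem~\ref{thm:mainhmc}; in the complementary small-$h$ regime one expects a direct HMC spectral gap bound (mirroring the role of Corollary~\ref{corr:malaspectralgap} for MALA) to take over. Aside from bookkeeping this regime boundary, the argument is a direct composition of the three cited ingredients, so I do not anticipate a substantial technical obstacle.
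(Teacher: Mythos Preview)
Your proposal is correct and follows exactly the paper's own argument: the paper's proof is a one-line invocation of Corollary~\ref{corr:hmcsimplifyquadratic}, Proposition~\ref{prop:gaussianbadx}, and the bound $\bno = \Theta(\sqrt{h}K)$ from the proof of Lemma~\ref{lem:alphabetahmcbad}, and you have filled in precisely those steps. Your observation about the implicit precondition $\bno = \omega(\sqrt{\log d/(\kappa d)})$ is also accurate and matches how the paper handles it in the proof of Theorem~\ref{thm:mainhmc}, where the complementary small-$h$ regime is covered by Lemma~\ref{lem:hmcspectralgap} instead.
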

\begin{proof}
It suffices to use the bounds on $\bno = \Theta(\sqrt{h}K)$ shown in the proof of Lemma~\ref{lem:alphabetahmcbad} and the conclusions of Corollary~\ref{corr:hmcsimplifyquadratic} and Proposition~\ref{prop:gaussianbadx}.
\end{proof}

\subsubsection{Removing the constant gap}\label{sssec:nogap}

We show how to improve the bound in Corollary~\ref{corr:hmcbadmix} to only require $\eta^2 \le \frac{\pi^2}{\kappa K^2}$, which removes the constant gap between the requirement of Corollary~\ref{corr:hmcbadmix} and the bound in Proposition~\ref{prop:hmcscalingnomix}. First, let $\ma_c$ be the $D \times d$ diagonal matrix which induces the following hard quadratic function $\fhqc$:
\begin{equation}\label{eq:fhqcdef}\fhqc(x) \defeq \sum_{i \in [d]} f_i(x_i), \text{ where } f_i(c) = \begin{cases}
\half c^2 & i = 1 \\
\frac{\kappa}{2\pi^2}c^2 & 2 \le i \le d - 1 \\
\frac{\kappa}{2}c^2 & i = d
\end{cases}.\end{equation}
In other words, along the first $d - 1$ coordinates, $\fhqc$ is the same as a $d - 1$-dimensional variant of $\fhq$ with condition number $\frac{\kappa}{\pi^2}$. We define a coordinate partition of $x$ and $g$ into $\xo$, $\xnod$, $\xd$, and $\go$, $\gnod$, $\gd$, and we define $\ao$, $\anod$, $\ad$, $\bo$, $\bnod$, $\bd$ in analogy with \eqref{eq:alphabetadef}. 

We first note that because of separability of $\fhqc$, and since the assumption of Corollary~\ref{corr:hmcbadmix} holds on the first $d - 1$ coordinates for $\eta^2 \le \frac{\pi^2}{\kappa K^2}$, we can immediately obtain a bound on the change in the Hamiltonian along these coordinates.

\begin{corollary}\label{corr:firstcoordsham}
Let $x \in \R^d$ satisfy $\norm{x_{-1}}_2 \le \sqrt{\frac{2\pi^2 d}{3\kappa}}$ and $|x_1| \le 5\sqrt{\log d}$, let $(x_K, v_K)$ be the result of the $K$-step HMC Markov chain with step size $\eta = \sqrt{2h}$ where $\eta^2 \le \frac{\pi^2}{\kappa K^2}$ from $x_0 = x$, and let $\ma_c$ be as in \eqref{eq:fhqcdef}. Then with probability at least $1 - 2d^{-5}$ over the randomness of $v_0 \sim \Nor(0, \id)$, we have
\[\ham\Par{\Brack{x_0}_{[d - 1]}, \Brack{v_0}_{[d - 1]}} - \ham\Par{\Brack{x_K}_{[d - 1]}, \Brack{v_K}_{[d - 1]}} = -\Omega\Par{h^2 \kappa^2 K^2 d}.\]
\end{corollary}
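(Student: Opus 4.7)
The plan is to reduce the claim to a direct application of Corollary~\ref{corr:hmcbadmix}, exploiting two structural observations about $\fhqc$. First, since $\fhqc$ is coordinate-wise separable, the gradient $\nabla \fhqc(x)$ acts coordinate-wise, so one leapfrog step preserves this separation; combined with the fact that $v_0 \sim \Nor(0, \id)$ has independent coordinates, the HMC iterates $(x_k, v_k)$ decompose into independent Markov chains along any coordinate partition. In particular, the evolution of $(\Brack{x_k}_{[d-1]}, \Brack{v_k}_{[d-1]})$ depends only on $(\Brack{x_0}_{[d-1]}, \Brack{v_0}_{[d-1]})$ and the Hessian $\diag{1, \kappa/\pi^2, \ldots, \kappa/\pi^2}$ restricted to the first $d-1$ coordinates. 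Second, the Hamiltonian $\ham$ is additive across coordinates, so the change $\ham([x_0]_{[d-1]}, [v_0]_{[d-1]}) - \ham([x_K]_{[d-1]}, [v_K]_{[d-1]})$ is exactly what Corollary~\ref{corr:hmcbadmix} would output if applied to this $(d-1)$-dimensional subproblem.

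The restriction of $\fhqc$ to the first $d-1$ coordinates is precisely the hard quadratic $\fhq$ used in Section~\ref{ssec:hmcstructure} and Section~\ref{sssec:smaller}, but in dimension $d-1$ and with condition number $\kappa' \defeq \kappa/\pi^2$. Substituting $d \mapsto d-1$ and $\kappa \mapsto \kappa'$ into Corollary~\ref{corr:hmcbadmix}, the step-size hypothesis becomes $\eta^2 \le \frac{1}{\kappa' K^2} = \frac{\pi^2}{\kappa K^2}$, which matches our assumption exactly. The bad-set conditions become $\norm{[x]_{\{2,\ldots,d-1\}}}_2 \le \sqrt{\frac{2(d-1)}{3\kappa'}} = \sqrt{\frac{2\pi^2(d-1)}{3\kappa}}$ and $|x_1| \le 5\sqrt{\log(d-1)}$. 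Both follow (for $d$ sufficiently large, up to harmless constant slack inside the $\Omega(\cdot)$) from our hypotheses $\norm{x_{-1}}_2 \le \sqrt{\frac{2\pi^2 d}{3\kappa}}$ and $|x_1| \le 5\sqrt{\log d}$, since $\norm{[x]_{\{2,\ldots,d-1\}}}_2 \le \norm{x_{-1}}_2$. The failure probability $(d-1)^{-5}$ is absorbed into $2d^{-5}$ for a generous margin.

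Applying Corollary~\ref{corr:hmcbadmix} in this reduced setting then yields
\[\ham\Par{[x_0]_{[d-1]}, [v_0]_{[d-1]}} - \ham\Par{[x_K]_{[d-1]}, [v_K]_{[d-1]}} = -\Omega\Par{h^2 (\kappa')^2 K^2 (d-1)} = -\Omega\Par{h^2 \kappa^2 K^2 d},\]
since $\pi$ is an absolute constant absorbed into the $\Omega(\cdot)$. I do not anticipate significant obstacles: the main subtlety is just carefully tracking the substitution $\kappa \mapsto \kappa/\pi^2$ in Lemma~\ref{lem:alphabetahmcbad} (which fed into Corollary~\ref{corr:hmcbadmix}) and verifying that the constant-factor slack in the bad-set conditions is preserved through Proposition~\ref{prop:gaussianbadx}. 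Because the constants in that proposition were already stated with room to spare, this verification is routine.
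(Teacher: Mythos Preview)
Your proposal is correct and is essentially the same argument the paper gives: the paper's proof of this corollary is just the one-sentence observation preceding the statement, namely that separability of $\fhqc$ lets one restrict to the first $d-1$ coordinates, where $\fhqc$ coincides with a $(d-1)$-dimensional copy of $\fhq$ with condition number $\kappa/\pi^2$, so Corollary~\ref{corr:hmcbadmix} applies directly under the translated hypothesis $\eta^2 \le \frac{1}{(\kappa/\pi^2)K^2}$. You have simply spelled out this reduction in more detail, including the routine absorption of the $d \mapsto d-1$ slack into the constants.
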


We now move to bounding the contribution of the last coordinate.

\begin{lemma}\label{lem:dcoeffbounds}
Let $(y, v_K)$ be the result of the $K$-step HMC Markov chain with step size $\eta = \sqrt{2h}$ where $\eta^2 \le \frac{\pi^2}{\kappa K^2}$, and write $y_d = (1 - \ad) x_d + \bd g_d$, for
\[\ad = -\sum_{1 \le j \le K} (-1)^j (2h\kappa)^j \Par{\frac{K}{K + j}} \binom{K + j}{2j},\; \bd = \sqrt{2h}\sum_{0 \le j \le K - 1} (-1)^j (2h\kappa)^j \binom{K+ j}{2j + 1}.\]
Then, we have $|\ad| = O(h \kappa K^2)$, $|\bd| = O(\sqrt{h} K)$.
\end{lemma}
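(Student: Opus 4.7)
The plan is to leverage the Chebyshev polynomial identification from Lemma~\ref{lem:chebyshev} rather than trying to extend the geometric-series bound of Lemma~\ref{lem:alphabetahmcbad}, since the constant gap means the ratios in \eqref{eq:cratio} and \eqref{eq:dratio} are no longer bounded below $1$. Comparing the definitions of $\ad$ and $\bd$ with \eqref{eq:pqkdef}, one reads off directly
\[
1 - \ad \;=\; p_K(2h\kappa) \;=\; T_K(1 - h\kappa), \qquad \bd \;=\; \sqrt{2h}\, q_K(2h\kappa) \;=\; \sqrt{2h}\, U_{K-1}(1 - h\kappa),
\]
where in each equality the second step is Lemma~\ref{lem:chebyshev}. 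So the lemma reduces to two scalar inequalities on classical Chebyshev polynomials.

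First I would verify that the argument $1 - h\kappa$ lies in $[-1,1]$ so that the trigonometric parametrization applies. Under $\eta^2 = 2h \le \pi^2/(\kappa K^2)$ with $K \ge 2$ we have $h\kappa \le \pi^2/(2K^2) \le \pi^2/8 < 2$, and trivially $h\kappa \ge 0$, so $1 - h\kappa \in [-1, 1]$. Writing $1 - h\kappa = \cos\theta$ for the unique $\theta \in [0, \pi]$ gives $T_K(1 - h\kappa) = \cos(K\theta)$ and $U_{K-1}(1 - h\kappa) = \sin(K\theta)/\sin\theta$.

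For $\ad$, the idea is
\[
|\ad| \;=\; |1 - \cos(K\theta)| \;=\; 2\sin^2\!\tfrac{K\theta}{2} \;\le\; \tfrac{K^2 \theta^2}{2},
\]
combined with the elementary inequality $\theta^2 \le \tfrac{\pi^2}{2}(1 - \cos\theta)$ for $\theta \in [0,\pi]$ (which follows from $\sin(\theta/2) \ge \theta/\pi$, i.e.\ concavity of sine on $[0,\pi/2]$). Plugging in $1 - \cos\theta = h\kappa$ yields $|\ad| \le \tfrac{\pi^2}{4} h \kappa K^2 = O(h \kappa K^2)$. For $\bd$, the key fact is the standard bound $|U_{K-1}(x)| \le K$ on $[-1,1]$, equivalent to $|\sin(K\theta)| \le K|\sin\theta|$ for integer $K$; this gives $|\bd| \le \sqrt{2h}\, K = O(\sqrt{h}\, K)$.

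I do not expect any serious obstacle: once the Chebyshev identification from Lemma~\ref{lem:chebyshev} is invoked, both bounds reduce to one-line trigonometric estimates. The only mild technicality is checking that $1 - h\kappa \ge -1$ (which is where the hypothesis $K \ge 2$ is used, at least implicitly) so that the $\theta$-parametrization is legal; this is what the previous section's geometric-sum approach could not handle uniformly across the enlarged step-size range.
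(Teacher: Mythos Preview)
Your proof is correct and takes a genuinely different route from the paper. The paper argues directly on the series defining $\ad$ and $\bd$: noting that the consecutive-term ratios \eqref{eq:cratio} and \eqref{eq:dratio} have denominators growing quadratically in the index $j$, it observes that after a fixed constant number of terms the ratio drops below (say) $\tfrac12$ and the tail decays geometrically, while the first finitely many terms are each within a constant factor of the leading one. This is elementary but somewhat informal about constants.

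Your approach instead invokes the Chebyshev identification of Lemma~\ref{lem:chebyshev} and reduces everything to the classical bounds $|1-\cos(K\theta)|\le K^2\theta^2/2$ and $|\sin(K\theta)|\le K|\sin\theta|$, together with $\theta^2\le\tfrac{\pi^2}{2}(1-\cos\theta)$ on $[0,\pi]$. This is cleaner, yields explicit constants (e.g.\ $|\ad|\le\tfrac{\pi^2}{4}h\kappa K^2$), and makes transparent exactly how the hypothesis $\eta^2\le\pi^2/(\kappa K^2)$ enters: it keeps $1-h\kappa$ inside $[-1,1]$ so the trigonometric parametrization is valid. The only small caveat is that your argument as written needs $K\ge 2$ to ensure $h\kappa<2$; the $K=1$ case is trivial by direct substitution ($\ad=h\kappa$, $\bd=\sqrt{2h}$), and in any event the surrounding analysis in Section~\ref{sec:hmc} already restricts to $K\ge 2$.
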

\begin{proof}
After the index $j$ is a sufficiently large constant, the geometric argument sequence of Lemma~\ref{lem:alphabetahmcbad} applies (since the denominators of the ratios \eqref{eq:cratio} and \eqref{eq:dratio} grow with the index $j$); before then, each coefficient is within a constant factor of the first in absolute value. Thus, the coefficients can be at most a constant factor larger than the first in absolute value.
\end{proof}

\begin{lemma}\label{lem:lastcoordham}
Let $|[x_0]_d| \le \frac{\log d}{\sqrt \kappa}$, $|[v_0]_d| \le \log d$, and let $(x_K, v_K)$ be the result of the $K$-step HMC Markov chain with step size $\eta = \sqrt{2h}$ where $\eta^2 \le \frac{\pi^2}{\kappa K^2}$. Then with probability at least $1 - d^{-5}$ over the randomness of $v_0 \sim \Nor(0, \id)$, we have
\[\ham\Par{\Brack{x_0}_{d}, \Brack{v_0}_{d}} - \ham\Par{\Brack{x_K}_{d}, \Brack{v_K}_{d}} = o\Par{h^2 \kappa^2 K^2 d}. \]
\end{lemma}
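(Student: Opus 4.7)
The plan is to exploit the coordinate-separability of $\fhqc$ together with the explicit quadratic-HMC iterates already developed in Section~\ref{ssec:hmcstructure}. Since $f_d(c) = \frac{\kappa}{2}c^2$ couples to no other coordinate and the leapfrog integrator preserves this separability, the dynamics on coordinate $d$ evolve independently of the remaining coordinates and depend only on $[x_0]_d$ and $[v_0]_d$. Consequently Corollary~\ref{corr:hmcsimplifyquadratic}, applied to the one-dimensional quadratic $f_d$, collapses the quantity of interest to
\begin{equation*}
\ham\Par{[x_0]_d, [v_0]_d} - \ham\Par{[x_K]_d, [v_K]_d} = \frac{\eta^2}{8}\Par{\kappa^2[x_0]_d^2 - \kappa^2[x_K]_d^2} = \frac{h\kappa^2}{4}\Par{[x_0]_d^2 - [x_K]_d^2},
\end{equation*}
reducing the task to bounding $\abs{[x_K]_d}$.

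To bound $\abs{[x_K]_d}$, I would use Lemma~\ref{lem:dcoeffbounds} to write $[x_K]_d = (1 - \ad)[x_0]_d + \bd[v_0]_d$ with $\abs{\ad} = O(h\kappa K^2)$ and $\abs{\bd} = O(\sqrt{h}K)$. The assumption $\eta^2 = 2h \le \pi^2/(\kappa K^2)$ forces $\abs{\ad} = O(1)$ and $\abs{\bd} = O(1/\sqrt{\kappa})$. Substituting the hypotheses $\abs{[x_0]_d} \le \log d/\sqrt{\kappa}$ and $\abs{[v_0]_d} \le \log d$ then yields $\abs{[x_K]_d} = O(\log d/\sqrt{\kappa})$, and hence $\abs{[x_0]_d^2 - [x_K]_d^2} = O(\log^2 d/\kappa)$. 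Plugging into the identity above bounds the Hamiltonian difference in magnitude by $O(h\kappa\log^2 d)$.

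The final step is to argue $O(h\kappa\log^2 d) = o(h^2\kappa^2 K^2 d)$. The ratio $\log^2 d/(h\kappa K^2 d)$ vanishes exactly when $h\kappa K^2 d \gg \log^2 d$, which is the regime in which Corollary~\ref{corr:firstcoordsham}'s bound of $-\Omega(h^2\kappa^2 K^2 d)$ is itself large enough in absolute value to make the acceptance probability negligible, namely $h = \omega(\sqrt{\log d}/(\kappa K\sqrt{d}))$; smaller step sizes are dealt with via a direct spectral-gap upper bound analogous to Lemma~\ref{lem:specgap}. The $1 - d^{-5}$ probability in the statement absorbs the Gaussian-tail event $\abs{[v_0]_d} \le \log d$ via Mill's inequality when $v_0 \sim \Nor(0, \id)$, which is needed when this lemma is combined with Corollary~\ref{corr:firstcoordsham} by union bound. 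I do not anticipate a substantive obstacle: once the coordinate-separability observation unlocks Corollary~\ref{corr:hmcsimplifyquadratic} on a single coordinate, the rest reduces to plugging in the already-established coefficient bounds of Lemma~\ref{lem:dcoeffbounds}.
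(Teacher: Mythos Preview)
Your proposal is correct and follows essentially the same route as the paper: reduce to coordinate $d$ by separability, apply Corollary~\ref{corr:hmcsimplifyquadratic} to obtain $\frac{h\kappa^2}{4}([x_0]_d^2 - [x_K]_d^2)$, and then control this via the coefficient bounds of Lemma~\ref{lem:dcoeffbounds} together with the assumed bounds on $[x_0]_d$ and $[v_0]_d$. The only cosmetic difference is that the paper expands $[x_0]_d^2 - [x_K]_d^2$ termwise via Lemma~\ref{lem:gaussianaccept} (into $(2\ad - \ad^2)\xd^2 - \bd^2\gd^2 - 2(1-\ad)\bd\xd\gd$) and bounds each summand, whereas you bound $|[x_K]_d|$ directly; both yield the same conclusion. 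Your explicit remark that the $o(\cdot)$ comparison requires $h = \omega(\sqrt{\log d}/(\kappa K\sqrt d))$ is accurate and is exactly the regime in which this lemma is invoked in the proof of Theorem~\ref{thm:mainhmc}.
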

\begin{proof}
We can assume $\abs{[v_0]_d} = \abs{\gd} \le \log d$, which passes the high probability bound. By Corollary~\ref{corr:hmcsimplifyquadratic} and Lemma~\ref{lem:gaussianaccept}, we wish to bound
\[\frac{h\kappa^2}{4}\Par{\Par{2\ad - \ad^2} \xd^2 - \bd^2 \gd^2 - 2(1 - \ad)\bd \xd\gd } = o\Par{h^2 \kappa^2 K^2 d}.\]
Dropping all clearly negative terms, and since $|\alpha_d| = O(1)$ by Lemma~\ref{lem:dcoeffbounds}, it is enough to show
\[\abs{h \kappa^2 \ad x_d^2} = o\Par{h^2 \kappa^2 K^2 d},\; \abs{h \kappa^2 \beta_d x_d g_d} = o\Par{h^2 \kappa^2 K^2 d}.\]
The first bound is immediate from assumptions. The second follows from assumptions as well since $\sqrt{h \kappa K^2}$ is at most a constant, so $\abs{h \kappa^2 \beta_d x_d g_d} = O(h^{1.5} \kappa^{1.5} K \log^2 d) = O(h^2 \kappa^2 K^2 \log^2 d)$.
\end{proof}

By combining Lemma~\ref{lem:lastcoordham} and Corollary~\ref{corr:firstcoordsham}, we obtain the following strengthening of Corollary~\ref{corr:hmcbadmix}.

\begin{corollary}\label{corr:hmcbadmixconstant}
Let $x \in \R^d$ satisfy $\norm{\xnod}_2 \le \sqrt{\frac{2d}{3\kappa}}$, $|x_1| \le 5\sqrt{\log d}$, and $|x_d| \le \frac{\log d}{\sqrt \kappa}$, let $(x_K, v_K)$ be the result of the $K$-step HMC Markov chain with step size $\eta = \sqrt{2h}$ with $\eta^2 \le \frac{\pi^2}{\kappa K^2}$ from $x_0 = x$, and let $\ma_c$ be as in \eqref{eq:fhqcdef}. Then with probability at least $1 - d^{-5}$ over the randomness of $v_0 \sim \Nor(0, \id)$, we have
\[\ham(x_0, v_0) - \ham(x_K, v_K) = -\Omega\Par{h^2 \kappa^2 K^2 d}.\]
\end{corollary}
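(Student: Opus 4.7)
The plan is to exploit coordinate-wise separability of the quadratic $\fhqc$: because $\ma_c$ is diagonal, $\nabla \fhqc$ acts coordinate-wise, and an inductive reading of Fact~\ref{fact:hmciterates} shows that each coordinate of $(x_k, v_k)$ depends only on the corresponding coordinates of the initial data $(x_0, v_0)$. Since the Hamiltonian $\ham(x, v) = \fhqc(x) + \half \norm{v}_2^2$ is itself separable, the total difference $\ham(x_0, v_0) - \ham(x_K, v_K)$ splits cleanly as a sum of independent per-coordinate contributions $\Delta_i$, and I can handle the first $d - 1$ coordinates and the last coordinate separately.

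First, I would invoke Corollary~\ref{corr:firstcoordsham} on the first $d - 1$ coordinates. Along these coordinates $\fhqc$ has effective condition number $\kappa/\pi^2$, so the step size condition $\eta^2 \le \pi^2/(\kappa K^2)$ matches the hypothesis there. The assumption $\norm{\xnod}_2 \le \sqrt{2d/(3\kappa)}$ is strictly stronger than the bound $\sqrt{2\pi^2 d/(3\kappa)}$ required, and $|x_1| \le 5\sqrt{\log d}$ is exactly the other hypothesis. This yields a contribution of $-\Omega(h^2 \kappa^2 K^2 d)$ to the Hamiltonian difference with failure probability at most $2 d^{-5}$.

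For the last coordinate, Fact~\ref{fact:mills} implies $|[v_0]_d| \le \log d$ except with probability at most $d^{-\omega(1)}$, so combined with the assumption $|x_d| \le \log d/\sqrt\kappa$ the hypotheses of Lemma~\ref{lem:lastcoordham} are met and $\Delta_d = o(h^2 \kappa^2 K^2 d)$. A union bound over the two failure events (after absorbing constants into the $d^{-5}$ bound) shows that the dominant negative term from the first $d - 1$ coordinates swamps the $o(\cdot)$ correction from the last coordinate, yielding $\ham(x_0, v_0) - \ham(x_K, v_K) = -\Omega(h^2 \kappa^2 K^2 d)$ as claimed. I do not anticipate a genuine obstacle here: the real work has been done in Corollary~\ref{corr:firstcoordsham} and Lemma~\ref{lem:lastcoordham}, and the only bookkeeping is to check the parameter translation into the $\kappa/\pi^2$-conditioned regime and to confirm that the single ``bad scale'' coordinate contributes only a lower-order perturbation.
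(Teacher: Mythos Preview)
Your proposal is correct and matches the paper's own argument essentially verbatim: the paper states that Corollary~\ref{corr:hmcbadmixconstant} follows ``by combining Lemma~\ref{lem:lastcoordham} and Corollary~\ref{corr:firstcoordsham},'' which is exactly your decomposition into the first $d-1$ coordinates (handled via the $\kappa/\pi^2$-conditioned reduction) plus a lower-order contribution from the last coordinate. The only minor remark is that Lemma~\ref{lem:lastcoordham} already absorbs the tail bound on $|[v_0]_d|$ into its stated $1 - d^{-5}$ guarantee, so you need not separately invoke Fact~\ref{fact:mills}; the union bound is then over just the two cited results, and the final $1 - d^{-5}$ probability is understood up to an $O(1)$ constant as elsewhere in the paper.
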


\subsubsection{Ruling out $\eta \ge 1$}\label{sssec:hugeeta}

Finally, we give a short argument ruling out the case $\eta \ge 1$ not covered by Proposition~\ref{prop:hmcscalingnomix}. In this section, let $\pis = \Nor(0, \kappa^{-1}\id)$, with negative log-density $f(x) = \frac{\kappa}{2}\norm{x}_2^2$. For $\eta \ge 1$ and $\kappa \ge 10$, \eqref{eq:xkpolynomial} and straightforward lower bounds on Chebyshev polynomials outside the range $[-1, 1]$ demonstrate the proposal distribution is of the form (from starting point $x_0 \in \R^d$)
\begin{equation}\label{eq:largeeta}x_K \gets \alpha x_0 + \beta v_0,\; v_0 \sim \Nor(0, 1),\; |\alpha| \ge 10,\; |\beta| \ge 1.\end{equation}

\begin{lemma}\label{lem:hugeeta}
Letting $(x_K, v_K)$ be the result of $K$-step HMC from any $x_0$, and $f(x) = \frac{\kappa}{2}\norm{x}_2^2$, for $\eta \ge 1$, with probability at least $1 - d^{-5}$ over the randomness of $v_0 \sim \Nor(0, \id)$, we have
\[\ham(x_0, v_0) - \ham(x_K, v_K) = -\Omega(d).\]
\end{lemma}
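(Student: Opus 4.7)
The plan is to combine the Hamiltonian-difference formula for quadratics (Corollary~\ref{corr:hmcsimplifyquadratic}) with the linear form of the proposal in \eqref{eq:largeeta}, and then apply standard Gaussian concentration to handle the cross term. Specifically, since $f(x) = \frac{\kappa}{2}\norm{x}_2^2$ gives $\nabla f(x) = \kappa x$, Corollary~\ref{corr:hmcsimplifyquadratic} yields
\[
\ham(x_0, v_0) - \ham(x_K, v_K) \;=\; \frac{\eta^2\kappa^2}{8}\Par{\norm{x_0}_2^2 - \norm{x_K}_2^2},
\]
so it suffices to show $\norm{x_K}_2^2 - \norm{x_0}_2^2 = \Omega(d)$ with probability at least $1 - d^{-5}$ (the claim then follows since $\eta^2 \kappa^2 \ge 1$).

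Because $\nabla f$ is a scalar multiple of the identity acting on $x$, the HMC iteration decouples coordinatewise with the same scalars $\alpha,\beta$ appearing in every coordinate. Thus \eqref{eq:largeeta} extends to the vector identity $x_K = \alpha x_0 + \beta v_0$ with $|\alpha| \ge 10$, $|\beta| \ge 1$, and expanding gives
\[
\norm{x_K}_2^2 - \norm{x_0}_2^2 \;=\; (\alpha^2 - 1)\norm{x_0}_2^2 + 2\alpha\beta\inprod{x_0}{v_0} + \beta^2 \norm{v_0}_2^2.
\]
The main (mild) obstacle is that $\norm{x_0}_2$ is arbitrary, so the cross term $2\alpha\beta\inprod{x_0}{v_0}$ could a priori overwhelm the $\beta^2 \norm{v_0}_2^2$ gain; we neutralize it by sacrificing half of the $\alpha^2\norm{x_0}_2^2$ mass.

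Concretely, with probability at least $1 - d^{-5}$ Fact~\ref{fact:chisquared} gives $\norm{v_0}_2^2 \ge d/2$, and writing $\hat{x}_0 \defeq x_0/\norm{x_0}_2$ (taking $\hat{x}_0$ arbitrary if $x_0 = 0$) the projection $\inprod{\hat{x}_0}{v_0} \sim \Nor(0, 1)$ satisfies $|\inprod{\hat{x}_0}{v_0}| \le C\sqrt{\log d}$ for an absolute constant $C$ via Fact~\ref{fact:mills}; union-bounding keeps failure probability at most $d^{-5}$. On this event, AM-GM yields
\[
2|\alpha\beta\inprod{x_0}{v_0}| \;\le\; \tfrac{1}{2}\alpha^2 \norm{x_0}_2^2 + 2\beta^2 \inprod{\hat{x}_0}{v_0}^2 \;\le\; \tfrac{1}{2}\alpha^2 \norm{x_0}_2^2 + 2C^2 \beta^2 \log d.
\]
Combining with $\alpha^2 \ge 100$ and $\norm{v_0}_2^2 \ge d/2$,
\[
\norm{x_K}_2^2 - \norm{x_0}_2^2 \;\ge\; \Par{\tfrac{1}{2}\alpha^2 - 1}\norm{x_0}_2^2 + \beta^2 \Par{\tfrac{d}{2} - 2C^2 \log d} \;\ge\; 49 \norm{x_0}_2^2 + \tfrac{\beta^2 d}{3}
\]
for $d$ sufficiently large, and dropping the nonnegative first term and using $|\beta| \ge 1$ gives $\norm{x_K}_2^2 - \norm{x_0}_2^2 \ge d/3$. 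Plugging back in,
\[
\ham(x_0, v_0) - \ham(x_K, v_K) \;\le\; -\frac{\eta^2\kappa^2 d}{24} \;=\; -\Omega(d),
\]
which completes the proof. No step requires anything beyond the tools already assembled; the only point needing care is the AM-GM split that lets the scale of $\alpha$ absorb the cross term without losing the $\beta^2\norm{v_0}_2^2$ contribution.
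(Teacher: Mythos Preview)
Your proof is correct and follows essentially the same approach as the paper: reduce via Corollary~\ref{corr:hmcsimplifyquadratic} to bounding $\norm{x_0}_2^2 - \norm{\alpha x_0 + \beta v_0}_2^2$, apply Gaussian concentration to $\norm{v_0}_2^2$ and $\inprod{x_0}{v_0}$, and absorb the cross term using the large $\alpha^2\norm{x_0}_2^2$ contribution. The only cosmetic difference is that you neutralize the cross term by an AM--GM split, whereas the paper effectively completes the square to get a bound of the form $20\beta^2\log d - \tfrac{\beta^2}{2}d$; both reach the same conclusion.
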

\begin{proof}
Following notation \eqref{eq:largeeta} and applying Corollary~\ref{corr:hmcsimplifyquadratic}, it suffices to show
\[\norm{x_0}_2^2 - \norm{\alpha x_0 + \beta v_0}_2^2 = -\Omega(d).\]
Expanding, it suffices to upper bound
\[\Par{1 - \alpha^2}\norm{x_0}_2^2 - 2\alpha\beta \inprod{x_0}{v_0} - \beta^2 \norm{v_0}_2^2.\]
With probability at least $1 - d^{-5}$, Fact~\ref{fact:chisquared} shows $\norm{v_0}_2^2 \ge \half d$ and $\inprod{x_0}{v_0} \ge -4 \sqrt{\log d} \norm{x_0}_2$. Hence,
\begin{align*}
\Par{1 - \alpha^2}\norm{x_0}_2^2 - 2\alpha\beta \inprod{x_0}{v_0} - \beta^2 \norm{v_0}_2^2 &\le -0.99\alpha^2 \norm{x_0}_2^2 + 8\alpha\beta\sqrt{\log d}\norm{x_0}_2  -\frac{\beta^2}{2} d \\
&\le 20\beta^2 \log d -\frac{\beta^2}{2} d = -\Omega(d).
\end{align*}
Here, we used that $\alpha^2 \ge 100$ and took $d$ larger than a sufficiently large constant.
\end{proof}

\subsubsection{Proof of Theorem~\ref{thm:mainhmc}}\label{sssec:proof}

A consequence of Corollary~\ref{corr:hmcbadmix} is that if the step size $h = \omega(\frac{\sqrt{\log d}}{\kappa K \sqrt{d}})$, initializing the chain from any $x_0$ in the set $\Omega$ defined in \eqref{eq:malaomegadef} leads to a polynomially bad mixing time. We further relate the step size to the spectral gap of the HMC Markov chain in the following.

\begin{lemma}\label{lem:hmcspectralgap}
The spectral gap of the $K$-step HMC Markov chain for sampling from the density proportional to $\exp(-\fhq)$, where $\fhq$ is defined in \eqref{eq:fhqdef}, is $O(hK^2 + h^2 K^4)$.
\end{lemma}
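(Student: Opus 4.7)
The plan is to mimic the proof of Lemma~\ref{lem:specgap} with test function $g(x) = x_1$. Since $\fhq$ is coordinate-separable and its first-coordinate marginal is standard Gaussian, by symmetry $\E_{\pis}[g] = 0$ and $\Var_{\pis}[g] = 1$. It therefore suffices to upper-bound the Dirichlet form. Bounding $\tran_x$ by $\prop_x$ (rejections contribute $0$ to the integrand, since $y = x$ there), one step of the HMC chain from $x_0 = x$ with fresh velocity $v_0 \sim \Nor(0, \id)$ gives proposal $x_K$, and
\begin{equation*}
\dir(g, g) \;\le\; \tfrac{1}{2}\,\E_{x \sim \pis,\, v_0 \sim \Nor(0, \id)}\Brack{(x_1 - [x_K]_1)^2}.
\end{equation*}

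By coordinate separability of $\fhq$ together with the closed-form expression in Lemma~\ref{lem:quadraticiterates}, $[x_K]_1$ depends only on $x_1$ and $[v_0]_1$, and can be written as $[x_K]_1 = (1 - \ao) x_1 + \bo [v_0]_1$ for the $\ao, \bo$ in \eqref{eq:alphabetahmcdef}. Since $v_0$ is independent of $x$ and both $x_1$ and $[v_0]_1$ are $\Nor(0, 1)$, the cross-term vanishes in expectation and we obtain $\dir(g,g) \le \tfrac{1}{2}(\ao^2 + \bo^2)$.

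The main step, and the place I expect the bookkeeping to be most delicate, is to show $\ao^2 + \bo^2 = O(hK^2 + h^2 K^4)$. I would invoke the Chebyshev representation of Lemma~\ref{lem:chebyshev}: $\ao = 1 - T_K(1 - h)$ and $\bo = \sqrt{2h}\,U_{K-1}(1 - h)$, and then case on $h$. When $h \ge 1$, the trivial spectral-gap bound $\le 1 = O(h^2 K^4)$ suffices. When $h \in [0, 1]$ and $hK^2 \ge 1$, the uniform Chebyshev bounds $|T_K| \le 1$ and $|U_{K-1}| \le K$ on $[-1, 1]$ yield $\ao^2 + \bo^2 \le 4 + 2hK^2 = O(h^2 K^4)$ since $hK^2 \ge 1$ forces $1 \le h^2 K^4$. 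Finally, when $hK^2 < 1$, I would use $T_K(\cos\theta) = \cos(K\theta)$ and $U_{K-1}(\cos\theta) = \sin(K\theta)/\sin\theta$ with $\theta = \arccos(1 - h) = \sqrt{2h}\,(1 + O(h))$, and Taylor-expand $\cos$ and $\sin$ about $0$ (legitimate because $K\theta = O(K\sqrt{h})$ is at most a constant in this regime) to obtain $\ao = hK^2 + O(h^2 K^4)$ and $\bo^2 = O(hK^2)$. Collecting the three cases gives $\ao^2 + \bo^2 = O(hK^2 + h^2 K^4)$, which concludes the proof.

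A more elementary alternative for the $hK^2 < 1$ regime would be to bound $\ao$ and $\bo$ directly from the alternating series in \eqref{eq:alphabetahmcdef}, by showing (as in the proof of Lemma~\ref{lem:alphabetahmcbad} specialized to $\kappa = 1$) that the summands decay geometrically once $hK^2$ is below an absolute constant, so that each series is controlled by its leading term ($hK^2$ for $\ao$ and $\sqrt{2h}\,K$ for $\bo$). Either route isolates the same obstacle, namely producing a quantitative refinement of the trivial bound $\ao = O(1)$ down to $\ao = O(hK^2)$ in the small-$hK^2$ regime.
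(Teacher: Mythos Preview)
Your proposal is correct and follows the same approach as the paper: test function $g(x)=x_1$, bound $\tran_x$ by $\prop_x$, use the closed form $[x_K]_1=(1-\ao)x_1+\bo[v_0]_1$, and conclude $\dir(g,g)\le\tfrac12(\ao^2+\bo^2)=O(hK^2+h^2K^4)$. In fact you go further than the paper, which simply asserts the final asymptotic $\ao^2+\bo^2=O(hK^2+h^2K^4)$ without justification; your Chebyshev case analysis (and the alternative geometric-decay argument paralleling Lemma~\ref{lem:alphabetahmcbad}) fills exactly the gap the paper leaves to the reader.
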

\begin{proof}
We follow the proof of Lemma~\ref{lem:specgap}; again let $g(x) = x_1$, and $\pis$ be the stationary distribution. For our function $f$, it is clear again that $\Var_{\pis}[g] = \Theta(1)$. Thus it suffices to upper bound $\dir(g, g)$: letting $\prop_x(y)$ be the proposal distribution of $K$-step HMC, and $\ao$, $\bo$ be as in \eqref{eq:alphabetahmcdef},
\begin{align*}
\dir(g, g) &\le \half \iint (x_1 - y_1)^2 \prop_x(y) d\pis(x) dy \\
&\le \E_{x \sim \pis} \Brack{\ao^2 x_1^2} + \E_{\xi \sim \Nor(0, 1)} \Brack{\bo^2 \xi^2} \\
&= \ao^2 + \bo^2 = O\Par{hK^2 + h^2 K^4}.
\end{align*}
\end{proof}

Finally, by combining Lemma~\ref{lem:hmcspectralgap} and Corollary~\ref{corr:hmcbadmixconstant}, we arrive at the main result of this section.

\restatemainhmc*
\begin{proof}
For $1 \ge \eta^2 \ge \frac{\pi^2}{\kappa K^2}$ it suffices to apply Proposition~\ref{prop:hmcscalingnomix}. For $\eta^2 \ge 1$, we apply Lemma~\ref{lem:hugeeta}. Otherwise, in the relevant range of $h = 2\eta^2$ , the dominant term in Lemma~\ref{lem:hmcspectralgap} is $O(hK^2)$. Applying Corollary~\ref{corr:hmcbadmixconstant} with the hard quadratic function $\fhqc$, the remainder of the proof follows analogously to that of Theorem~\ref{thm:malagaussianlb}.
\end{proof}

We remark that as in Theorem~\ref{thm:malagaussianlb}, it is straightforward to see that the measure of the bad region $\norm{\xnod}_2 \le \sqrt{\frac{2d}{3\kappa}}$, $|x_1| \le 5\sqrt{\log d}$, and $|x_d| \le \frac{\log d}{\sqrt \kappa}$ used in the proof is at least $\exp(-d)$. 	%

\section{Conclusion}\label{sec:conclusion}

In this work, we presented relaxation time lower bounds for the MALA and HMC Markov chains at every step size and scale, as well as a mixing time bound for MALA from an exponentially warm start. We highlight in this section a number of unexplored directions left open by our work, beyond direct strengthenings of our results, which we find interesting and defer to a future exploration.

\paragraph{Variable or random step sizes.} Our lower bounds were for MALA and HMC Markov chains with a \emph{fixed step size}. For variable step sizes which take e.g.\ values in a bounded multiplicative range, we believe our arguments can be modified to give relaxation time lower bounds for the resulting Markov chains. However, the arguments of Section~\ref{sec:hmc} (our HMC lower bound) are particularly brittle to large multiplicative ranges of candidate step sizes, because they rely on the locations of Chebyshev polynomial zeroes, which only occur in a bounded range. From an algorithm design perspective, this suggests that adaptively or randomly choosing step size ranges may be effective in improving the performance of HMC. Such a result would also give theoretical justification to the No-U-Turn sampler of \cite{HoffmanG14}, a common HMC alternative in practice. We state as an explicit open problem: can one obtain improved upper bounds, such as a $\sqrt{\kappa}$ dependence or a dimension-independent rate, for example by using variations of these strategies (variable step sizes)?

\paragraph{Necessity of $\kappa$ lower bound.} All of our witness sets throughout the paper are $\exp(-d)$ sized. It was observed in \cite{DwivediCW018} that it is possible to construct a starting distribution with warmness arbitrarily close to $\sqrt{\kappa}^d$; the marginal restriction of our witness set falls under this warmness bound for all $\kappa \ge e^2 \approx 8$. However, recently \cite{LeeST20b} proposed a \emph{proximal point reduction} approach to sampling, which (for mixing bounds scaling at least linearly in $\kappa$) shows that it suffices to sample a small number of regularized distributions, whose condition numbers are arbitrarily close to $1$.

By adjusting constants, we can modify the proof of the Gaussian lower bounds (Theorems~\ref{thm:malagaussianlb} and~\ref{thm:mainhmc}) to have witness sets with measure $c^d$ for a constant $c$ arbitrarily close to $1$ (the bottleneck being Lemma~\ref{lem:gaussiansmallball}). However, our witness set for the family of hard distributions in Section~\ref{sec:general-mala} encounters a natural barrier at measure $2^d$, since the set is sign-restricted by the cosine function (and hence can only contain roughly every other period). This bottleneck is encountered in the proof of Lemma~\ref{lem:oha_propbd}. We find it interesting to see if a stronger construction rules out the existing warm starts for all $\kappa \ge 1$, or if an upper bound can take advantage of the reduction of \cite{LeeST20b} to obtain improved dependences on dimension assuming $\kappa \approx 1$.
 	
	\subsection*{Acknowledgments}
	We would like to thank Santosh Vempala for numerous helpful conversations, pointers to the literature, and writing suggestions throughout the course of this project.
	
	YL and RS are supported  by NSF awards CCF-1749609, DMS-1839116, and DMS-2023166, a Microsoft Research Faculty Fellowship, a Sloan Research Fellowship, and a Packard Fellowship. KT is supported by NSF Grant CCF-1955039 and the Alfred P. Sloan Foundation.
	
	\newpage
	\bibliographystyle{alpha}	
	\bibliography{mala-lower-bound}
	\newpage
	\begin{appendix}

\section{Necessity of fixing a scale}\label{app:scale}

We give a simple argument showing if the step size $\eta$ of the HMC algorithm does not depend on the ``scale'' of the problem, namely the eigenvalues of the function Hessian (as opposed to scale-invariant quantities, e.g.\ the condition number $\kappa$ and the dimension), then the task of proving lower bounds becomes much more trivial. In particular, we can adaptively pick a scale of the problem in response to the fixed $\eta$. This justifies the additional requirement in Theorems~\ref{thm:malagaussianlb},~\ref{thm:malalb},~\ref{thm:malalbmix} and~\ref{thm:mainhmc} of the fixed scale $[1, \kappa]$, which we remark is a \emph{strengthening} of an analogous scale-free lower bound. 

Concretely, suppose we wished to prove the statement of Theorem~\ref{thm:mainhmc} but only on functions with condition number $\kappa$ (without specifying a range of eigenvalues). Then, for fixed $\eta$, $K$, consider
\[f(x) = \frac \lam 2 x^2, \text{ where } \lam \defeq \frac{2\Par{1 - \cos\Par{\frac \pi K}}}{\eta^2}.\]
Clearly, $f: \R \to \R$ has condition number $1 \le \kappa$ for any $\kappa$. Then, the proof of Proposition~\ref{prop:hmcscalingnomix} applies to show that the HMC Markov chain cannot leave any symmetric set, because the coefficients encounter extremal points or zeroes of the Chebyshev polynomials. 	%

\section{HMC lower bounds beyond $\kappa \sqrt{d}$}\label{app:hmcgen}

Here, we analyze the behavior of HMC on the hard function \eqref{eq:fhadef}. We will use this construction to demonstrate that when the number of steps $K$ is small, we cannot improve either the relaxation time (Section~\ref{app:relaxhmccosine}) or the mixing time (Section~\ref{app:mixhmccosine}) of MALA by more than roughly a $O(K)$ factor. 

\subsection{Relaxation time lower bound for small $K$}\label{app:relaxhmccosine}

We first give a bound on the acceptance probability \eqref{eq:hmcaccept} for general HMC Markov chain. We expand the term $-\ham(x_K, v_K) + \ham(x_0, v_0)$ and extend the result given by Lemma~\ref{lem:hmcsimplify}.
\begin{lemma}
	\label{lem:hmcsimplify2}
	For the iterates given by Fact \ref{fact:hmciterates}, write $\tx_{j} \defeq x_0+ \eta j v_0$ for $0\leq j\leq K-1$. Then, for a $\kappa$-smooth function $f$, 
	\begin{equation*}
	\begin{aligned}
	-\ham(x_K, v_K) + \ham(x_0, v_0)  \leq\sum_{j=0}^{K-1} \Par{- f(\tx_{j+1}) + f(\tx_j) + \half \inprod{\eta v_0}{\nabla f(\tx_{j+1})+\nabla f(\tx_{j})} } \\
	+\eta K \norm{v_0}_2 \max_{0 \leq j \leq K} \norm{\nabla f(x_j) - \nabla f(\tx_j)}_2 
	+ \half \eta^2 K^2 \max_{0\leq j_1, j_2 \leq K} \norm{\nabla f(\tx_K) - \nabla f(x_{j_2})}_2\norm{\nabla f(x_{j_1})}_2 \\+\half \eta^2 K^2 \max_{0\leq j_1, j_2 ,j_3\leq K} \norm{\nabla f(x_{j_3})}_2\norm{\nabla f(x_{j_1}) - \nabla f(x_{j_2})}_2  .
	\end{aligned}
	\end{equation*}
\end{lemma}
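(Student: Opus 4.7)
The plan is to start from the exact identity in Lemma~\ref{lem:hmcsimplify}, which expresses $\ham(x_0,v_0) - \ham(x_K,v_K) = \sum_{k=0}^{K-1} G_k + \tfrac{\eta^2}{8}(\norm{\nabla f(x_0)}_2^2 - \norm{\nabla f(x_K)}_2^2)$, where $G_k := f(x_k) - f(x_{k+1}) + \half\inprod{\nabla f(x_k) + \nabla f(x_{k+1})}{x_{k+1} - x_k}$ is a discrete ``trapezoidal gap''. Since the target's main sum has exactly the form $\sum_k G_k$ but with every $x_j$ replaced by $\tx_j$ and every step $x_{k+1}-x_k$ replaced by $\tx_{k+1}-\tx_k = \eta v_0$, the whole proof amounts to a careful swap $x_j \to \tx_j$ while tracking correction terms. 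My basic tool is the discrepancy $e_k := x_k - \tx_k$, which by Fact~\ref{fact:hmciterates} is a fixed linear combination of $\{\nabla f(x_j)\}_{j<k}$ satisfying $e_0 = 0$, $\norm{e_k}_2 \le \tfrac{\eta^2 K^2}{2} \max_j \norm{\nabla f(x_j)}_2$, and the key identity $(x_{k+1} - x_k) - \eta v_0 = e_{k+1} - e_k$ with $\norm{e_{k+1}-e_k}_2 \le \eta^2 K \max_j \norm{\nabla f(x_j)}_2$.

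First I would decompose each summand $G_k - \tilde G_k$ (with $\tilde G_k$ the $k$-th target summand) into three parts: (i) the $f$-difference $[f(x_k) - f(\tx_k)] - [f(x_{k+1}) - f(\tx_{k+1})]$, which telescopes (using $e_0 = 0$) to the single term $f(\tx_K) - f(x_K)$; (ii) the gradient-shift contribution $\half\inprod{\eta v_0}{[\nabla f(x_k) - \nabla f(\tx_k)] + [\nabla f(x_{k+1}) - \nabla f(\tx_{k+1})]}$, which summed and bounded by Cauchy--Schwarz and the triangle inequality immediately yields the $T_2 = \eta K\norm{v_0}_2 \max_j \norm{\nabla f(x_j) - \nabla f(\tx_j)}_2$ correction; and (iii) the step-length contribution $\half\inprod{\nabla f(x_k) + \nabla f(x_{k+1})}{e_{k+1} - e_k}$, whose sum I denote $\mathrm{Err}_2$.

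For $\mathrm{Err}_2$, I would apply Abel summation with the partial sum $\sum_{j \le k}(e_{j+1} - e_j) = e_{k+1}$, yielding $\mathrm{Err}_2 = \half\inprod{\nabla f(x_{K-1}) + \nabla f(x_K)}{e_K} + \half\sum_{k=1}^{K-1}\inprod{\nabla f(x_{k-1}) - \nabla f(x_{k+1})}{e_k}$. The second Abel term already carries the gradient-difference structure of $T_4$, and combining it with $\norm{e_k}_2 \le \tfrac{\eta^2 K^2}{2} \max_j \norm{\nabla f(x_j)}_2$ directly bounds it by a multiple of $T_4$. For the first Abel term, I pair it with the telescoped residual $f(\tx_K) - f(x_K)$ and rewrite by the trapezoidal identity $f(\tx_K) - f(x_K) = -\half\inprod{\nabla f(x_K) + \nabla f(\tx_K)}{e_K} + R_K$ (with $|R_K| \le \tfrac{\kappa}{4}\norm{e_K}_2^2$, the standard trapezoidal remainder along the segment $[x_K, \tx_K]$ for $\kappa$-smooth $f$); the pair then simplifies to $\half\inprod{\nabla f(x_{K-1}) - \nabla f(\tx_K)}{e_K} + R_K$, whose linear part absorbs into $T_3$ using $\norm{\nabla f(x_{K-1}) - \nabla f(\tx_K)}_2 \le \max_{j_2}\norm{\nabla f(\tx_K) - \nabla f(x_{j_2})}_2$ together with the bound on $\norm{e_K}_2$. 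Finally, $\tfrac{\eta^2}{8}(\norm{\nabla f(x_0)}_2^2 - \norm{\nabla f(x_K)}_2^2)$ factors as $\tfrac{\eta^2}{8}\inprod{\nabla f(x_0) - \nabla f(x_K)}{\nabla f(x_0) + \nabla f(x_K)}$, which fits comfortably into $T_4$ since $T_4$ carries the extra $K^2$ factor.

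The main obstacle will be the second-order trapezoidal remainder $R_K$, of magnitude $O(\kappa \eta^4 K^4 \max\norm{\nabla f}_2^2)$, which does not literally carry a gradient-difference factor. The resolution is to invoke $\kappa$-smoothness once more: the factor $\kappa\norm{e_K}_2$ can be bounded using gradient Lipschitzness in terms of a gradient difference along the path traced by the iterates, so that $R_K$ is absorbed either into $T_3$ (after relating to $\norm{\nabla f(\tx_K) - \nabla f(x_{j_2})}_2$) or into $T_4$ (after relating to $\norm{\nabla f(x_{j_1}) - \nabla f(x_{j_2})}_2$), with the worst-case maxima over $j_1, j_2, j_3$ and the loose $\tfrac{1}{2}$ constants providing enough slack to close the accounting.
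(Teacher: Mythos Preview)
Your overall plan --- replacing $x_j$ by $\tx_j$ and tracking the correction terms --- is the same as the paper's, and your decomposition (i)--(iii) is correct. The gap is in how you bound $\mathrm{Err}_2$. Abel summation gives $\half\sum_{k=1}^{K-1}\inprod{\nabla f(x_{k-1})-\nabla f(x_{k+1})}{e_k}$, and bounding this via $\|e_k\|_2\le \tfrac{\eta^2 k^2}{2}\max_j\|\nabla f(x_j)\|_2$ yields at best
\[
\half\max_{j_1,j_2}\|\nabla f(x_{j_1})-\nabla f(x_{j_2})\|_2\cdot \sum_{k=1}^{K-1}\|e_k\|_2 \;\le\; \frac{\eta^2 K^3}{12}\max_j\|\nabla f(x_j)\|_2\,\max_{j_1,j_2}\|\nabla f(x_{j_1})-\nabla f(x_{j_2})\|_2,
\]
which overshoots $T_4=\half\eta^2 K^2(\cdots)$ by a factor $\Theta(K)$. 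Calling this ``a multiple of $T_4$'' does not prove the lemma as stated, and the slack you invoke at the end (``the loose $\tfrac12$ constants'') is not enough to absorb an extra $K$. The $R_K$ issue you flag is secondary; even if $R_K$ were zero you would not close the accounting.

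The paper avoids this loss by not passing through Lemma~\ref{lem:hmcsimplify} at all. It expands $\ham(x_0,v_0)-\ham(x_K,v_K)$ directly, keeping the term $-\tfrac{\eta^2}{2}\|S\|_2^2$ intact where $S=\tfrac12\nabla f(x_0)+\sum_{j=1}^{K-1}\nabla f(x_j)+\tfrac12\nabla f(x_K)$, and bounds $-f(x_K)+f(\tx_K)\le \inprod{\nabla f(\tx_K)}{\tx_K-x_K}$ by \emph{convexity} (not the trapezoidal rule, so no $R_K$). The point is that both $\inprod{\nabla f(\tx_K)}{\tx_K-x_K}$ and $\tfrac{\eta^2}{2}\|S\|_2^2$ expand as sums of gradient inner products with \emph{identical} total weight $\tfrac{\eta^2 K^2}{2}$; matching them term by term gives $T_3+T_4$ with the correct constant. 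Your route can be salvaged by observing that $\mathrm{Err}_2+\tfrac{\eta^2}{8}(\|\nabla f(x_0)\|_2^2-\|\nabla f(x_K)\|_2^2)$ is exactly $-\tfrac{\eta^2}{2}\|S\|_2^2$ (just undo the telescoping in the proof of Lemma~\ref{lem:hmcsimplify}), after which you are back to the paper's argument.
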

\begin{proof}
	Expanding $	\ham\Par{x_0, v_0} - \ham\Par{x_K, v_K}$ according to the  definition of $\ham$, $x_K$ and $v_K$,
	\begin{equation}
	\label{eq:H_decom}
	\begin{aligned}
	&\ham\Par{x_0, v_0} - \ham\Par{x_K, v_K} \\
	= & -f(x_K) + f(x_0) - \frac {\norm{v_0 - \frac \eta 2 \nabla f(x_0) -\eta \sum_{j=1}^{K-1}\nabla f(x_j) - \frac \eta 2 \nabla f(x_K)}_2^2}{2} + \frac{\norm{v_0}_2^2}{2}\\
	= & -f(x_K) + f(\tx_K) -f(\tx_K) + f(x_0) +\inprod{v_0}{\frac \eta 2 \nabla f(x_0) +\eta \sum_{j=1}^{K-1}\nabla f(x_j) + \frac \eta 2 \nabla f(x_K)}\\
	&-\half {\norm{  \frac \eta 2 \nabla f(x_0) +\eta \sum_{j=1}^{K-1}\nabla f(x_j) + \frac \eta 2 \nabla f(x_K)}_2^2}\\
	= & -f(x_K) + f(\tx_K) +\sum_{j=0}^{K-1}\Par{ - f(\tx_{j+1}) + f(\tx_{j})} +\inprod{\eta v_0}{\frac 1 2 \nabla f(\tx_0) + \sum_{j=1}^{K-1}\nabla f(\tx_j) + \frac1 2 \nabla f(\tx_K)}\\
    &-\half {\norm{  \frac \eta 2 \nabla f(x_0) +\eta \sum_{j=1}^{K-1}\nabla f(x_j) + \frac \eta 2 \nabla f(x_K)}_2^2}\\
    & + \inprod{\eta v_0}{\Par{\half \nabla f(x_0) + \sum_{j=1}^{K-1}\nabla f(x_j) +\half \nabla f(x_K)}  - {\Par{\half \nabla f(\tx_0) + \sum_{j=1}^{K-1}\nabla f(\tx_j) +\half \nabla f(\tx_K)}}}\\
	= & \sum_{j=0}^{K-1} \Par{- f(\tx_{j+1}) +  f(\tx_j) + \half \inprod{\eta v_0}{\nabla f(\tx_{j+1})+\nabla f(\tx_{j})} } \\
	& - f(x_K) + f(\tx_K) - \half \norm{\frac \eta 2 \nabla f(x_0) + \eta \sum_{j=1}^{K-1} \nabla f(x_j) + \frac \eta 2 \nabla f(x_K) }_2^2 \\
	& + \inprod{\eta v_0}{\Par{\half \nabla f(x_0) + \sum_{j=1}^{K-1}\nabla f(x_j) +\half \nabla f(x_K)}  - {\Par{\half \nabla f(\tx_0) + \sum_{j=1}^{K-1}\nabla f(\tx_j) +\half \nabla f(\tx_K)}}}.
	\end{aligned}
	\end{equation}
	Now we bound the last two lines in the decomposition \eqref{eq:H_decom}. 
	For the second-to-last line of \eqref{eq:H_decom}, by convexity of $f$ and the Cauchy-Schwarz inequality,
	\begin{equation}
	\label{eq:H_decom1}
	\begin{aligned}
	& - f(x_K) + f(\tx_K) - \half \norm{\frac \eta 2 \nabla f(x_0) + \eta \sum_{j=1}^{K-1} \nabla f(x_j) - \frac \eta 2 \nabla f(x_K) }_2^2\\
	\leq &\inprod{\nabla f(\tx_K)}{\half K\eta^2 \nabla f(x_0)+ \eta^2 \sum_{j=1}^{K-1} (K-j) \nabla f(x_j)} - \half \norm{\frac \eta 2 \nabla f(x_0) + \eta \sum_{j=1}^{K-1} \nabla f(x_j) +\frac \eta 2 \nabla f(x_K) }_2^2\\
	\leq  & \half \eta^2K^2 \max_{0\leq j_1,j_2,j_3\leq K}\Par{\nabla f(\tx_K)^\top \nabla f(x_{j_1}) - \nabla f(x_{j_2})^\top\nabla f(x_{j_3})} \\
	\leq &  \half \eta^2K^2 \Par{\max_{0\leq j_1, j_2 \leq K} \norm{\nabla f(\tx_K) - \nabla f(x_{j_2})}_2\norm{\nabla f(x_{j_1})}_2 +\max_{0\leq j_1, j_2 ,j_3\leq K} \norm{\nabla f(x_{j_3})}_2\norm{\nabla f(x_{j_1}) - \nabla f(x_{j_2})}_2  }.
	\end{aligned}
	\end{equation}
	In the third line above, we used that the total ``number of gradient inner products'' for both terms is $\half \eta^2 K^2$, and took the largest such inner product difference.
	
	Finally, for the last line of \eqref{eq:H_decom}, by the Cauchy-Schwarz inequality,
	\begin{equation}
	\label{eq:H_decom2}
	\begin{aligned}
	& \inprod{\eta v_0}{\Par{\half \nabla f(x_0) + \sum_{j=1}^{K-1}\nabla f(x_j) +\half \nabla f(x_K)}  - {\Par{\half \nabla f(\tx_0) + \sum_{j=1}^{K-1}\nabla f(\tx_j) +\half \nabla f(\tx_K)}}}\\
	\leq &\eta K \norm{v_0}_2 \max_{0 \leq j \leq K} \norm{\nabla f(x_j) - \nabla f(\tx_j)}_2.
	\end{aligned}
	\end{equation}
	Combining \eqref{eq:H_decom}, \eqref{eq:H_decom1} and \eqref{eq:H_decom2} proves the desired claim.
\end{proof}

We define a hard function $\fha: \mathbb{R}^d \rightarrow \mathbb{R}$ that is $\kappa$-smooth and $1$-strongly convex (note it is the same hard function as in Section~\ref{sec:hmc}, under the change of variable $h = \frac{\eta^2} 2$). We will show it is hard to sample from the density proportional to $\exp(-\fha)$ when $K$ is small.

\begin{equation}\label{eq:fhadef2}\fha(x) \defeq \sum_{i \in [d]} f_i(x_i), \text{ where } f_i(c) = \begin{cases} \half c^2 & i = 1 \\ \frac{\kappa}{3}c^2 - \frac {\kappa \eta^2} 6  \cos \Par{\frac {\sqrt 2 c} \eta} & 2 \le i \le d\end{cases}.
\end{equation}

\begin{lemma}
	\label{lem:hmcRbd}
	For $\eta^2 \leq 1$, let $\tx_{j} \defeq x_0 + \eta j v_0$ for $0\leq j \leq K-1$ and $v_0 \sim \mathcal{N}(0,\id)$. Let $R^{(j)}$ be the random variable with given by $R^{(j)} = \sum_{i=1}^dR_i^{(j)} $ where $$R_i^{(j)} ={- f_i([\tx_{j+1}]_i) + f_i([\tx_j]_i) + \half {\eta [v_0]_i} \cdot \Par {\nabla f_i([\tx_{j+1}]_i)+\nabla f_i([\tx_{j}]_i)}}.$$ Then, 
	\begin{equation}
	\label{eq:hmc_r_exp}
	\E_{v_0 \sim \mathcal{N}(0,1)} \Brack{\sum_{j=0}^{K-1}R^{(j)}}  \leq -0.02 \kappa \eta^2  \sum _{i=2}^d\cos\frac {\sqrt{2}[x_0]_i} \eta.
	\end{equation}
	and
	\begin{equation}
	\label{eq:hmc_r_exp2}
	\Pr \Brack{	{\sum_{j=0}^{K-1}R^{(j)} - \E \Brack{\sum_{j=0}^{K-1}R^{(j)} }} \geq 10\eta^2K\kappa \sqrt{d\log d }}\leq \frac{1}{d^5}.
	\end{equation}
\end{lemma}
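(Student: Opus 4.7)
The proof plan mirrors Lemma~\ref{lem:Sexpect} and Lemma~\ref{lem:Sconcentrate} from the single-step MALA analysis, but now requires summing contributions across $K$ auxiliary trajectory points $\tx_0, \tx_1, \ldots, \tx_{K-1}$. First I would rewrite each $R_i^{(j)}$ using the Taylor-with-remainder identity that underlies Lemma~\ref{lem:genf_prop}, yielding
\[ R_i^{(j)} \;=\; -\eta^2 [v_0]_i^2 \int_0^1 (\half - \tau)\, f_i''(\tx_j + \tau \eta v_0)\, d\tau. \]
The $i = 1$ coordinate has $f_1'' \equiv 1$ and contributes zero by the odd symmetry of $\half - \tau$ about $\tau = \half$. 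For $i \ge 2$, substituting $f_i''(c) = \tfrac{2\kappa}{3} + \tfrac{\kappa}{3}\cos(\sqrt 2 c/\eta)$ kills the constant term by the same symmetry, and the change of variables $\sigma = j + \tau$ collapses the sum over $j$ into a single integral
\[ \sum_{j=0}^{K-1} R_i^{(j)} \;=\; -\frac{\kappa \eta^2 [v_0]_i^2}{3} \int_0^K (\half - \{\sigma\}) \cos\frac{\sqrt 2 ([x_0]_i + \sigma \eta [v_0]_i)}{\eta}\, d\sigma. \]

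Taking expectation over $[v_0]_i \sim \Nor(0,1)$, I would expand the cosine via sum-to-product and apply the Gaussian Fourier identities $\E[u^2 \cos(\beta u)] = (1 - \beta^2) e^{-\beta^2/2}$ and $\E[u^2 \sin(\beta u)] = 0$, both immediate from differentiating $\E[e^{i\beta u}] = e^{-\beta^2/2}$ twice in $\beta$. The sine term vanishes and the result is $\E[\sum_j R_i^{(j)}] = -\frac{\kappa \eta^2}{3}\, I(K)\, \cos\frac{\sqrt 2 [x_0]_i}{\eta}$ with $I(K) \defeq \int_0^K (\half - \{\sigma\})(1 - 2\sigma^2) e^{-\sigma^2}\, d\sigma$. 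The core technical step, and the main obstacle, is establishing the uniform lower bound $I(K) \ge 0.06$ for every $K \ge 1$: bounding $|(1 - 2\sigma^2)| e^{-\sigma^2}$ period-by-period in absolute value is far too lossy, since the resulting tail integral exceeds the $k=0$ contribution. The clean workaround is to notice the antiderivative identity $(1 - 2\sigma^2) e^{-\sigma^2} = \tfrac{d}{d\sigma}(\sigma e^{-\sigma^2})$ and integrate by parts on each period, which gives
\[ \int_k^{k+1} (\half - \{\sigma\})(1 - 2\sigma^2) e^{-\sigma^2}\, d\sigma \;=\; \tfrac{1-k}{2}\, e^{-k^2} \;-\; \tfrac{k+2}{2}\, e^{-(k+1)^2}. \]
The $k=0$ summand equals $\half - e^{-1} \approx 0.132$; every $k \ge 1$ summand is negative, so $I(K)$ is monotonically decreasing in $K$ with telescoping limit $I(\infty) = \half - \sum_{k \ge 1} k e^{-k^2}$, and bounding that tail by $e^{-1} + 2 e^{-4} + \int_2^\infty x e^{-x^2}\, dx \le 0.41$ yields $I(\infty) \ge 0.09 \ge 0.06$. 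Summing across $i \ge 2$ gives \eqref{eq:hmc_r_exp}.

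For the concentration bound \eqref{eq:hmc_r_exp2}, the key observation is that each $S_i \defeq \sum_{j=0}^{K-1} R_i^{(j)}$ depends only on $[v_0]_i$, so $\{S_i\}_{i \in [d]}$ is independent. From $|f_i''| \le \kappa$ and the integral representation above I would derive $|S_i| \le \tfrac{K \kappa \eta^2}{2}\, [v_0]_i^2$, so $S_i - \E[S_i]$ is sub-exponential with parameter $O(K \kappa \eta^2)$ (since $[v_0]_i^2 - 1$ is sub-exponential for $[v_0]_i \sim \Nor(0,1)$). Applying Bernstein's inequality (Fact~\ref{fact:bernstein}) with deviation $t = 10 K \kappa \eta^2 \sqrt{d \log d}$ lands in the quadratic regime because $t^2/(d (K\kappa\eta^2)^2) = 100 \log d$ is the smaller of the two arguments of the min, yielding failure probability at most $d^{-5}$ once constants are calibrated. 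This completes the outline; only the uniform lower bound on $I(K)$ is non-routine, and the integration-by-parts trick reduces it to a rapidly convergent numerical series.
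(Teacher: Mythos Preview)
Your proof is correct and follows essentially the same approach as the paper. Both arguments use the integral representation from Lemma~\ref{lem:genf_prop} to express each $R_i^{(j)}$ as $-\eta^2[v_0]_i^2\int_0^1(\tfrac12-\tau)f_i''\,d\tau$, compute the expectation via Gaussian Fourier identities, and establish concentration via sub-exponentiality and Bernstein. The only organizational difference is that the paper computes the moments $\E[\cos(\sqrt 2([x_0]_i+\eta j[v_0]_i)/\eta)]$ and $\E[[v_0]_i\sin(\cdot)]$ term-by-term and then sums the resulting series $\sum_{j=0}^{K-1}\bigl(e^{-j^2}-e^{-(j+1)^2}-je^{-j^2}-(j+1)e^{-(j+1)^2}\bigr)$ directly, whereas you collapse the $j$-sum into a single integral $I(K)$ and evaluate it period-by-period via integration by parts using $(1-2\sigma^2)e^{-\sigma^2}=\tfrac{d}{d\sigma}(\sigma e^{-\sigma^2})$; your closed form $\tfrac{1-k}{2}e^{-k^2}-\tfrac{k+2}{2}e^{-(k+1)^2}$ is exactly half the paper's $j$-th summand, so the two computations agree identically. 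One minor numerical slip: your tail bound $e^{-1}+2e^{-4}+\tfrac12 e^{-4}\approx 0.414$ slightly exceeds $0.41$, but this still gives $I(\infty)\geq 0.086>0.06$, so the conclusion is unaffected.
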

\begin{proof}
	In this proof, all expectations $\E$ are taken over ${v_0 \sim \mathcal{N}(0,\id)}$, so we omit them. For $i = 1$, 
	\begin{equation*}
	\begin{aligned}
	\E\Brack{\sum_{j=0}^{K-1}R_i^{(j)}} =\E \Brack{- \half ([x_0]_1 + \eta K [v_0]_1)^2 +\half [x_0]_1^2 + \half \sum_{j=0}^{K-1}\eta [v_0]_1 (2[x_0]_1 + \eta (2j+1) [v_0]_1)} \\
	=\E \Brack{-\half [x_0]_1^2 - \half \eta^2K^2[v_0]_1^2 - \eta K[x_0]_1[v_0]_1 +\half [x_0]_1^2 + \half \eta^2K^2[v_0]_1^2 + \eta K[x_0]_1[v_0]_1} = 0.
	\end{aligned}
	\end{equation*}	
	We bound each coordinate $2\leq i \leq d$ separately. 
	\begin{equation*}
	\begin{aligned}
	&\E\Brack{\sum_{j=0}^{K-1}R_i^{(j)}} \\  =& 	\E\Brack{\sum_{j=0}^{K-1} - f_i([\tx_{j+1}]_i) + f_i([\tx_j]_i) + \half {\eta [v_0]_i} \cdot \Par {\nabla f_i([\tx_{j+1}]_i)+\nabla f_i([\tx_{j}]_i)}}\\
	=& -\frac\kappa 3 \E\Brack{  \Par{[x_0]_i + \eta K [v_0]_i}^2-[x_0]_i^2  }+ \frac 1 3 {\eta \kappa} \E\Brack{[v_0]_i  \cdot \Par{2[x_0]_i +\eta \sum_{j=0}^{K-1} (2j+1)[v_0]_i  }} \\
	+& \frac {\kappa\eta^2}	6\E \Brack{\sum_{j=0}^{K-1} \cos {\frac{\sqrt{2}\Par{[x_0]_i + \eta(j+1) [v_0]_i}}{\eta}} - \cos \frac{\sqrt{2}\Par{[x_0]_i+ \eta j  [v_0]_i}}{\eta}} \\
	+&  \frac {\sqrt{2}\eta^2 \kappa} {12}  \E \Brack{ [v_0]_i  \sum_{j=0}^{K-1}\Par{ \sin \frac {\sqrt{2}\Par{[x_0]_i + \eta j  [v_0]_i}}{\eta} + \sin \frac {\sqrt{2}\Par{[x_0]_i+ \eta (j+1)  [v_0]_i}}{\eta}}  }\\
	= & - \frac{\kappa \eta^2} {6} \sum_{j=0}^{K-1} {\exp(-j^2)-\exp(-(j+1)^2)-j\exp(-j^2)-(j+1)\exp(-(j+1)^2)} \cos\frac {\sqrt{2}[x_0]_i} \eta
	\end{aligned}
	\end{equation*}
    The last line used the computation
    \begin{equation*}
    	\begin{aligned}
    		\E\Brack{[v_0]_i  \sin \frac {\sqrt{2}\Par{[x_0]_i + \eta j  [v_0]_i}}{\eta}} = \sqrt{2}j{\exp (-j^2)} \cos\frac {\sqrt{2}[x_0]_i} \eta, \\ 
    		\E\Brack{\cos {\frac{\sqrt{2}\Par{[x_0]_i + \eta j [v_0]_i}}{\eta}} }  = {\exp (-j^2)} \cos\frac {\sqrt{2}[x_0]_i} \eta. 
    	\end{aligned}
    \end{equation*}
	Next, we bound $\sum_{j=0}^{K-1} \Par{\exp(-j^2)-\exp(-(j+1)^2)-j\exp(-j^2)-(j+1)\exp(-(j+1)^2)} $. For $j = 0$,
	$ 1- \frac 2 {\exp(1)} \geq 0.264.$ For  $j = 1$, the negative terms have
	$- 3\exp(-4) \geq -0.06$, and the positive terms can only help this inequality.
	For the remaining terms,
	\begin{equation*}
	\begin{gathered}
	\sum_{j=2}^{K-1} \Par{\exp(-j^2)-\exp(-(j+1)^2)-j\exp(-j^2)-(j+1)\exp(-(j+1)^2)}\\
	\geq 	\sum_{j=2}^{K-1} \Par{-j\exp(-j^2)-(j+1)\exp(-(j+1)^2)}  \\
	\geq  - 2	\sum_{j=2}^{K} \Par{j\exp(-j^2)}  \geq - 2\frac 2{\exp(4)} \frac{1}{1- 2 \exp(-5)} \geq - 0.075.
	\end{gathered}
	\end{equation*}	
	The last inequality used the ratio between two consecutive terms is bounded by $\frac{j+1}{j} \exp(j^2 - (j+1)^2)\leq 2 \exp(-5)$.  Summing over $d$ coordinates proves \eqref{eq:hmc_r_exp}.
	
	Next, we prove the concentration property of $\sum_{j=0}^{K-1}R^{(j)}$. Let $\tx_{j,s} = 
	\tx_j + s \eta v_0$, for $s\in [0,1]$ and $j = 0,...,K-1$. By Lemma~\ref{lem:genf_prop}, we have 
	\begin{equation*}
	\sum_{j=0}^{K-1}R^{(j)} = \sum_{j=0}^{K-1} -\eta^2 \int_0^1\Par{\half - s} v_0^\top \nabla^2 f(\tx_{j,s}) v_0 ds.
	\end{equation*}
	For coordinate $1\leq i \leq d$, $\left|\eta^2\int_0^1\Par{\half - s} f''_i([x_{j,s}]_i) ds\right| \le \frac{\eta^2\kappa}{2}$ by smoothness. Then, the random variables $\sum_{j=0}^{K-1}R_i^{(j)} - \E \Brack{\sum_{j=0}^{K-1}R_i^{(j)} }$ for $1\leq i \leq d$ are sub-exponential with parameter $\frac{\eta^2\kappa K}{2} $(for coordinates where the coefficient is negative, note the negation of a sub-exponential random variable is still sub-exponential). Hence, by Fact~\ref{fact:bernstein},
	\begin{equation*}
	\Pr \Brack{	\sum_{i\in[d]} \Par{\sum_{k=0}^{K-1}R_i^{(j)} - \E \Brack{\sum_{k=0}^{K-1}R_i^{(j)} }} \geq 10\eta^2K\kappa \sqrt{d\log d }}\leq \frac{1}{d^5}. 
	\end{equation*}
\end{proof}
Now, we build a bad set $\oha$ with lower bounded measure that starting from a point $x_0\in \oha$, such that with high probability, $- \E \Brack{\sum_{j=0}^{K-1}R^{(j)} }$ is very negative. Let $h = \frac 1 2 \eta^2$ so that we may use the results from Section~\ref{sec:general-mala}. We use the bad set $\oha$ defined in $\eqref{eq:hardsetdef}$.
\begin{equation*}
\begin{aligned}
\oha = \Bigg\{x \;\Big\vert \; |x_1| \leq 2,\forall 2\leq i \leq d, \exists k_i\in\mathbb{Z}, |k_i| \leq \floor{\frac{5}{\pi\sqrt{ h\kappa}}}, \textup{ such that } \\
-\frac 9 {20} \pi  \sqrt h+ 2\pi k_i\sqrt h \leq x_i \leq  \frac 9 {20} \pi  \sqrt h+ 2\pi k_i \sqrt h\Bigg\}.
\end{aligned}
\end{equation*}
We restate Lemma~\ref{lem:oha_propbd} here, which lower bounds $\pi^*(\oha)$ and bounds $\norm{\nabla f(x)}_2$ for $x\in \oha$.
\ohapropbd*
We can further show the following, which is used to bound the remaining terms in Lemma~\ref{lem:hmcsimplify2}.
\begin{lemma}
	\label{lem:hmcgradbd}
	Let $x_0 \in \oha$, $\eta K \leq \frac{1}{100\sqrt{\kappa}\log d}$ and $d \geq 8$. Let
	let $x_j$ for $1\leq j \leq K-1$ be given by the iterates in Fact~\ref{fact:hmciterates} and $\tx_K = x_0 + \eta K v_0$.Then, with probability at least $1-\frac{1}{d^5}$ over random $v_0\sim \mathcal{N}(0, \id)$, $\norm{v_0}_2 \leq 4\sqrt{d}\log d$ and
	for all $0 \leq j \leq K$, $\norm{\nabla f(x_j)}_2 \leq 11 \sqrt{\kappa d}$ and $\norm{\nabla f(\tx_K)}_2 \leq 11\sqrt{\kappa d}$. 
\end{lemma}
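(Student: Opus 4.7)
\textbf{Proof plan for Lemma~\ref{lem:hmcgradbd}.} The plan is to first isolate the only randomness in the statement (the velocity $v_0$), then propagate norm bounds from $x_0$ to $\tx_K$ via the triangle inequality and to each $x_j$ via induction along the leapfrog recursion in Fact~\ref{fact:hmciterates}, using smoothness of $\fha$ at each step.

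First, I would apply Fact~\ref{fact:chisquared} to $\norm{v_0}_2^2$, which is $\chi^2$-distributed with $d$ degrees of freedom, to conclude that $\norm{v_0}_2 \le 4\sqrt{d}\log d$ with probability at least $1 - d^{-5}$ (using $d \ge 8$; in fact a much weaker tail bound suffices). Condition on this event for the rest of the argument, since the remaining claims are deterministic given $v_0$. Next, by Lemma~\ref{lem:oha_propbd}, $x_0 \in \oha$ gives $\norm{\nabla \fha(x_0)}_2 \le 10\sqrt{\kappa d}$, which anchors the induction.

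For $\tx_K = x_0 + \eta K v_0$, smoothness of $\fha$ yields
\[
\norm{\nabla \fha(\tx_K)}_2 \le \norm{\nabla \fha(x_0)}_2 + \kappa \norm{\tx_K - x_0}_2 \le 10\sqrt{\kappa d} + \kappa \eta K \norm{v_0}_2 \le 10\sqrt{\kappa d} + \frac{4\sqrt{\kappa d}}{100},
\]
where I substituted the assumed bound $\eta K \le \frac{1}{100\sqrt{\kappa}\log d}$ and the high-probability bound on $\norm{v_0}_2$, giving the desired $\le 11\sqrt{\kappa d}$ comfortably. For the $x_k$'s I would induct on $k$: assuming $\norm{\nabla \fha(x_j)}_2 \le 11\sqrt{\kappa d}$ for all $j < k$, Fact~\ref{fact:hmciterates} gives
\[
\norm{x_k - \tx_k}_2 \le \frac{\eta^2 k}{2}\norm{\nabla \fha(x_0)}_2 + \eta^2 \sum_{j \in [k-1]}(k-j)\norm{\nabla \fha(x_j)}_2 \le 11\eta^2 K^2 \sqrt{\kappa d},
\]
and then smoothness (together with the bound on $\nabla \fha(\tx_k)$ obtained by the same triangle-inequality argument used for $\tx_K$) yields
\[
\norm{\nabla \fha(x_k)}_2 \le \norm{\nabla \fha(x_0)}_2 + \kappa \eta K \norm{v_0}_2 + 11\kappa \eta^2 K^2 \sqrt{\kappa d} \le 10\sqrt{\kappa d} + \tfrac{4}{100}\sqrt{\kappa d} + \tfrac{11}{10^4 \log^2 d}\sqrt{\kappa d},
\]
which is at most $11\sqrt{\kappa d}$, closing the induction.

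The main ``obstacle'' is really just bookkeeping: there is no conceptual difficulty, but one must choose the inductive hypothesis slightly loose (i.e.\ $11\sqrt{\kappa d}$ as a buffer above the base bound $10\sqrt{\kappa d}$ given by Lemma~\ref{lem:oha_propbd}) so that the accumulated drift term $\eta^2 K^2 \kappa \cdot (11\sqrt{\kappa d})$ fits inside the buffer. The assumption $\eta K \le \frac{1}{100\sqrt{\kappa}\log d}$ makes both the linear-in-$v_0$ term and the quadratic-in-$\eta K$ drift term small enough for this to work with room to spare, so the induction closes and the lemma follows.
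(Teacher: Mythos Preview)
Your proposal is correct and follows essentially the same approach as the paper: bound $\norm{v_0}_2$ with high probability, anchor at $\norm{\nabla \fha(x_0)}_2 \le 10\sqrt{\kappa d}$ via Lemma~\ref{lem:oha_propbd}, and close an induction on $\norm{\nabla \fha(x_j)}_2 \le 11\sqrt{\kappa d}$ using Fact~\ref{fact:hmciterates} and smoothness. The only cosmetic differences are that the paper bounds $\norm{v_0}_2$ via a coordinate-wise Gaussian union bound rather than Fact~\ref{fact:chisquared}, and in the inductive step it bounds $\norm{x_k - x_0}_2$ directly rather than detouring through $\tx_k$ as you do; neither changes the argument.
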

\begin{proof}
	We first derive a bound on $v_0 \sim \mathcal{N}(0, \id)$. By a standard Gaussian tail bound, for $d\geq 8$, with probability at least $1-\frac {1}{d^5}$, $|[v_0]_i|\leq 4\log d$ for all $1\leq i \leq d$. Then, $\norm{v_0}_2 \leq \sqrt{16d (\log d)^2} = 4\sqrt{d}\log d$.
	Now, we prove the bound on $\norm{x_j - x_0}_2$ and $\norm{\nabla f(x_j)}_2$ using induction. First, $\norm{\nabla f(x_0)} \leq 11 \sqrt {d\kappa} $ holds by Lemma~\ref{lem:oha_propbd}. Assume for induction $\norm{\nabla f(x_k)}_2 \leq 11 \sqrt {d\kappa} $ for $1\leq k < j$. Then,  
	\begin{equation*}
	\begin{aligned}
	\norm{x_j -x_0}_2 \leq \norm{\eta j v_0 - \frac{\eta^2j}{2}\nabla f(x_0) - \eta^2\sum_{k = 1}^{j-1}(j - k)\nabla f(x_k)}_2\\
	\leq 4\eta j\sqrt d \log d + \eta ^2 j^2 \cdot 11\sqrt{\kappa d}\leq \sqrt {\frac d  \kappa}.
	\end{aligned}
	\end{equation*}
	The last inequality used the assumption $\eta K \leq \frac {1}{100\sqrt \kappa \log d}$. Since $f$ is $\kappa$-smooth, we have 
	\begin{equation*}
	\begin{aligned}
	\norm{\nabla f(x_j)}_2 \leq \norm{\nabla f(x_0)}_2 + \kappa\norm{x_j - x_0}_2 \leq 10\sqrt{\kappa d} + \kappa \sqrt{\frac d \kappa} \leq  11\sqrt{\kappa d}.
	\end{aligned}
	\end{equation*}
	This completes the induction step.
	Finally, we have 
	\begin{equation*}
	\norm{\nabla f(\tx_K)}_2 \leq \norm{\nabla f(x_0)}_2 + \kappa \norm{\eta K v_0}_2 \leq 10\sqrt{\kappa d}+ 4\eta K \kappa \sqrt d \log d \leq 11 \sqrt{\kappa d},
	\end{equation*}
	where we used $\eta K \leq \frac {1}{100\sqrt \kappa \log d}$.
\end{proof}

\begin{lemma}
	\label{lem:hmcacchardbd}
	Let $\eta$ and $K$ satisfy $ K \leq  \frac{\sqrt d }{10000\sqrt{\log d}}$, and $\eta K^3 \leq \frac{ 1}{100000 \sqrt{\kappa }\log d}$. For any $x_0\in \oha$, let $(x_K, v_K)$ be given by the iterates in Fact~\ref{fact:hmciterates} and $v_0 \sim \mathcal{N}(0,\id)$. With probability at least $1-\frac 2 {d^5}$, 
	\begin{equation*}
	-\ham(x_K, v_K) + \ham(x_0, v_0) \leq - \Omega \Par{\eta^2 \kappa d}.
	\end{equation*}
\end{lemma}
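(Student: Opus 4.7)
The plan is to apply Lemma~\ref{lem:hmcsimplify2}, which decomposes $-\ham(x_K,v_K)+\ham(x_0,v_0)$ into a main term $\sum_{j=0}^{K-1}R^{(j)}$ evaluated along the straight-line velocity trajectory $\tx_j = x_0+\eta j v_0$, plus three gradient-Lipschitz error terms that measure how far the true leapfrog iterates $\{x_j\}$ deviate from $\{\tx_j\}$. The proof then reduces to showing the main term is $-\Omega(\eta^2\kappa d)$ with high probability and every error term is of lower order.

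For the main term, first observe that the substitution $h=\eta^2/2$ identifies the cosine argument $c/\sqrt h$ in the definition \eqref{eq:fhadef} with the cosine argument $\sqrt 2\,c/\eta$ appearing in Lemma~\ref{lem:hmcRbd}. The assumption $x_0\in\oha$ (see \eqref{eq:hardsetdef}) therefore forces $\cos(\sqrt 2\,[x_0]_i/\eta) = \cos([x_0]_i/\sqrt h - 2\pi k_i) \ge \cos(9\pi/20)$ — a positive absolute constant — for every $2\le i\le d$. Plugging this into the expectation bound of Lemma~\ref{lem:hmcRbd} yields $\E[\sum_j R^{(j)}]\le -\Omega(\kappa\eta^2 d)$, and its concentration inequality shows the realized value deviates by at most $10\eta^2 K\kappa\sqrt{d\log d}$ with probability $\ge 1-1/d^5$. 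The hypothesis $K\le \sqrt d/(10000\sqrt{\log d})$ is exactly what makes this deviation $o(\kappa\eta^2 d)$, so that on this high-probability event the main term remains $-\Omega(\eta^2\kappa d)$.

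For the three error terms, I would condition on the $\ge 1-1/d^5$ event of Lemma~\ref{lem:hmcgradbd} (whose hypothesis $\eta K\le 1/(100\sqrt\kappa\log d)$ is implied by our strictly stronger hypothesis since $K\ge 1$): then $\norm{v_0}_2\le 4\sqrt d\log d$ and $\norm{\nabla f(x_j)}_2, \norm{\nabla f(\tx_K)}_2 \le 11\sqrt{\kappa d}$ for all relevant $j$. Using the explicit formula in Fact~\ref{fact:hmciterates} and these gradient bounds, one gets $\norm{x_j-\tx_j}_2 = O(\eta^2 K^2\sqrt{\kappa d})$, and by the triangle inequality $\norm{\tx_K-x_{j_2}}_2$ and $\norm{x_{j_1}-x_{j_2}}_2$ are $O(\eta K\sqrt d\log d + \eta^2 K^2\sqrt{\kappa d})$. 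Converting these distance bounds to gradient differences by $\kappa$-smoothness and plugging into Lemma~\ref{lem:hmcsimplify2} yields the common upper bound $O(\eta^3 K^3\kappa^{1.5} d\log d + \eta^4 K^4\kappa^2 d)$ for each error term. The first summand is $o(\eta^2\kappa d)$ iff $\eta K^3\sqrt\kappa\log d = o(1)$, which is exactly the hypothesis $\eta K^3\le 1/(100000\sqrt\kappa\log d)$; the second is dominated by the first since squaring that hypothesis gives $\eta^2 K^6\kappa = o(1)$ and hence $\eta^2 K^4\kappa = o(1)$ for $K\ge 1$.

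The main obstacle is a bookkeeping one: one must verify that the numerical constants in the two hypotheses on $K$ and $\eta K^3$ are simultaneously strong enough to beat the $\log d$ factors coming from both the deviation in Lemma~\ref{lem:hmcRbd} and the three smoothness-based error bounds, yet weak enough to leave room for the subsequent use of this lemma in a relaxation-time lower bound that scales with $K$. A union bound over the two failure events — each of measure $\le 1/d^5$ — then delivers the claimed total probability $\ge 1-2/d^5$.
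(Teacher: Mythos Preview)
Your proposal is correct and follows essentially the same route as the paper: decompose via Lemma~\ref{lem:hmcsimplify2}, control the main term $\sum_j R^{(j)}$ using Lemma~\ref{lem:hmcRbd} together with the cosine lower bound furnished by membership in $\oha$, and bound the three error terms using the gradient and velocity estimates of Lemma~\ref{lem:hmcgradbd} combined with $\kappa$-smoothness, arriving at the same $O(\eta^3 K^3 \kappa^{1.5} d\log d + \eta^4 K^4 \kappa^2 d)$ remainder. The paper's proof is identical in structure, differing only in that it writes out the explicit numerical constants rather than using asymptotic notation.
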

\begin{proof}
	We first remark that the bound on $\eta K^3$ implies we may apply Lemma~\ref{lem:oha_propbd} and Lemma~\ref{lem:hmcgradbd}. Next, for $x_0 \in \oha$, $\cos\frac {\sqrt 2 [x_0]_i}{\eta}$ is bounded away from $0$ for all $2\leq i \leq d$. By Lemma~\ref{lem:hmcRbd}, when $ K \leq  \frac{\sqrt d }{10000\sqrt{\log d}}$, with probability at least $1 - \frac 1 {d^5}$, $\sum_{j=0}^{K-1}R^{(j)} \leq -0.002\eta^2\kappa d$ (the expectation term dominates).
	By Lemma~\ref{lem:hmcgradbd}, with probability at least $1 - \frac 1 {d^5}$, the other terms in Lemma~\ref{lem:hmcsimplify2} have
	\begin{equation*}
	\begin{aligned}
	\eta K \norm{v_0}_2\max_{0 \leq j \leq K} \norm{\nabla f(x_j) - \nabla f(\tx_j)}_2 
	+ \half \eta^2 K^2 \max_{0\leq j_1, j_2 \leq K} \norm{\nabla f(\tx_K) - \nabla f(x_{j_2})}_2\norm{\nabla f(x_{j_1})}_2 \\+\half \eta^2K^2 \max_{0\leq j_1, j_2 ,j_3\leq K} \norm{\nabla f(x_{j_3})}_2\norm{\nabla f(x_{j_1}) - \nabla f(x_{j_2})}_2  \\
	\leq 4\eta K \sqrt d \log d\cdot \kappa \eta^2 \Par{K\norm{\nabla f(x_0)}_2 + \sum_{j\in[K-1]}(K-j)\norm{\nabla f(x_j)}_2}\\
	+\eta^2K^2 \cdot 11 \sqrt{\kappa d} \cdot \kappa \Par{\eta K \norm{v_0}_2  +\eta^2K\norm{\nabla f(x_0)}_2 +\eta^2 \sum_{j\in[K-1]}(K-j)\norm{\nabla f(x_j)}_2 } \\
	\leq 44 \eta^3K^3\kappa^{1.5} d\log d + 44\eta^3K^3\kappa^{1.5}d\log d + 121\eta^4K^4\kappa^2 d \leq 0.001\eta^2 \kappa d.
	\end{aligned}
	\end{equation*}
	The last inequality used the assumption $\eta \leq \frac{1} {100000K^3\sqrt \kappa \log d}$. Combining the above bounds with Lemma~\ref{lem:hmcsimplify2} yields the claim.
\end{proof}

\begin{proposition}\label{prop:hmckappad}
	For $\eta^2 K= O\Par{\frac{\sqrt {\log d}}{\kappa \sqrt d}}$ and $K =O\Par{ d^{0.099} }$, there is a target density on $\R^d$ whose negative log-density is $\kappa$ smooth, such that relaxation time of HMC is $\Omega\Par{\frac {\kappa d}{K^2}}$.
\end{proposition}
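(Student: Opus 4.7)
The plan is to mirror the dichotomy used in the proof of Theorem~\ref{thm:malalb}, now applied to $K$-step HMC with the cosine-based hard distribution $\fha$ from \eqref{eq:fhadef2}. On the one hand, I would upper bound the spectral gap of HMC on $\fha$ in terms of the step size by plugging a specific test function into the variational formula~\eqref{eq:gapdef}. On the other hand, I would upper bound the HMC acceptance probability on the bad set $\oha$ of \eqref{eq:hardsetdef} via Lemma~\ref{lem:hmcacchardbd}, and convert this into a spectral gap bound through Cheeger's inequality. The proposition then follows by taking the minimum of the two bounds over all step sizes in the given range.

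For the test-function bound, I would observe that $\fha$ is coordinate-separable and its first coordinate is $\tfrac12 x_1^2$, identical to $\fhq$. Hence the proof of Lemma~\ref{lem:hmcspectralgap} applies essentially unchanged with $g(x) = x_1$: we have $\Var_{\pis}[g] = \Theta(1)$ and $\dir(g,g) \le \ao^2 + \bo^2$, where $\ao,\bo$ are the HMC coefficients from \eqref{eq:alphabetahmcdef}. By Lemma~\ref{lem:alphabetahmcbad}, $\ao = O(\eta^2 K^2)$ and $\bo = O(\eta K)$, so the gap is $O(\eta^2 K^2 + \eta^4 K^4)$, and under our hypothesis $\eta^2 K = O(\sqrt{\log d}/(\kappa\sqrt d))$ the quadratic term dominates. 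If $\eta^2 = O(1/(\kappa d))$, this already yields gap $= O(K^2/(\kappa d))$ and we are done.

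In the complementary regime I would use the acceptance bound. First verify the hypotheses of Lemma~\ref{lem:hmcacchardbd}: $K \le d^{0.099} \ll \sqrt d/\sqrt{\log d}$ is immediate, and
\[
\eta K^{3} \;\le\; \sqrt{\eta^{2}K}\cdot K^{5/2} \;\le\; \frac{(\log d)^{1/4}\,K^{5/2}}{\kappa^{1/2}\,d^{1/4}},
\]
which is $o\!\left(1/(\sqrt\kappa\log d)\right)$ because $K^{5/2}\le d^{0.2475}\ll d^{1/4}/(\log d)^{5/4}$ (this is where the exponent $0.099$ is tight enough to give slack). Lemma~\ref{lem:hmcacchardbd} then shows that for any $x_0\in\oha$, the HMC acceptance probability is at most $\exp(-\Omega(\eta^{2}\kappa d)) + 2d^{-5}$. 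Combined with $\pis(\oha)\ge e^{-d}$ from Lemma~\ref{lem:oha_propbd}, plugging the indicator $g=\1_{\oha}-\pis(\oha)$ into \eqref{eq:gapdef} (equivalently Cheeger on whichever of $\oha,\oha^{c}$ is smaller) bounds the gap by this acceptance probability up to a constant factor.

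The main obstacle is matching the two cases: for $\exp(-\Omega(\eta^{2}\kappa d))$ to be smaller than $K^{2}/(\kappa d)$, we need $\eta^{2}\kappa d = \Omega(\log(\kappa d/K^{2}))$, so the natural dichotomy threshold is $\eta^{2}\sim \log(\kappa d)/(\kappa d)$ rather than $1/(\kappa d)$, which costs a $\log d$ factor in the final relaxation time. Thus the proposition's $\Omega(\kappa d/K^{2})$ should be read up to the same logarithmic slack present in the $\Omega(\kappa d/\log d)$ bound of Theorem~\ref{thm:malalb}. The $K = O(d^{0.099})$ constraint is precisely tuned so that the $\eta K^{3}$ condition of Lemma~\ref{lem:hmcacchardbd} remains compatible with the step-size hypothesis, while still leaving the quadratic $\eta^{2}K^{2}$ term dominant over $\eta^{4}K^{4}$ throughout.
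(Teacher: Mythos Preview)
Your proposal is correct and matches the paper's approach. The paper's proof is terse---it simply verifies that the given range of $\eta$ and $K$ satisfies the hypotheses of Lemma~\ref{lem:hmcacchardbd} and then says the remainder follows analogously to Theorem~\ref{thm:mainhmc}; your write-up unpacks exactly this, including the dichotomy (spectral gap via the first-coordinate test function when $\eta$ is small, Cheeger via the bad set $\oha$ when $\eta$ is larger), the verification of the $\eta K^{3}$ precondition, and the observation that the stated $\Omega(\kappa d/K^{2})$ bound hides the same logarithmic factor as Theorem~\ref{thm:malalb}. One small imprecision: when you invoke Lemma~\ref{lem:alphabetahmcbad} for the bounds $\ao = O(\eta^{2}K^{2})$, $\bo = O(\eta K)$, note that the lemma's hypothesis $\eta^{2}\le 1/(\kappa K^{2})$ need not hold in full generality here, but its proof for the first-coordinate coefficients only requires $\eta^{2}K^{2}\le 1$ (eigenvalue $1$, not $\kappa$), which does hold under your assumptions.
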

\begin{proof}
	It is straightforward to check that such a range of $\eta$ and $K$ satisfies the assumptions of Lemma~\ref{lem:hmcacchardbd}. 
	Applying Lemma~\ref{lem:hmcacchardbd} with the hard function $\fha$, the remainder of the proof follows analogously to that of Theorem~\ref{thm:mainhmc}.
\end{proof}

We give a brief discussion of the implications of Proposition~\ref{prop:hmckappad}. For $\eta^2 K = \omega(\frac{\sqrt{\log d}}{\kappa \sqrt d})$, the proof of Theorem~\ref{thm:mainhmc} rules out a polynomial relaxation time. In the remaining range, Proposition~\ref{prop:hmckappad} implies that for small $K =O\Par{ d^{0.099} }$, the most we can improve the relaxation time of MALA (Theorem~\ref{thm:malalb}) by taking multiple steps in HMC is by a $K^2$ factor. Since each iteration takes $K$ gradients, this is roughly an improvement of $K$ in the query complexity, and strengthens Theorem~\ref{thm:mainhmc} for small $K$.

\subsection{Mixing time lower bound for small $K$}\label{app:mixhmccosine}

In this section, we first use prior results to narrow down the range of $\eta$ we consider (assuming $K$ is small). We then generalize the ideas of Section~\ref{sec:mix}, our MALA mixing lower bound, to this setting. 

\paragraph{Mixing time lower bound for large $\eta$.} Suppose $K = O\Par{d^{0.099}}$ throughout this section. The arguments of Section~\ref{sec:hmc}, specifically Proposition~\ref{prop:hmcscalingnomix} and Lemma~\ref{lem:hugeeta}, imply mixing time lower bounds for all $\eta K = \Omega(\frac{1}{\sqrt{\kappa}})$ (using the ``boosting constants'' argument of Section~\ref{sssec:nogap} for sufficiently large $\kappa$ as necessary). For $\eta K = O(\frac{1}{\sqrt \kappa})$, the proof of Theorem~\ref{thm:mainhmc} further implies mixing time lower bounds for all $\eta^2 K = \omega(\frac{\sqrt{\log d}}{\kappa \sqrt d})$. Hence, we can assume $\eta K = O(\frac{1}{\sqrt{\kappa}})$ and $\eta^2 K = O(\frac{\sqrt{\log d}}{\kappa \sqrt d})$.

Next, under the further assumption that $K = O\Par{d^{0.099}}$, it is easy to check under the specified assumptions on $\eta$ and $K$, the preconditions of Lemma~\ref{lem:hmcacchardbd} are met. This implies that we can rule out $\eta^2 = \omega(\frac{\log d}{\kappa d})$ for polynomial-time mixing. Thus, in the following discussion we assume
\begin{equation}\label{eq:smalletak}K = O\Par{d^{0.099}},\; \eta^2 = O\Par{\frac{\log d}{\kappa d}}. \end{equation}

\paragraph{Mixing time lower bound for small $\eta$.}
Let $\pis = \Nor(0, \id)$ be the standard $d$-dimensional multivariate Gaussian. We will let $\pi_0$ be the marginal distribution of $\pis$ on the set
\[\Omega \defeq \Brace{x \mid \norm{x}_2^2 \le \half d}.\]
Recall from Lemma~\ref{lem:gaussiansmallball} that $\pi_0$ is a $\exp(d)$-warm start. Our main proof strategy will be to show that for small $\eta$ and $K$ as in \eqref{eq:smalletak}, after $T = O(\frac {\kappa d}{K^2\log^3 d})$ iterations, with constant probability both of the following events happen: no rejections occur throughout the Markov chain, and $\norm{x_{t,K}}_2^2 \le \frac{9}{10}d$ holds for all $t \in [T]$. Combining these two facts will demonstrate our total variation lower bound.

\begin{lemma}\label{lem:helpermix2}
	Let $\{x_{t,k},v_{t,k}\}_{0 \le t < T, 0\leq k\leq K}$ be the sub-iterates generated by the HMC Markov chain with step size $\eta^2 = O\Par{\frac{ {\log d}}{\kappa d}}$ and $\eta^2K^2 \leq 1$, for $T = O(\frac {\kappa d}{K^2\log^3 d})$ and $x_0 \sim \pi_0$; we denote the actual HMC iterates by $\{x_t\}_{0 \le t < T}$. With probability at least $\frac{99}{100}$, both of the following events occur:
	\begin{enumerate}
		\item Throughout the Markov chain, $\norm{x_t}_2 \le 0.9\sqrt{d}$.
		\item Throughout the Markov chain, the Metropolis filter never rejected.
	\end{enumerate}
\end{lemma}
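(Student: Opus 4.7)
The plan is to adapt the inductive argument of Lemma~\ref{lem:helpermix} to HMC, bounding the per-iteration failure probability for each of the two events by $O(1/T)$ and taking a union bound over $T$ steps. Assume inductively that through iteration $t$ no prior rejection has occurred and $\norm{x_s}_2 \le 0.9\sqrt d$ for all $s \le t$; conditional on this, each HMC step reduces to the deterministic quadratic update of Lemmas~\ref{lem:quadraticiterates} and~\ref{lem:chebyshev} applied with $\ma = \id$. Writing $\theta \defeq \arccos(1 - \eta^2/2)$ and defining $c \defeq \cos(K\theta)$, $\tilde b \defeq \sin(K\theta)/\sqrt{1-\eta^2/4}$, this update simplifies to $x_{t,K} = c\, x_{t,0} + \tilde b\, v_{t,0}$. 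Under $\eta^2 K^2 \le 1$ one has $\theta \le 2\eta$, so $|c| \le 1$, $|\tilde b| = O(K\eta)$, and the Pythagorean identity $\cos^2(K\theta) + \sin^2(K\theta) = 1$ gives $|c^2 - 1| \le \tilde b^2$.

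For event~1, unrolling the recursion (valid under the no-rejection inductive hypothesis) gives
\[x_{t,K} = c^{t+1} x_0 + \tilde b \sum_{s=0}^t c^{t-s} v_{s,0}.\]
The velocities $\{v_{s,0}\}_{s \le t}$ are independent across iterations, so the noise sum is Gaussian with covariance $\tilde b^2 \Par{\sum_{s=0}^t c^{2(t-s)}}\id \preceq \tilde b^2 T \id$. Fact~\ref{fact:chisquared} bounds its norm by $O(|\tilde b|\sqrt{Td}) = O(K\eta\sqrt{Td})$ with probability at least $1 - 1/(200T)$, and plugging in $\eta^2 = O(\log d/(\kappa d))$ and $T = O(\kappa d/(K^2 \log^3 d))$ shows this is at most $0.4\sqrt d$. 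Combined with $|c^{t+1}| \le 1$ and $\norm{x_0}_2 \le \sqrt{d/2}$, the triangle inequality yields $\norm{x_{t,K}}_2 \le 0.9\sqrt d$.

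For event~2, Corollary~\ref{corr:hmcsimplifyquadratic} gives $\ham(x_{t,0}, v_{t,0}) - \ham(x_{t,K}, v_{t,K}) = \frac{\eta^2}{8}\Par{\norm{x_{t,0}}_2^2 - \norm{x_{t,K}}_2^2}$, and expanding $x_{t,K} = c x_{t,0} + \tilde b v_{t,0}$ yields
\[\norm{x_{t,K}}_2^2 - \norm{x_{t,0}}_2^2 = (c^2-1)\norm{x_{t,0}}_2^2 + 2c\tilde b\,\inprod{x_{t,0}}{v_{t,0}} + \tilde b^2 \norm{v_{t,0}}_2^2.\]
Because $v_{t,0}$ is freshly sampled and independent of $x_{t,0}$, Facts~\ref{fact:mills} and~\ref{fact:chisquared} give $|\inprod{x_{t,0}}{v_{t,0}}| = O(\sqrt{\log d}\, \norm{x_{t,0}}_2)$ and $\norm{v_{t,0}}_2^2 \le 2d$ with probability at least $1 - 1/(200T)$; combined with $\norm{x_{t,0}}_2 \le 0.9\sqrt d$, $|c^2-1| \le \tilde b^2$, and $|\tilde b| = O(K\eta)$, the right-hand side is $O(K^2\eta^2 d + K\eta\sqrt{d\log d})$. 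Multiplying by $\eta^2/8$ and substituting the bounds on $\eta$ and $T$ shows the log-acceptance ratio has magnitude $o(1/T)$, so the step is accepted with probability at least $1 - 1/(200T)$. A union bound over both events and all $T$ iterations then gives the claim. The main obstacle is that $c = \cos(K\theta)$ can be negative, so the noise sum in event~1 is not a monotone accumulation; however, the bound $|c| \le 1$ is enough to control its variance as a Gaussian sum, and the identity $|c^2 - 1| \le \tilde b^2$ keeps the drift term $(c^2-1)\norm{x_{t,0}}_2^2$ comparable in scale to the noise term $\tilde b^2 \norm{v_{t,0}}_2^2$ in the Hamiltonian analysis.
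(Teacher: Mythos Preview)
Your argument is correct and in fact somewhat cleaner than the paper's. Both proofs follow the inductive template of Lemma~\ref{lem:helpermix}, but you exploit the Chebyshev characterization of Lemma~\ref{lem:chebyshev} more directly: writing $x_{t,K} = c\,x_{t,0} + \tilde b\,v_{t,0}$ with $c=\cos(K\theta)$ and $\tilde b=\sin(K\theta)/\sqrt{1-\eta^2/4}$ gives you the contraction $|c|\le 1$ and the exact relation $1-c^2 = (1-\eta^2/4)\tilde b^2 \le \tilde b^2$ for free. This lets you unroll the linear recursion for event~1 to $x_{t,K}=c^{t+1}x_0+\tilde b\sum_s c^{t-s}v_{s,0}$ and bound the accumulated noise as a single Gaussian with variance $\le \tilde b^2 T$, bypassing the paper's cruder triangle inequality over all sub-iterates $x_{s,k}$ (which, incidentally, the paper invokes via an inductive hypothesis on $\norm{x_{s,k}}_2$ that the lemma statement does not literally supply). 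For event~2 the two arguments are essentially equivalent: the paper expands via the series bounds of Lemma~\ref{lem:alphabetahmcbad} (with $\kappa=1$ here) to get $\alpha=\Theta(hK^2)$, $\beta=\Theta(\sqrt h K)$, while you reach the same scales $1-c^2\le\tilde b^2=O(\eta^2K^2)$ from the trigonometric identity. Your route trades the geometric-decay estimates of Lemma~\ref{lem:alphabetahmcbad} for the elementary identity $\sin(\theta/2)=\eta/2$, which is arguably more transparent in the identity-Hessian case.
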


\begin{proof}
	Let $h = \half \eta^2$. We inductively bound the failure probability of the above events in every iteration by $\frac{0.01}{T}$, which will yield the claim via a union bound. Take some iteration $t + 1$, and note that by triangle inequality, and assuming all prior iterations did not reject,
	\begin{equation*}
	\begin{aligned}
	\norm{x_{t + 1, K}}_2 \le \norm{x_{0,0}}_2 +  \eta K \norm{\sum_{s=0}^t v_{s,0}} +\eta^2K \sum_{s=0}^t \sum_{k=1}^K\norm{x_{s, k}}_2 \le \norm{x_{0,0}}_2 +0.9 \eta^2K^2T \sqrt{d} + \eta K\norm{G_t}_2\\ \le 0.8 \sqrt{d} +\eta K\norm{G_t}_2.
	\end{aligned}
    \end{equation*}

	Here, we applied the inductive hypothesis on all $\norm{x_{s, k}}_2$, the initial bound $\norm{x_{0,0}}_2 \le \sqrt{\half d}$, and that $\eta^2K^2 T = o(1)$ by assumption. We also defined $G_t = \sum_{s=0}^t v_{t,0}$, where $v_{t,0}$ is the random Gaussian used by HMC in iteration $k$; note that by independence, $G_t \sim \Nor(0, t + 1)$. By Fact~\ref{fact:chisquared}, with probability at least $\frac{1}{200T}$, $\norm{G_t}_2 \le 2\sqrt{Td}$, and hence $0.8 \sqrt{d} +\eta K\norm{G_t}_2 \le 0.9\sqrt{d}$, as desired.
	
	Next, we prove that with probability $\ge 1 - \frac{1}{200T}$, step $t$ does not reject. This concludes the proof by union bounding over both events in iteration $t$, and then union bounding over all iterations. By Corollary \ref{corr:hmcsimplifyquadratic} and the calculation in Lemma~\ref{lem:alphabetahmcbad}, when $\eta^2K^2 \leq 1$, the accept probability is
	\[\min\Par{1, \exp\Par{\frac{h}{4}\Par{\Par{2\alpha - \alpha^2} \norm{x_{t,0}}_2^2 - \beta^2 \norm{v_{t,0}}_2^2 - 2(1 - \alpha)\beta \inprod{x_{t,0}}{v_{t,0}}} }},\]
	for some $\alpha \in \Brack{0.8 hK^2, hK^2} $ and $\beta\in \Brack{0.8 \sqrt{2h} K, \sqrt{2h} K}$.
	We lower bound the argument of the exponential as follows. With probability at least $1 - d^{-5} \ge 1 - \frac{1}{400T}$, Facts~\ref{fact:mills} and~\ref{fact:chisquared} imply both of the events $\norm{v_{t, 0}}_2^2 \le 2d$ and $\inprod{x_{t,0}}{v_{t,0}} \le 10\sqrt{\log d}\norm{x_{t, 0}}_2$ occur. Conditional on these bounds, we compute (using $2\alpha \ge \alpha^2$ and the assumption $\norm{x_t}_2 \le 0.9\sqrt{d}$)
	\[{\Par{2\alpha - \alpha^2} \norm{x_{t,0}}_2^2 - \beta^2 \norm{g}_2^2 - 2(1 - \alpha)\beta \inprod{x_{t,0}}{g}}\ge -4 h K^2d - 40\sqrt h K\sqrt{d\log d} \ge -O(K^2\log d).\]
	Hence, the acceptance probability is at least
	\[\exp\Par{- O\Par{\eta^2 K^2 \log d}} \ge 1 - \frac{1}{400T},\]
	by our choice of $T$ with $T\eta^2K^2\log d = o(1)$, concluding the proof.
\end{proof}

\begin{proposition}\label{prop:hugeh2}
	The HMC Markov chain with step size $\eta^2 = O\Par{\frac{\log d}{\kappa d}}$ and $\eta^2K^2 \leq 1$ requires $\Omega(\frac{\kappa d}{K^2\log^3 d})$ iterations to reach total variation distance $\frac 1 e$ to $\pis$, starting from $\pi_0$.
\end{proposition}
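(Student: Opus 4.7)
The plan is to mirror the proof of Proposition~\ref{prop:hugeh}, using Lemma~\ref{lem:helpermix2} in place of Lemma~\ref{lem:helpermix}, since the lemma has already done the hard work of establishing (i) that the iterates stay in a small ball and (ii) that no Metropolis rejection occurs, throughout the entire run of length $T = o(\kappa d / (K^2 \log^3 d))$. The structure is a standard three-distribution triangle inequality argument.

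First, I would introduce two auxiliary distributions: let $\tpi$ denote the distribution of $x_{T}$ produced by the HMC dynamics when the Metropolis filter is skipped at every step (so every proposal is accepted), and let $\hat\pi$ denote the actual distribution of $x_T$ produced by genuine Metropolized HMC. Our target is to lower-bound $\norm{\hat\pi - \pis}_{\textup{TV}}$; by the triangle inequality it suffices to show
\begin{equation*}
\norm{\tpi - \pis}_{\textup{TV}} \ge \tfrac{2}{5}, \qquad \norm{\tpi - \hat\pi}_{\textup{TV}} \le 0.01.
\end{equation*}

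The second bound is the coupling step: run the unfiltered and filtered chains with the same initial point $x_0 \sim \pi_0$ and the same sequence of velocity samples $v_{t,0}$. These two chains agree exactly on any sample path for which the Metropolis filter never rejects, and the second conclusion of Lemma~\ref{lem:helpermix2} states that this event occurs with probability at least $0.99$. This immediately gives $\norm{\tpi - \hat\pi}_{\textup{TV}} \le 0.01$ by the coupling characterization of total variation.

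For the first bound, define the ``large norm'' set $\Omega_{\text{large}} = \{x : \norm{x}_2^2 \ge 0.81 d\}$. Under the stationary distribution $\pis = \Nor(0,\id)$, Fact~\ref{fact:chisquared} gives $\pis(\Omega_{\text{large}}) \ge 0.99$ for large enough $d$. On the other hand, the first conclusion of Lemma~\ref{lem:helpermix2} states that with probability at least $0.99$ over $\tpi$ the iterate satisfies $\norm{x_T}_2 \le 0.9\sqrt{d}$, i.e.\ $x_T \notin \Omega_{\text{large}}$, so $\tpi(\Omega_{\text{large}}) \le 0.01$. Therefore $\norm{\tpi - \pis}_{\textup{TV}} \ge 0.98 \ge \tfrac{2}{5}$ by the variational definition of total variation with the witness set $\Omega_{\text{large}}$.

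Combining these two bounds yields $\norm{\hat\pi - \pis}_{\textup{TV}} \ge 0.39 > 1/e$. Since $\pi_0$ is the marginal of $\pis$ on $\{\norm{x}_2^2 \le d/2\}$, which has measure at least $\exp(-d/12)$ by Lemma~\ref{lem:gaussiansmallball}, $\pi_0$ is an $\exp(d)$-warm start, and the claimed mixing time lower bound of $\Omega(\kappa d / (K^2 \log^3 d))$ follows. The main potential obstacle is simply sanity-checking that all the parameter constraints in Lemma~\ref{lem:helpermix2} ($\eta^2 = O(\log d/(\kappa d))$, $\eta^2 K^2 \le 1$, $T \eta^2 K^2 \log d = o(1)$) are exactly those assumed in the proposition, which they are by construction.
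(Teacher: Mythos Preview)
Your proposal is correct and follows exactly the paper's approach: the paper's proof of Proposition~\ref{prop:hugeh2} simply states that it is identical to that of Proposition~\ref{prop:hugeh} with Lemma~\ref{lem:helpermix2} substituted for Lemma~\ref{lem:helpermix}, and you have faithfully written out precisely that argument.
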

\begin{proof}
The proof is identical to Proposition~\ref{prop:hugeh}, where we use Lemma~\ref{lem:helpermix2} instead of Lemma~\ref{lem:helpermix}.
\end{proof}
 	\end{appendix}

\end{document}